\newtheorem{theorem}{Theorem}[section]
\newtheorem{definition}{Definition}[section]
\newtheorem*{remark}{Remark}
\begin{document}

\title[Coupled H\'{e}non Map, Part II]{Coupled H\'{e}non Map, Part II: Doubly and Singly Folded Horseshoes in Four Dimensions}

\author{Jizhou Li$^1$, Keisuke Fujioka$^2$, and Akira Shudo$^2$}

\address{$^1$RIKEN iTHEMS, Wako, Saitama 351-0198, Japan}
\address{$^2$Department of Physics, Tokyo Metropolitan University, Tokyo 192-0397, Japan}

\ead{jizhou.li@riken.jp \quad shudo@tmu.ac.jp}
\vspace{10pt}
\begin{indented}
\item[]March 2023
\end{indented}

\begin{abstract}
As a continuation of a previous paper (\textbf{arXiv:2303.05769 [nlin.CD]}), we introduce examples of H\'{e}non-type mappings that exhibit new horseshoe topologies in three and four dimensional spaces that are otherwise impossible in two dimensions.  
\end{abstract}

%
%
%
%
%

\section{Introduction}
In a previous paper \cite{Fujioka23} we have derived a sufficient condition for topological horseshoe and uniform hyperbolicity of a 4-dimensional symplectic map, which is introduced by coupling the two 2-dimensional H\'{e}non maps via linear terms. The coupled H\'{e}non map thus constructed can be viewed as a simple map modeling the horseshoe in higher dimensions. In this paper we further explore new possibilities of horseshoe topologies in three and four dimensions by investigating several new types of coupled H\'{e}non maps. The horseshoes introduced here can only exist beyond two or three dimensions and may serve as templates for the future investigation of horseshoes in multidimensional spaces. 

Two well-known ingredients that give rise to the classical phenomenon of chaos are ``stretching" and ``folding" of phase-space volumes. The stretching creates divergence between nearby initial conditions, and the folding leads to mixing and thus ergodicity of the system. A generic prototype incorporating both ingredients is the Smale horseshoe \cite{Smale65} which acts on a finite region in phase space, expands it along the unstable direction, contracts it along the stable direction, then folds and re-injects it into the original region.  In the due process, mixing is created by folding and re-injection, which leads to nonlinear dynamics that displays typical phenomena of chaos. 

The most well-known example that realizes the Smale horseshoe would be the H\'{e}non map, which is a 2D quadratic map with parameters \cite{Henon76}. There is an elementary proof showing that the horseshoe is realized in a certain parameter space \cite{Devaney79}, and a necessary and sufficient condition has been pursued using sophisticated techniques in the theory of complex dynamical systems \cite{bedford2004real}. The studies of the higher dimensional extension, a class of the coupled H\'{e}non maps, have been done in various directions 
(see for example \cite{mao1988standard,baier1990maximum,ding1990algebraic,todesco1994analysis,astakhov2001multistability,anastassiou2018recent,backer2020elliptic}), 
not only to provide examples of hyperchaos \cite{rossler1979equation} but also to seek the nature of dynamics absent in 2D. 

In physics and chemistry, there is a growing interest in higher-dimensional 
Hamiltonian dynamics. In particular,  the situation in which regular and chaotic orbits coexist 
in a single phase space is often realized not only in astrophysical or chemical reaction dynamics \cite{contopoulos2002order,laskar1994large,wiggins1994normally}
but also in Lagrangian descriptions of fluid dynamics \cite{wiggins2005dynamical}. 
These studies have paid special attention to invariant structures associated with 
regular motions, and related bifurcations in higher-dimensional spaces as well, 
but we may expect a variety of horseshoes with different topologies in chaotic domains.

To the authors' knowledge, 
although the original Smale horseshoe was proposed in the context of arbitrary dimensions, 
it only considered either singly-folded horseshoes or multiply-folded horseshoes with creases 
in the same direction \cite{Smale67}. Therefore, the original horseshoes, even in multidimensional settings, can always be visualized by collapsing the unstable and stable subspaces into one-dimensional unstable and stable directions, respectively, which leads us back to the picture of 2D horseshoes. In this sense, they did not make use of all possible choices of the directions of creases in multidimensional spaces. This leaves one to wonder what would happen if a generalization of the horseshoe folds a multidimensional hyper-cube twice, with creases in mutually independent directions. Although horseshoes in higher dimensions with creases in different directions 
have been illustrated qualitatively \cite{Wiggins88} or modeled by the composition of piecewise linear maps \cite{zhang2019chaotic}, an explicit and minimal form of such maps has not been provided.  

In this article, we propose several mappings which give rise to doubly-folded horseshoes that can only exist beyond two or three dimensions. More specifically, we will introduce a class of H\'{e}non maps that display five different kinds of horseshoe topology, labeled by \textit{Topology I, II, III-A, III-B} and \textit{IV}, respectively, which are:
\begin{enumerate}
\item[a)] Topology I: Singly-folded horseshoe in three dimensions (Section.~\ref{Background}).
\item[b)] Topology II: Doubly-folded horseshoe with independent creases in three dimensions (Section.~\ref{Type II}).
\item[c)] Topology III-A: Doubly-folded horseshoe with independent creases and independent stacking directions in four dimensions (Section.~\ref{Type III-A}).
\item[d)] Topology III-B: Singly-folded horseshoe in four dimensions (Section.~\ref{Type III-B}).
\item[e)] Topology IV: Doubly-folded horseshoe with independent creases and common stacking direction in four dimensions (Section.~\ref{Type IV}).
\end{enumerate}
We emphasize that these five types of topology are by no means comprehensive. Our main purpose here is to list up examples in a heuristic way. These examples would prepare us for the future construction of the entire library of unexplored horseshoe topologies that may give rise to new dynamical phenomena. 

This article is structured as follows. Section.~\ref{Background} reviews the background theory of Smale horseshoe in its original context. Section.~\ref{Type II} introduces a three-dimensional H\'{e}non-type mapping that possesses a doubly-folded horseshoe with creases in mutually independent directions. Section.~\ref{Type III} proposes a coupled H\'{e}non map in four dimensions which, depending on the ranges of parameters, displays either a doubly-folded horseshoe with creases in mutually independent directions and independent stacking directions, or a singly-folded horseshoe. Section.~\ref{Type IV} demonstrates another type of doubly-folded horseshoe in four dimensions which has the same stacking directions after the two foldings. Section.~\ref{Conclusion} concludes the paper and proposes new directions for future study.

\section{Background: Smale horseshoes in two and three dimensions}\label{Background}
In this section we briefly outline the background theory of Smale horseshoe with examples from the H\'{e}non map. The map gives rise to a singly-folded horseshoe whose cross-sections reduce to the well-studied two-dimensional horseshoe, therefore topologically conjugate to a full shift on two symbols. This topology, denoted by \textit{Topology I} hereafter, is the most well-known Smale horseshoe studied in many articles and textbooks. An illustration is given by Fig.~\ref{fig:3D_singly_folded_horseshoe}.  
\begin{figure}
        \centering
        \includegraphics[width=0.6\linewidth]{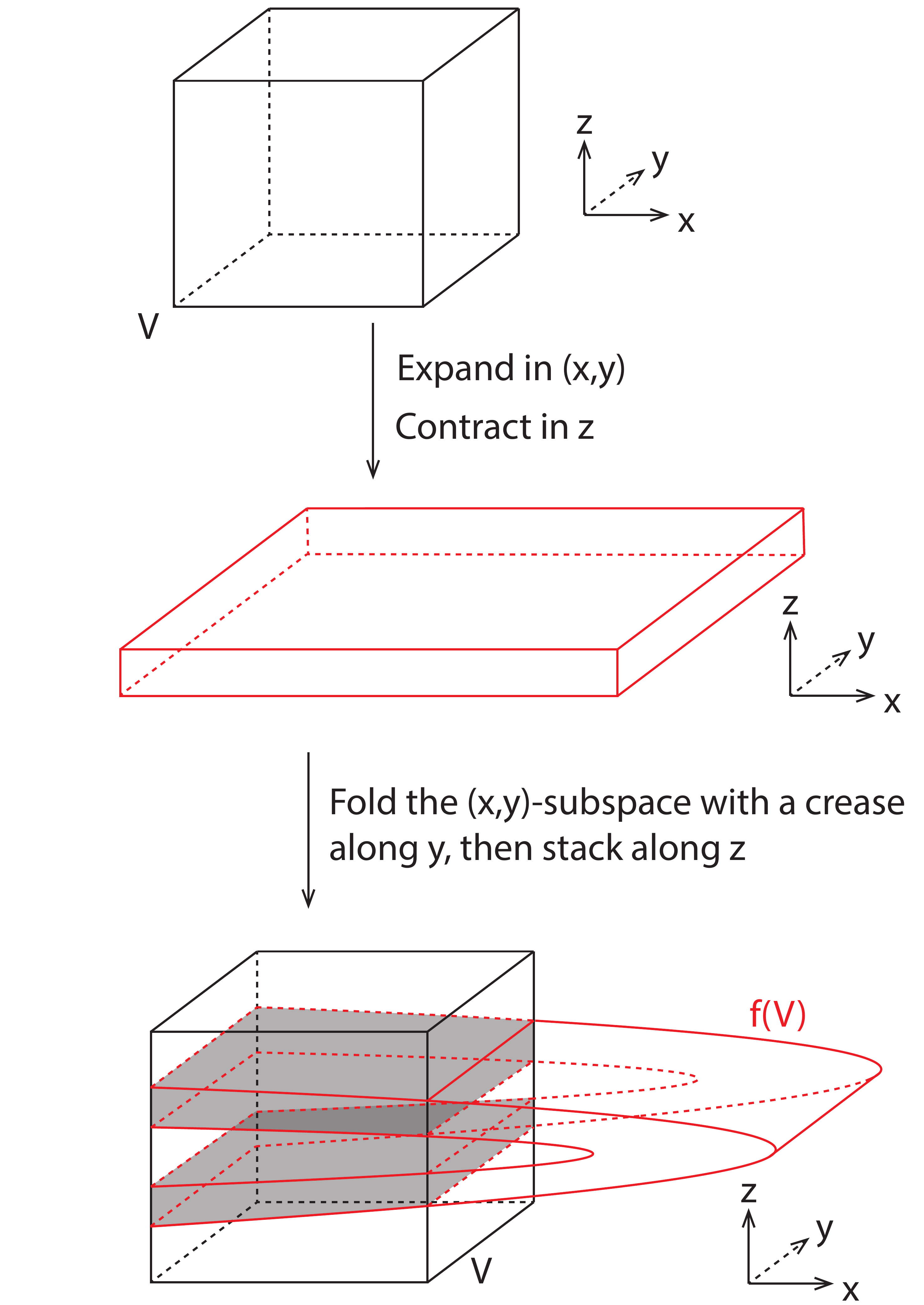} 
        \caption{(Schematic, color online) Topology I: singly-folded horseshoe in three dimensions. Given a cube $V$, the map expands $V$ in the unstable directions ($(x,y)$-plane), contracts it in the stable direction ($z$-axis), then folds it with a crease in the $y$-direction and stacks it along the $z$-direction. The intersection $V\cap f(V)$ is composed of a pair of three-dimensional ``horizontal" slabs, which, when viewed in every $(x,z)$-slice, reduces to a pair of two-dimensional horizontal strips.} \label{fig:3D_singly_folded_horseshoe}
\end{figure}
A concrete realization of Type I is the H\'{e}non map of the form 
\begin{equation}\label{eq:3D singly folded Henon}
\left( \begin{array}{ccc}
x_{n+1}\\
y_{n+1} \\
z_{n+1} \end{array} \right) = 
f_{\rm I} \left( \begin{array}{ccc}
x_{n}\\
y_{n} \\
z_{n} \end{array} \right) =
\left( \begin{array}{ccc}
a_0 - x^2_n - z_n\\
b y_n \\
x_n \end{array} \right)
\end{equation}
where $(x_n,y_n,z_n)^\mathsf{T}$ denotes the position of the $n$th iteration, and the parameters $a_0$, $b$ satisfy  
\begin{eqnarray}
& a_0  > 5+2\sqrt{5} \nonumber \\
& b > 1   \nonumber
\end{eqnarray}
where the bound on $a_0$ is obtained by Devaney and Nitecki in \cite{Devaney79} to realize the horseshoe in the two-dimensional H\'{e}non map, and the bound on $b$ guarantees uniform expansion in $y$. It is trivial to see that the dynamics in $y$ is a constant uniform expansion uncoupled from $(x,z)$. Therefore, on every $(x,z)$-slice we have the two-dimensional H\'{e}non map
\begin{equation}\label{eq:2D Henon}
\left( \begin{array}{ccc}
x_{n+1}\\
z_{n+1} \end{array} \right) = 
\left( \begin{array}{ccc}
a_0 - x^2_n - z_n\\
x_n
 \end{array} \right)\ ,
\end{equation}
which is the one originally proposed in \cite{Henon76} and studied in detail in \cite{Devaney79}. Generalizing the results established by \cite{Devaney79} to the three-dimensional map $f_{\rm I}$, it is straightforward to see that given $a_0>5+2\sqrt{5}$, a cube $V$ can be identified as
\begin{equation}\label{eq:Devaney V definition}
V = \left\lbrace (x,y,z) \middle|\  |x|,|y|,|z|\leq r \right\rbrace\ , ~~~~{\rm where\ }r=1+\sqrt{1+a_0}
\end{equation}
such that $f_{\rm I}(V)$ gives rise to a horseshoe. As depicted by Fig.~\ref{fig:3D_singly_folded_horseshoe}, the map expands $V$ in the unstable directions ($(x,y)$-plane), contracts it in the stable direction ($z$-axis), then folds it with a crease in the $y$-direction and stacks it along the $z$-direction. The intersection $V\cap f_{\rm I}(V)$ consists of a pair of three-dimensional ``horizontal" slabs, which, when viewed in every $(x,z)$-slice, reduces to a pair of two-dimensional horizontal strips. Furthermore, the resulting non-wandering set $\Lambda = \bigcap_{n=-\infty}^{\infty}f^n_{\rm I}(V)$ is uniformly hyperbolic and conjugate to a full shift on two symbols. 

\begin{figure}
        \centering
        \begin{subfigure}[b]{0.45\textwidth}   
            \centering 
            \includegraphics[width=\textwidth]{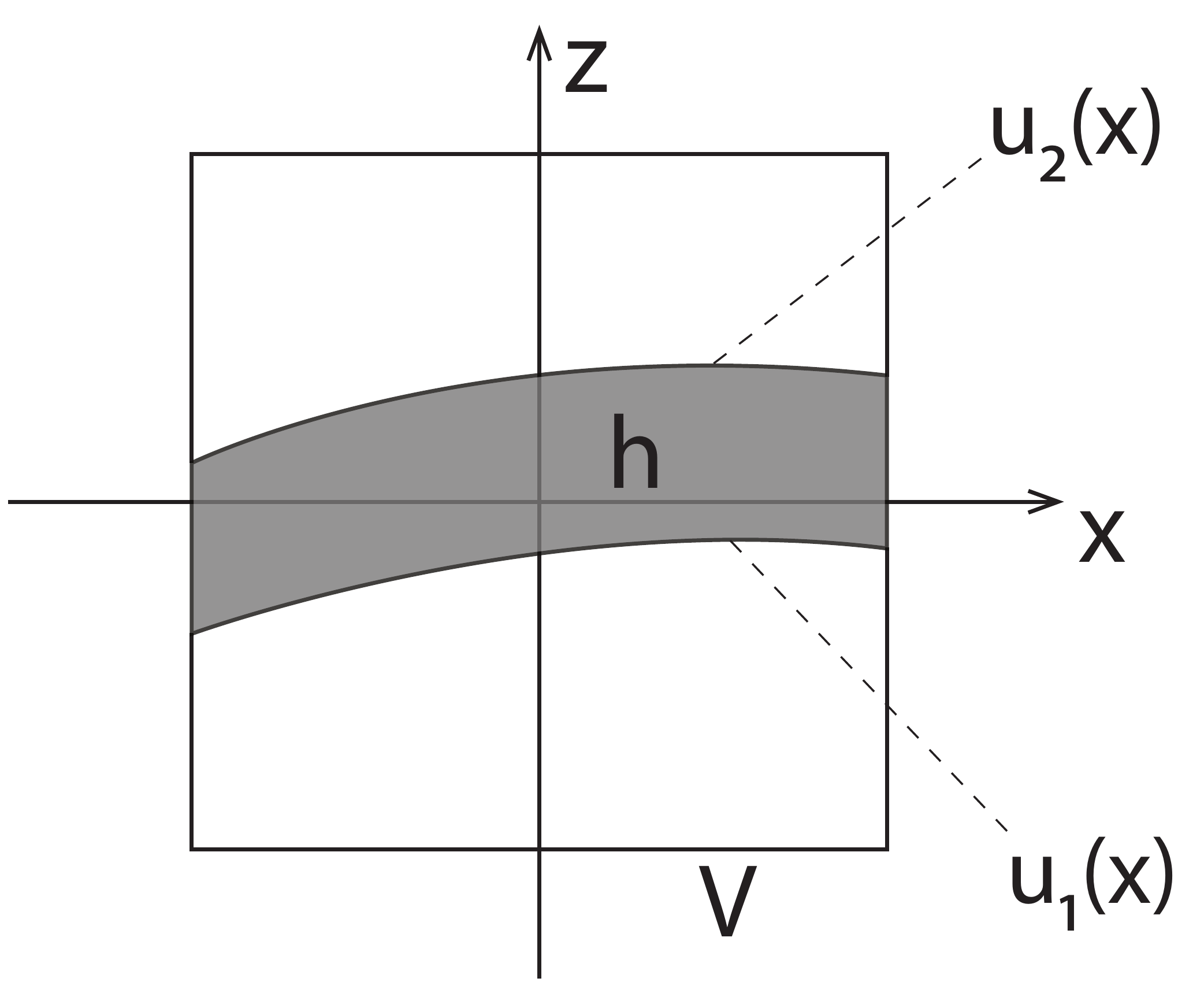}
            \caption[]%
            {}    
            \label{fig:Horizontal_strip_normal}
        \end{subfigure}
        \hfill
        \begin{subfigure}[b]{0.45\textwidth}   
            \centering 
            \includegraphics[width=\textwidth]{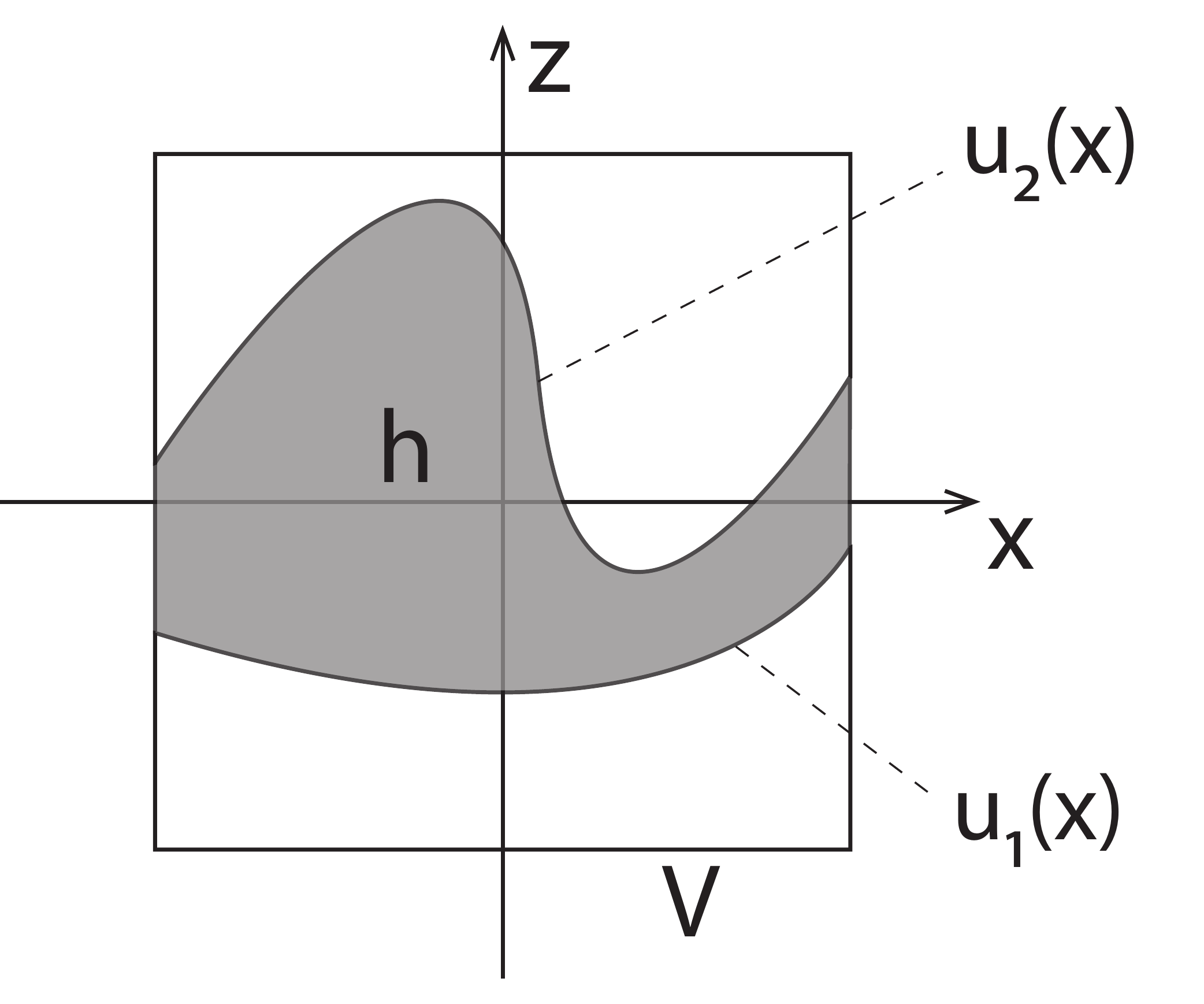}
            \caption[]%
            {}    
            \label{fig:Horizontal_strip_weird}
        \end{subfigure}
        \caption[]
        {Two examples of horizontal strip. The top and bottom boundaries of $h$ are $u_2(x)$ and $u_1(x)$, respectively, from Definition~\ref{Horizontal Strips}. The vertical boundary of $h$ is a subset of the vertical boundary of $V$. It is apparent that the strip in panel (a) is more ``well-behaved" compared to that in panel (b). In principle, certain Lipschitz conditions can be imposed to exclude the ill-behaved strips such as the one in panel (b). However, to keep the derivations simpler, we do not impose such conditions in this research. } 
        \label{fig:Horizontal_strip}
    \end{figure}

To avoid potential ambiguities, we now give precise definitions of two-dimensional horizontal strips and three-dimensional horizontal slabs.
\begin{definition}[Horizontal Strip]
\label{Horizontal Strips}
Let $V$ be a square in the $(x,z)$-plane centered at the origin:
\begin{equation}
V = \left\lbrace (x,z) \middle|\  |x|,|z|\leq r \right\rbrace\ .
\end{equation}
A set $h$ is a \textbf{horizontal strip of $V$} if there exist two curves $z=u_1(x)$ and $z=u_2(x)$ for which
\begin{equation}
-r \leq u_1(x) < u_2(x) \leq r
\end{equation}
such that 
\begin{equation}\label{eq:def horizontal strip}
h = \lbrace (x,z) | -r \leq x \leq r,\ u_1(x) \leq z \leq u_2(x) \rbrace\ .
\end{equation}
\end{definition}
\begin{remark}
This definition is quite general as it does not impose any Lipschitz conditions on the boundary curves $u_1(x)$ and $u_2(x)$. This is in contrast to the definitions of ``horizontal strips" in \cite{Moser01} and \cite{Wiggins88} (Sec. 2.3 therein), where Lipschitz conditions were imposed to guarantee that the regularity of horizontal and vertical strips so that the non-wandering set is fully conjugate to a subshift on finite symbols. Since the purpose of this article is to establish the existence of topological horseshoes, which may include the cases where the non-wandering set is only semi-conjugate to a subshift on finite symbols, to make the proceeding derivations simpler, we do not impose such Lipschitz conditions in the current study. Two essential properties of horizontal strips are that they intersect $V$ fully in the horizontal direction, i.e., no marginal intersections are allowed, and they are subsets of $V$ that partition $V$ into disjoint regions. See Fig.~\ref{fig:Horizontal_strip} for some examples. 
\end{remark}
Following \cite{Wiggins88}, the definition of horizontal strips in two dimensions can be generalized to horizontal slabs in three dimensions. 
\begin{definition}[Horizontal Slab]
\label{Horizontal Slabs}
Let $V$ be a cube in the $(x,y,z)$-space centered at the origin:
\begin{equation}
V = \left\lbrace (x,y,z) \middle|\  |x|,|y|,|z|\leq r \right\rbrace\ .
\end{equation}
A set $h$ is a \textbf{horizontal slab of $V$} if there exists two surfaces $z=u_1(x,y)$ and $z=u_2(x,y)$ for which
\begin{equation}
-r \leq u_1(x,y) < u_2(x,y) \leq r
\end{equation}
such that 
\begin{equation}\label{eq:def horizontal slabs}
h = \lbrace (x,y,z) | -r \leq x \leq r,\  -r \leq y \leq r,\ \ u_1(x,y) \leq z \leq u_2(x,y) \rbrace\ .
\end{equation}
\end{definition}
An example of a horizontal slab is given by Fig.~\ref{fig:Horizontal_slab}. 
\begin{figure}
        \centering
        \includegraphics[width=0.5\linewidth]{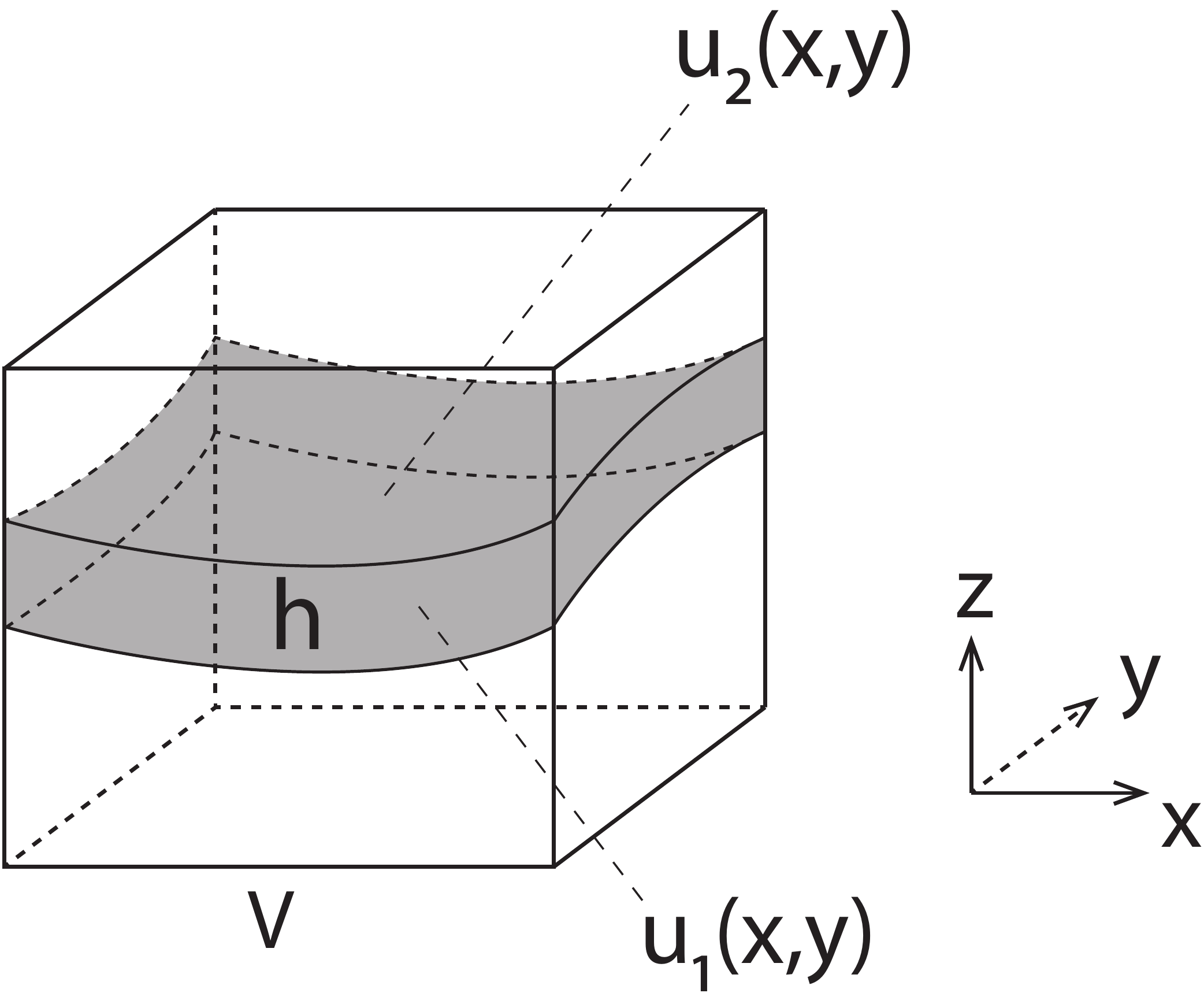} 
        \caption{$h$ is a horizontal slab of the three-dimensional cube $V$. The top and bottom boundaries of $h$ are $u_2(x,y)$ and $u_1(x,y)$, respectively, from Definition~\ref{Horizontal Slabs}. The vertical boundary of $h$ is a subset of the vertical boundary of $V$. } \label{fig:Horizontal_slab}
\end{figure}

In the next sections we will introduce horseshoes that are folded for multiple times and therefore exhibit more complicated topologies. These horseshoes are constructed by exploring various possibilities of folding and stacking in three and four dimensions.

\section{Topology II: doubly-folded horseshoe in three dimensions}\label{Type II}
The first generalization that we introduce here, namely \textit{Topology II}, is based on Figure 3.2.47 of \cite{Wiggins88}. Qualitatively, it can be considered as a Topology-I horseshoe further folded with a crease along $x$, as illustrated by Fig.~\ref{fig:3D_doubly_folded_horseshoe}. 
\begin{figure}
        \centering
        \includegraphics[width=0.7\linewidth]{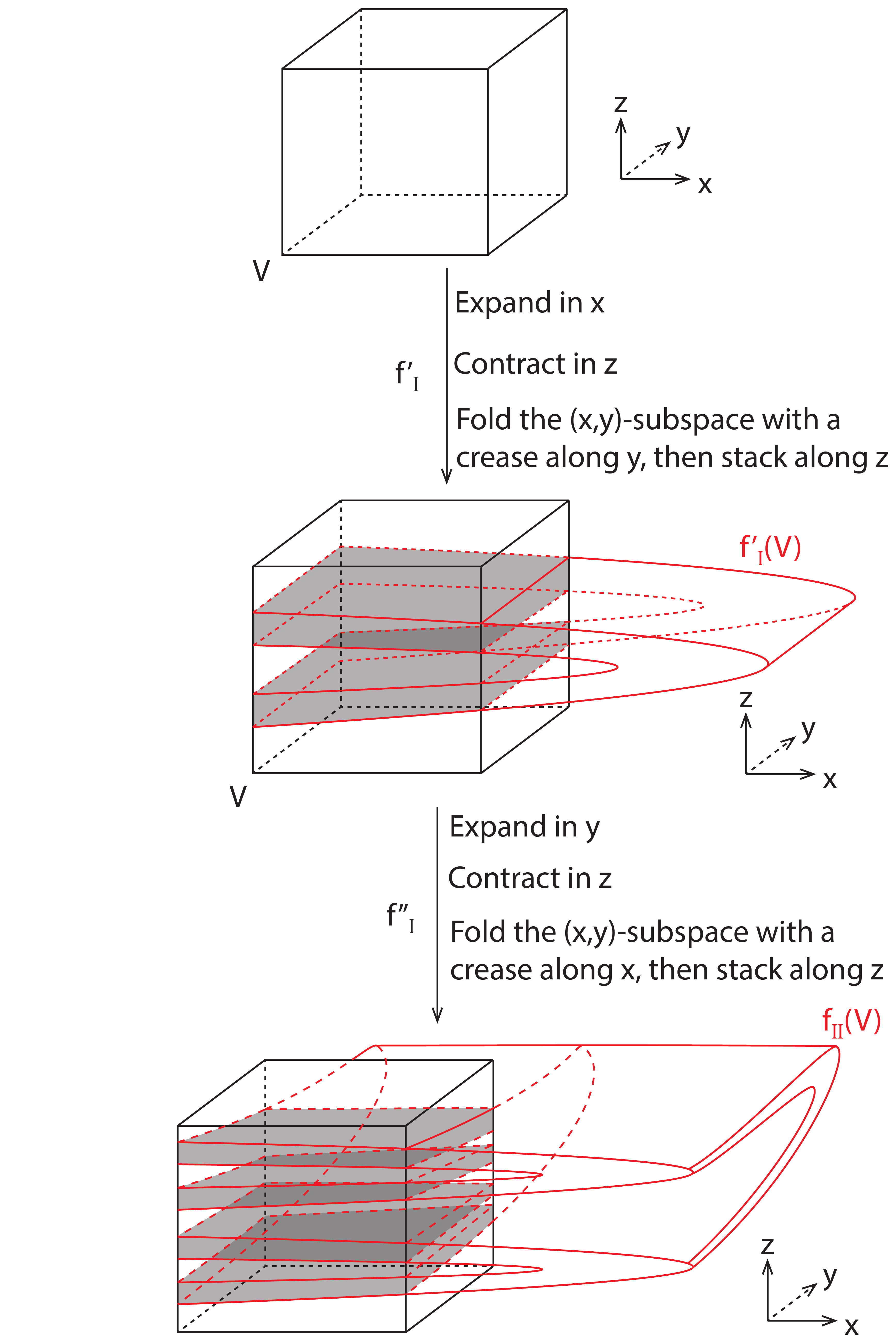} 
        \caption{(Schematic, color online) Topology II: doubly-folded horseshoe in three dimensions. Given a rectangle $V$, $f^{\prime}_{\rm I}$ expands it in $x$, contracts it in $z$, folds it with a crease along $y$ and stacks along $z$. Then subsequently, $f^{\prime\prime}_{\rm I}$ expands it in $y$, contracts it in $z$, folds it with a crease along $x$ and then stacks along $z$ again, giving rise to the doubly-folded structure. The intersection $V \cap f_{\rm II}(V)$ consists of four disjoint horizontal slabs, as shown in the bottom figure. Notice that the two foldings have independent creases (the first crease is along $y$ and the second crease is along $x$) but share the same stacking direction ($z$).    } \label{fig:3D_doubly_folded_horseshoe}
\end{figure}
A realization is also provided by the H\'{e}non-type map
\begin{equation}\label{eq:3D doubly folded Henon}
\left( \begin{array}{ccc}
x_{n+1}\\
y_{n+1} \\
z_{n+1} \end{array} \right) = 
f_{\rm II} \left( \begin{array}{ccc}
x_{n}\\
y_{n} \\
z_{n} \end{array} \right) =
\left( \begin{array}{ccc}
a_0 - x^2_n - z_n\\
a_1 - y^2_n - x_n\\
y_n \end{array} \right)
\end{equation}
with parameters $a_0,a_1 > 5+2\sqrt{5}$. The mapping $f_{\rm II}$ can be written as the compound mapping of two successive Type-I horseshoes, namely $f^{\prime}_{\rm I}$ and $f^{\prime\prime}_{\rm I}$:
\begin{equation}\label{eq:3D doubly folded Henon compound}
f_{\rm II} = f^{\prime\prime}_{\rm I} \circ f^{\prime}_{\rm I}
\end{equation}
where $f^{\prime}_{\rm I}$ takes the form Eq.~(\ref{eq:3D singly folded Henon}) with $b=1$:
\begin{equation}\label{eq:f prime I}
\left( \begin{array}{ccc}
x^{\prime}\\
y^{\prime} \\
z^{\prime} \end{array} \right) = 
f^{\prime}_{\rm I} \left( \begin{array}{ccc}
x\\
y \\
z \end{array} \right) =
\left( \begin{array}{ccc}
a_0 - x^2 - z\\
y \\
x \end{array} \right)
\end{equation}
and $f^{\prime \prime}_{\rm I}$ resembles $f^{\prime}_{\rm I}$ but interchanges the roles of the $x$ and $y$ axis, i.e., it expands $V$ in $y$, contracts $V$ in $z$, folds it with a crease along $x$, then stacks along $z$. Therefore, the mapping equations of $f^{\prime \prime}_{\rm I}$ is obtained from that of $f^{\prime}_{\rm I}$ by interchanging $x$ and $y$:
\begin{equation}\label{eq:f prime prime I}
\left( \begin{array}{ccc}
x^{\prime\prime}\\
y^{\prime\prime} \\
z^{\prime\prime} \end{array} \right) = 
f^{\prime\prime}_{\rm I} \left( \begin{array}{ccc}
x^{\prime}\\
y^{\prime} \\
z^{\prime} \end{array} \right) =
\left( \begin{array}{ccc}
x^{\prime}\\
a_1 - (y^{\prime})^2 - z^{\prime} \\
y^{\prime} \end{array} \right)\ .
\end{equation}
The inverse map $f^{-1}_{\rm II}$ is slightly complicated as it involves quartic terms:
\begin{eqnarray}
 \left( \begin{array}{ccc}
x_{n-1}\\
y_{n-1} \\
z_{n-1} \end{array} \right) & = f^{-1}_{\rm II} \left( \begin{array}{ccc}
x_{n}\\
y_{n} \\
z_{n} \end{array} \right) \nonumber \\
& =
\left( \begin{array}{ccc}
-y_n - z^2_n + a_1\\
z_n\\
-x_n - y^2_n -2 y_n z^2_n - z^4_n + a_0 + 2 a_1 y_n + 2 a_1 z^2_n - a^2_1 \end{array} \right). \label{eq:f II inverse}
\end{eqnarray}
We propose the following theorem on the topological structure of $f_{\rm II}$:
\begin{theorem}\label{3D Doubly folded horseshoe topology}
Let $a_0 = a_1 = a > 5+2\sqrt{5}$ and $r= (1+\sqrt{1+a})$. Let $V$ be a hypercube centered at the origin with side length $2r$, i.e.,
\begin{equation}\label{eq:V f II}
V = \left\lbrace (x,y,z) \Big| |x|,|y|,|z| \leq r \right\rbrace\ . 
\end{equation}  
Then the intersection $V \cap f_{\rm II}(V)$ consists of four disjoint horizontal slabs of $V$, as shown by Fig.~\ref{fig:3D_doubly_folded_horseshoe}. 
\end{theorem}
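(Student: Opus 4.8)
The plan is to exploit the fact that $f_{\rm II}$ is an explicit polynomial automorphism with the explicit inverse~(\ref{eq:f II inverse}), so that membership in $V\cap f_{\rm II}(V)$ collapses to a pair of scalar inequalities on the unique preimage; I would then show that these inequalities cut $V$ into exactly four disjoint horizontal slabs in the sense of Definition~\ref{Horizontal Slabs}.

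First I would reduce to preimage conditions. Since $f_{\rm II}$ is a bijection, a point $(X,Y,Z)\in V$ lies in $f_{\rm II}(V)$ if and only if its unique preimage lies in $V$. Solving $f_{\rm II}(x,y,z)=(X,Y,Z)$ gives $y=Z$, $x=a-Z^2-Y$, and $z=a-x^2-X$. The condition $|y|=|Z|\le r$ is automatic on the cube~(\ref{eq:V f II}), so membership reduces to $|x|=|a-Z^2-Y|\le r$ together with $|z|=|a-x^2-X|\le r$. Each of these is algebraically a copy of the two-dimensional H\'{e}non horseshoe condition attached to eq~(\ref{eq:2D Henon}): the first is the Devaney--Nitecki constraint for the second fold $f''_{\rm I}$ (acting in $(y,z)$), the second is the same constraint for the first fold $f'_{\rm I}$ (acting in $(x,z)$), now with the folded variable $x=a-Z^2-Y$. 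In this sense the two inequalities record the two successive foldings.

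Next I would solve the inequalities for $Z$ and count. Writing the second condition as $\sqrt{a-X-r}\le|x|\le\sqrt{a-X+r}$ and substituting $x=a-Z^2-Y$, the sign of $x$ splits the admissible set into two disjoint $Z^2$-intervals, namely $Z^2\in[\,a-Y-\sqrt{a-X+r},\,a-Y-\sqrt{a-X-r}\,]$ for $x\ge0$ and $Z^2\in[\,a-Y+\sqrt{a-X-r},\,a-Y+\sqrt{a-X+r}\,]$ for $x\le0$; the sign of $Z$ then splits each of these once more. This produces four bands of the form $u_1(X,Y)\le Z\le u_2(X,Y)$ and their reflections through $Z=0$, each spanning the full square $|X|,|Y|\le r$, i.e.\ exactly four horizontal slabs. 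Along the way one notes that $\sqrt{a-X+r}\le r$, so the bound $|x|\le r$ (hence the first inequality $|a-Z^2-Y|\le r$) is automatically implied and no independent constraint is lost.

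The main obstacle is the uniform verification in this last step. I must confirm that, for every $(X,Y)\in[-r,r]^2$, all four $Z$-bands are nonempty, satisfy $0<u_1<u_2$ (no marginal or degenerate intersections, in keeping with the Remark after Definition~\ref{Horizontal Strips}), stay inside $|Z|\le r$, and are separated by genuine gaps so the four slabs are pairwise disjoint. This comes down to a short list of inequalities among $\sqrt{a-X\pm r}$, $a-Y$, and $r^2=a+2r$; the tightest of them occur at the corners of the square and reduce to a condition of the form $r>4$, comfortably implied by the hypothesis $a>5+2\sqrt5$ (the same bound that already makes each two-dimensional factor a Devaney--Nitecki horseshoe). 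The reduction and the $2\times2$ counting are immediate once the preimage is written down; this boundary-case bookkeeping is the only real work.
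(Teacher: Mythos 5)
Your proposal is correct, and after the shared first step (reducing membership in $V\cap f_{\rm II}(V)$ to the two scalar preimage inequalities via the explicit inverse, which is exactly how the paper begins, cf.\ Eq.~(\ref{eq:V intersect f II V})) it proceeds by a genuinely different route. The paper works slice by slice: it fixes $x$ and shows that the conditions carve four disjoint horizontal strips out of every $(y,z)$-slice using envelope parabolas $\Gamma^{\min/\max}_y$, $\Lambda^{\pm,\min/\max}_y$, and then fixes $y$ and repeats the argument in the $(x,z)$-slices, where the relevant boundary curves are quartics whose extremals must be located. You instead solve the inequalities globally for $Z$: splitting by the sign of $x=a-Z^2-Y$ and then of $Z$ yields four explicit bands $u_1(X,Y)\le Z\le u_2(X,Y)$ over the whole square $|X|,|Y|\le r$, which is precisely the form demanded by Definition~\ref{Horizontal Slabs}. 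This buys you three things: you never need the quartic analysis at all; the global slab structure is exhibited directly rather than being assembled implicitly from two families of 2D cross-sections (the paper's conditions (a) and (b) leave the gluing across slices to the reader); and your observation that $\sqrt{a-X+r}\le\sqrt{a+2r}=r$ makes the inequality $|a-Z^2-Y|\le r$ redundant trims the problem to a single constraint. Your bookkeeping is also sound: all nonemptiness, containment, and disjointness checks reduce to $a-2r>0$, which via $r^2=a+2r$ is exactly your condition $r>4$, and this is the same key inequality the paper relies on (its conditions (a.1), (b.1)); at the threshold $a=5+2\sqrt5$ one has $r=2+\sqrt5>4$, with the only boundary contact being the corner touching $u_2=r$ at $(X,Y)=(-r,-r)$, which Definition~\ref{Horizontal Slabs} permits. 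The one point worth writing out explicitly is that $a-X-r\ge a-2r>0$, so both square roots $\sqrt{a-X\pm r}$ are real and the band endpoints are continuous functions of $(X,Y)$. What the paper's slice-wise method buys in exchange is robustness: it is the template reused for the four-dimensional Theorems~\ref{4D Doubly folded horseshoe topology}--\ref{4D Doubly folded horseshoe with cusp topology}, where coupling terms make a closed-form global solution for one coordinate much less tractable.
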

\begin{proof}
Using the identity relation $f_{\rm II}^{-1}(f_{\rm II}(V))=V$, we obtain an analytic expression for $f_{\rm II}(V)$
\begin{equation}\label{eq:f II V}
f_{\rm II}(V) = 
\left\lbrace (x,y,z) \middle| \begin{array}{ccc}
|z| \leq r \\
|-y - z^2 + a| \leq r \\
|-x - y^2 -2 y z^2 - z^4 + a + 2 a (y + z^2) - a^2| \leq r \end{array} \right\rbrace 
\end{equation}
The expression for $V \cap f_{\rm II}(V)$ is then obtained trivially by imposing the additional bounds on $x$ and $y$
\begin{equation}\label{eq:V intersect f II V}
V \cap f_{\rm II}(V) = 
\left\lbrace (x,y,z) \middle| \begin{array}{ccc}
|x|,|y|,|z| \leq r \\
|-y - z^2 + a| \leq r \\
|-x - y^2 -2 y z^2 - z^4 + a + 2 a (y + z^2) - a^2| \leq r \end{array} \right\rbrace 
\end{equation}
Let $\Sigma^2(y)$ be the $(x,z)$-plane parameterized by $y$ (the superscript indicates the dimensionality of the plane), i.e.,
\begin{equation}\label{eq: x z slice for f_II}
\Sigma^2(y) = \left\lbrace (x',y',z') \middle| x',z'\in \mathbb{R},\ y'=y \right\rbrace
\end{equation}
and similarly $\Sigma^2(x)$ the $(y,z)$-plane parameterized by $x$, i.e.,
\begin{equation}\label{eq: y z slice for f_II}
\Sigma^2(x) = \left\lbrace (x',y',z') \middle| y',z'\in \mathbb{R},\ x'=x \right\rbrace \ .
\end{equation}
Moreover, define the line segments
\begin{equation}
S^{\pm}_y = \left\lbrace (y,z) \middle| y=\pm r, |z| \leq r \right\rbrace
\end{equation}
i.e., $S^+_y$ and $S^-_y$ are the right and left boundaries of $V$, respectively, in each $\Sigma^2(x)$ slice, as labeled in Fig.~\ref{fig:3D_dooubly_folded_horseshoe_y_z_slice_Gammas}.

To prove this theorem, we need to show that $V \cap f_{\rm II}(V)$ consists of
\begin{itemize}
\item[(a)] Four disjoint horizontal strips in every $\Sigma^2(x)$ for $|x| \leq r$;
\item[(b)] Four disjoint horizontal strips in every $\Sigma^2(y)$ for $|y| \leq r$. 
\end{itemize} 
We now establish conditions (a) and (b) individually.

\begin{figure}
        \centering
        \includegraphics[width=0.6\linewidth]{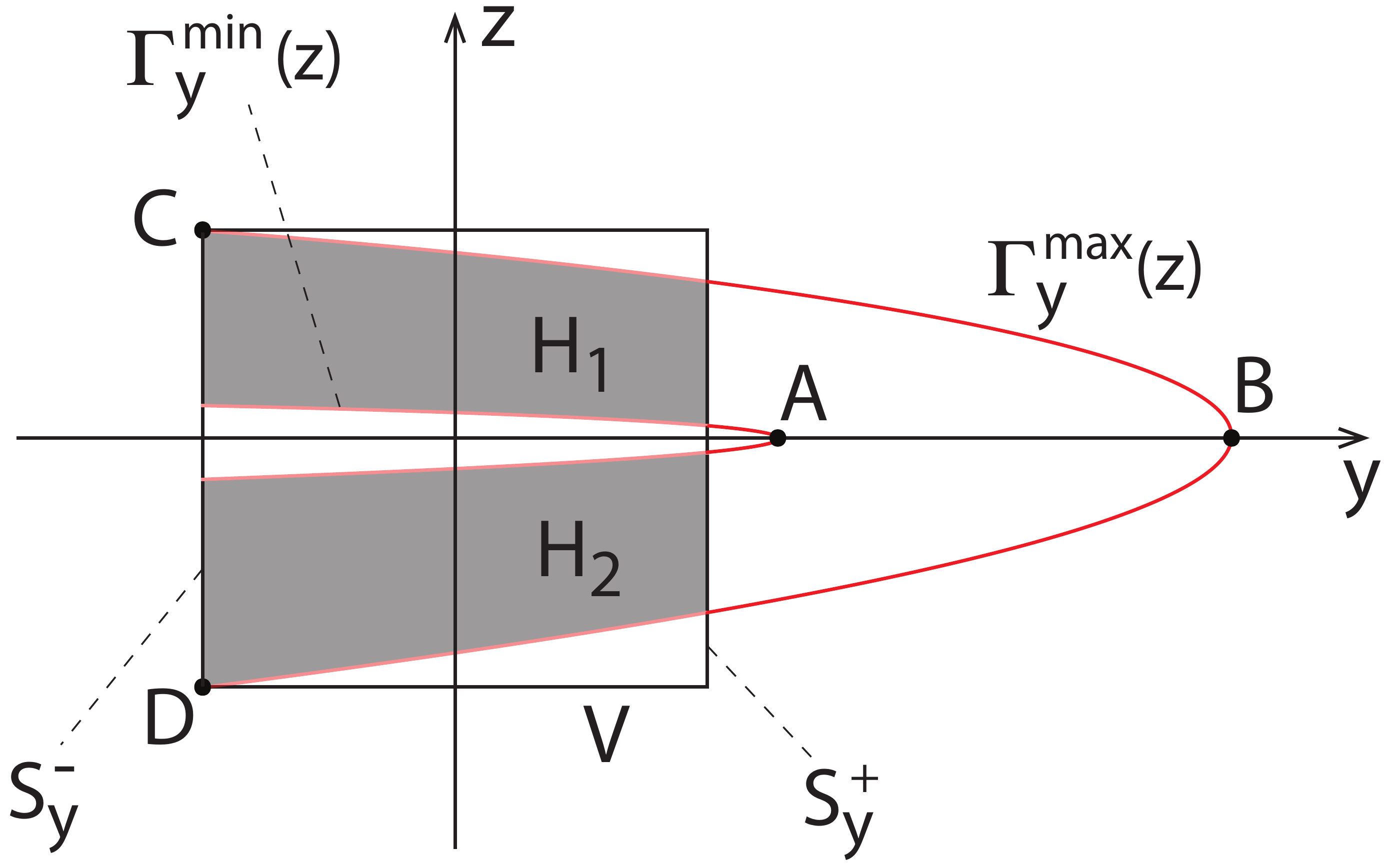} 
        \caption{(Schematic, color online) The gap bounded by $\Gamma^{\min}_y(z)$, $\Gamma^{\max}_y(z)$ , and $S^{\pm}_y$ consists of two disjoint horizontal strips $H_1$ and $H_2$.}   \label{fig:3D_dooubly_folded_horseshoe_y_z_slice_Gammas}
\end{figure}

Condition (a): the second row of Eq.~(\ref{eq:V intersect f II V}) can be rewritten in the parameterized form
\begin{equation}\label{eq:y z parabola 1 f_II}
y = -z^2 + a + s, ~~~~{\rm where}~ |s| \leq r \ .
\end{equation}
Let $\Gamma_y(z,s)$ be the family of parabolas in $\Sigma^2(x)$
\begin{equation}\label{eq:Gamma y in terms of z s f_II}
\Gamma_y(z,s) = -z^2 + a + s
\end{equation}
where $s$ is viewed as a parameter within range $|s| \leq r$. It is obvious that $\Gamma_y(z,s)$ is bounded by
\begin{equation}\label{eq:Gamma y in terms of z s bounds}
\Gamma^{\min}_y(z) \leq \Gamma_y(z,s) \leq \Gamma^{\max}_y(z)
\end{equation}
with lower and upper bounds
\begin{eqnarray}
& \Gamma^{\min}_y(z) = \Gamma_y(z,s)|_{s=-r} = -z^2 + a -r  \label{eq:Gamma y z min} \\
& \Gamma^{\max}_y(z) =  \Gamma_y(z,s)|_{s=r} = -z^2 + a + r  \label{eq:Gamma y z max}\ .
\end{eqnarray}
When viewed in each $\Sigma^2(x)$ slice, $f_{\rm II}(V)$ is the gap region between the two parabolas $\Gamma^{\min}_y(z)$ and $\Gamma^{\max}_y(z)$ (see Fig.~\ref{fig:3D_dooubly_folded_horseshoe_y_z_slice_Gammas}). The possible location of the two parabolas can be further narrowed down by establishing the following facts:
\begin{itemize}
\item[(a.1)] The vertex of $\Gamma^{\min}_y(z)$ is located on the right-hand side of $S^+_y$, as labeled by $A$ in Fig.~\ref{fig:3D_dooubly_folded_horseshoe_y_z_slice_Gammas};
\item[(a.2)] $\Gamma^{\max}_y(z)$ intersects $S^-_y$ at two points, as labeled by $C$ and $D$ in Fig.~\ref{fig:3D_dooubly_folded_horseshoe_y_z_slice_Gammas}. 
\end{itemize} 

To establish (a.1), let $A=(y_A,z_A)$. It can be solved easily that
\begin{equation}
z_A=0, ~~~~ y_A = \Gamma^{\min}_y(z_A=0)=a-r. 
\end{equation}
Using the assumption that $a>5+2\sqrt{5}$, it is straightforward to verify that $a-2r>0$, thus $y_A >r$, i.e., $A$ is on the right-hand side of $S^{+}_y$. 

To establish (a.2), notice that
\begin{equation}
\Gamma^{\max}_y(z=\pm r) = -r^2 +a +r = -r\ ,
\end{equation}
thus $C$ and $D$ are located at
\begin{equation}
C = (-r,r), ~~~~ D=(-r,-r)\ ,
\end{equation}
i.e., $C$ and $D$ are the upper-left and lower-left corners of $V$, respectively, as labeled in Fig.~\ref{fig:3D_dooubly_folded_horseshoe_y_z_slice_Gammas}. Therefore, $\Gamma^{\max}_y(z)$ intersects $S^-_y$ at its two endpoints. 

Combining (a.1) and (a.2), we know that the region bounded by $\Gamma^{\max}_y(z)$, $\Gamma^{\min}_y(z)$, and $S^{\pm}_y$ consists of two disjoint horizontal strips, labeled by $H_1$ and $H_2$ in Fig.~\ref{fig:3D_dooubly_folded_horseshoe_y_z_slice_Gammas}. Strictly speaking, both $H_1$ and $H_2$ depend on $x$, i.e., the position of $(y,z)$-slice along the $x$-axis, therefore should be written as $H_1(x)$ and $H_2(x)$. However, since the $x$-dependence will not be used for the rest of the proof, we simply omit it and write the horizontal strips without explicit $x$-dependence. When viewed in each $\Sigma^2(x)$-slice, $V \cap f_{\rm II}(V)$ can only exist inside $H_1$ and $H_2$:
\begin{equation}\label{eq:V intersects f_II V y z slice first bound}
V \cap f_{\rm II}(V) \Big|_{\Sigma^2(x)} \subset H_1 \cup H_2 \ .
\end{equation}

At this point, let us notice that Eq.~(\ref{eq:V intersects f_II V y z slice first bound}) only makes use of the second row of Eq.~(\ref{eq:V intersect f II V}), thus only provides a crude bound for $V \cap f_{\rm II}(V) \Big|_{\Sigma^2(x)}$. Based upon Eq.~(\ref{eq:V intersects f_II V y z slice first bound}), we now further refine the bound for $V \cap f_{\rm II}(V) \Big|_{\Sigma^2(x)}$ by imposing the third row of Eq.~(\ref{eq:V intersect f II V}). 

The third row of Eq.~(\ref{eq:V intersect f II V}) can be rewritten into the parameterized form
\begin{equation}\label{eq:y z parabola 2 f_II}
-x - y^2 -2 y z^2 - z^4 + a + 2 a (y + z^2) - a^2 = -s, ~~~~{\rm where}~ |s| \leq r
\end{equation}
from which we solve for $y$ and obtain two branches of solutions:
\begin{equation}\label{eq: y z parabola 2 f_II two branches}
y_{\pm}(z,x,s) = -z^2 + a \pm \sqrt{s-x+a}\ .
\end{equation}
Accordingly, let us define two families of parabolas in $\Sigma^2(x)$, denoted by $\Lambda^{\pm}_y(z,x,s)$, where
\begin{equation}\label{eq:Lambda y in terms of z x s}
\Lambda^{\pm}_y(z,x,s) = -z^2 + a \pm \sqrt{s-x+a}
\end{equation}
where $x$ and $s$ are viewed as parameters with bounds $|x|,|s| \leq r$. When viewed in each $\Sigma^2(x)$ slice, $\Lambda^{+}_y(z,x,s)$ is a family of parabolas parameterized by $s$, bounded by
\begin{equation}\label{eq:Lambda y + lower and upper bounds on each slice}
\Lambda^{+,1}_y(z,x) \leq \Lambda^{+}_y(z,x,s) \leq \Lambda^{+,2}_y(z,x)
\end{equation} 
where the lower and upper bounds are attained at
\begin{eqnarray}
& \Lambda^{+,1}_y(z,x) = \Lambda^+_y(z,x,s)|_{s=-r} = -z^2 + a + \sqrt{a-x-r}  \label{eq:Lambda y + 1} \\
& \Lambda^{+,2}_y(z,x) = \Lambda^+_y(z,x,s)|_{s=r} = -z^2 + a + \sqrt{a-x+r}  \label{eq:Lambda y + 2}\ .
\end{eqnarray}
Similarly, when viewed in each $\Sigma^2(x)$ slice, $\Lambda^{-}_y(z,x,s)$ is a family of parabolas parameterized by $s$, bounded by
\begin{equation}\label{eq:Lambda y - lower and upper bounds on each slice}
\Lambda^{-,1}_y(z,x) \leq \Lambda^{-}_y(z,x,s) \leq \Lambda^{-,2}_y(z,x)
\end{equation} 
where the lower and upper bounds are attained at
\begin{eqnarray}
& \Lambda^{-,1}_y(z,x) = \Lambda^-_y(z,x,s)|_{s=r} = -z^2 + a - \sqrt{a-x+r}  \label{eq:Lambda y - 1} \\
& \Lambda^{-,2}_y(z,x) = \Lambda^-_y(z,x,s)|_{s=-r} = -z^2 + a - \sqrt{a-x-r}  \label{eq:Lambda y - 2}\ .
\end{eqnarray}

\begin{figure}
        \centering
        \includegraphics[width=0.7\linewidth]{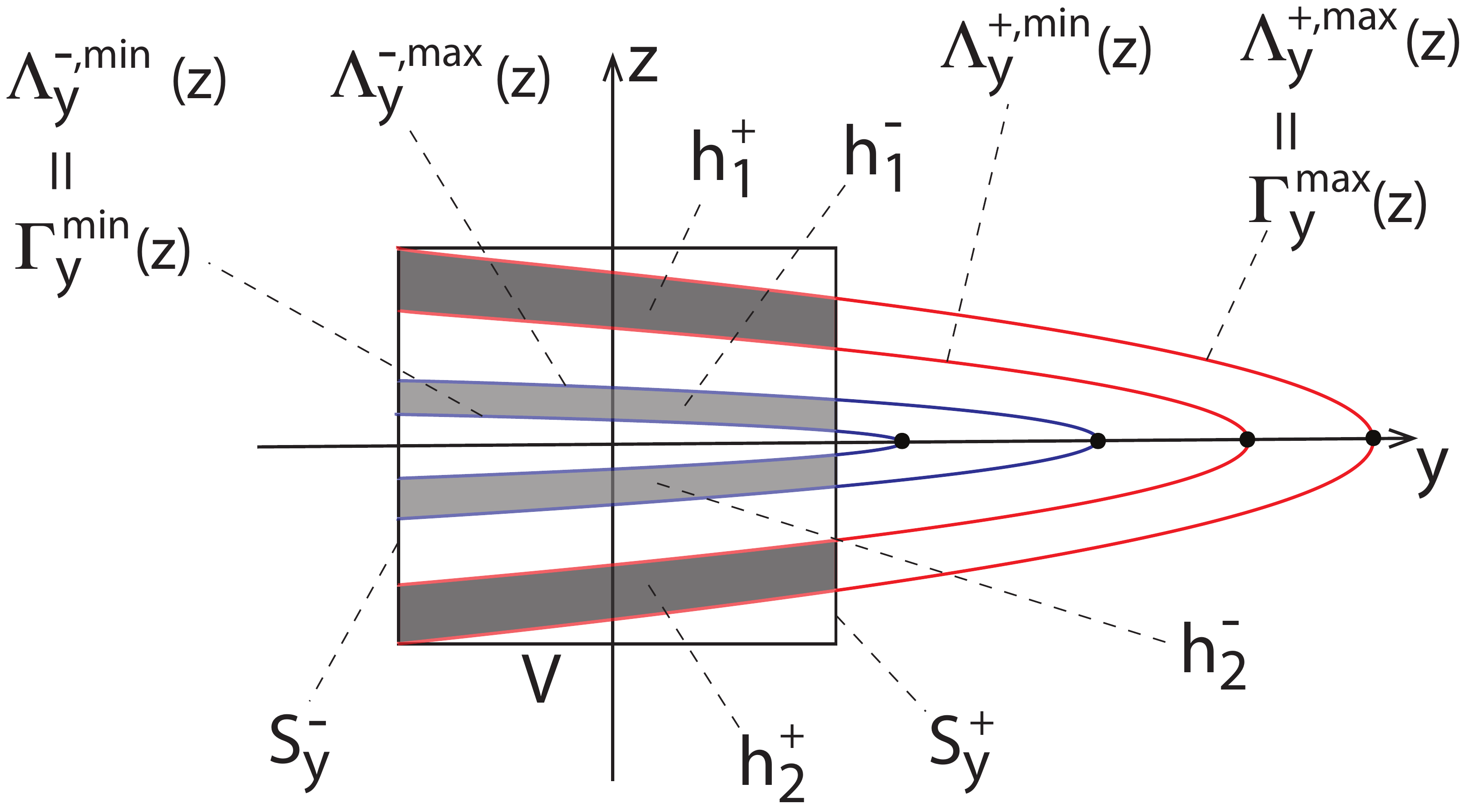} 
        \caption{(Schematic, color online) The four parabolas in Eqs.~(\ref{eq:Lambda y + min})-(\ref{eq:Lambda y - max}) cut $V$ into four disjoint horizontal strips, labeled by $h^{+}_1$, $h^{-}_1$, $h^-_2$, and $h^+_2$, respectively (from top to bottom). Notice the two pairs of identical parabolas: $\Lambda^{-,\min}_y(z)=\Gamma^{\min}_y(z)$ and $\Lambda^{+,\max}_y(z)=\Gamma^{\max}_y(z)$.}   \label{fig:3D_doubly_folded_horseshoe_y_z_slice_Lambdas}
\end{figure}

It is desirable to get rid of the $x$-dependence in Eqs.~(\ref{eq:Lambda y + lower and upper bounds on each slice}) and (\ref{eq:Lambda y - lower and upper bounds on each slice}). This can be done by obtaining uniform lower and upper bounds for $\Lambda^{\pm}_y(z,x,s)$ with respect to change in $(x,s)$. A simple calculation shows:
\begin{equation}\label{eq:Gamma y +- uniform bounds}
\Lambda^{\pm,\min}_y (z) \leq \Lambda^{\pm}_y(z,x,s) \leq \Lambda^{\pm,\max}_y (z)
\end{equation}
where the bounds are attained at
\begin{eqnarray}
& \Lambda^{+,\min}_y(z) = \Lambda^+_y(z,x,s)|_{(x,s)=(r,-r)} = -z^2 + a + \sqrt{a-2r}  \label{eq:Lambda y + min} \\
& \Lambda^{+,\max}_y(z) = \Lambda^+_y(z,x,s)|_{(x,s)=(-r,r)} = -z^2 + a + \sqrt{a+2r}  \label{eq:Lambda y + max} \\
& \Lambda^{-,\min}_y(z) = \Lambda^-_y(z,x,s)|_{(x,s)=(-r,r)} = -z^2 + a - \sqrt{a+2r}  \label{eq:Lambda y - min} \\
& \Lambda^{-,\max}_y(z) = \Lambda^-_y(z,x,s)|_{(x,s)=(r,-r)} = -z^2 + a - \sqrt{a-2r}  \label{eq:Lambda y - max}\ .
\end{eqnarray}
A schematic illustration of the four parabolas is given in Fig.~\ref{fig:3D_doubly_folded_horseshoe_y_z_slice_Lambdas}. At this point, it is worthwhile checking that since $a > 5 + 2\sqrt{5}$, we have 
\begin{equation}\label{eq:a-2r}
a-2r >0\ ,
\end{equation}
i.e., the square roots in Eqs.~(\ref{eq:Lambda y + min}) and (\ref{eq:Lambda y - max}) are real-valued. Also, it is easy to check that $r = \sqrt{a+2r}$, therefore we obtain the important relations
\begin{eqnarray}\label{eq:Identical parabola bounds}
& \Lambda^{+,\max}_y(z) = \Gamma^{\max}_y(z) \label{eq:Identical parabola bounds max} \\
& \Lambda^{-,\min}_y(z) = \Gamma^{\min}_y(z) \ , \label{eq:Identical parabola bounds min} 
\end{eqnarray}
as indicated by Fig.~\ref{fig:3D_doubly_folded_horseshoe_y_z_slice_Lambdas}. Therefore, conditions (a.1) and (a.2) immediately apply to $\Lambda^{-,\min}_y(z)$ and $\Lambda^{+,\max}_y(z)$, respectively. This guarantees that the four parabolas ($\Lambda^{\pm,\max}_y(z)$ and $\Lambda^{\pm,\min}_y(z)$) cut $V$ into four disjoint horizontal strips, as labeled by $h^{\pm}_1$ and $h^{\pm}_2$ in Fig.~\ref{fig:3D_doubly_folded_horseshoe_y_z_slice_Lambdas}. Furthermore, Eqs.~(\ref{eq:Identical parabola bounds max}) and (\ref{eq:Identical parabola bounds min}) also guarantee that $h^{\pm}_1 \subset H_1$ and $h^{\pm}_2 \subset H_2$. Hence when viewed in each $\Sigma^2(x)$ slice (see Fig.~\ref{fig:3D_doubly_folded_horseshoe_y_z_slice_Lambdas}), $V \cap f_{\rm II}(V)\Big|_{\Sigma^2(x)}$ lies within the four strips 
\begin{equation}\label{eq:V intersects f_II V y z slice second bound}
V \cap f_{\rm II}(V) \Big|_{\Sigma^2(x)} \subset h^{+}_1\cup h^{-}_1 \cup h^{+}_2 \cup h^{-}_2 \subset H_1 \cup H_2 \ ,
\end{equation}
which establishes condition (a): $V \cap f_{\rm II}(V) $ consists of four disjoint horizontal strips in every $\Sigma^2(x)$ for $|x| \leq r$. 

Condition (b): from Eq.~(\ref{eq:y z parabola 2 f_II}) we solve for $x$ and obtain
\begin{equation}\label{eq:x z parabola f_II}
x = -z^4 + 2(a-y)z^2 - (y-a)^2 + a + s, ~~~~{\rm where~} |s| \leq r.  
\end{equation}
Correspondingly, define the family of quartic polynomials
\begin{equation}\label{eq:quartic function f_II}
\Gamma_x(z,y,s) = -z^4 + 2(a-y)z^2 - (y-a)^2 + a + s\ .
\end{equation}
\begin{figure}
        \centering
        \includegraphics[width=0.8\linewidth]{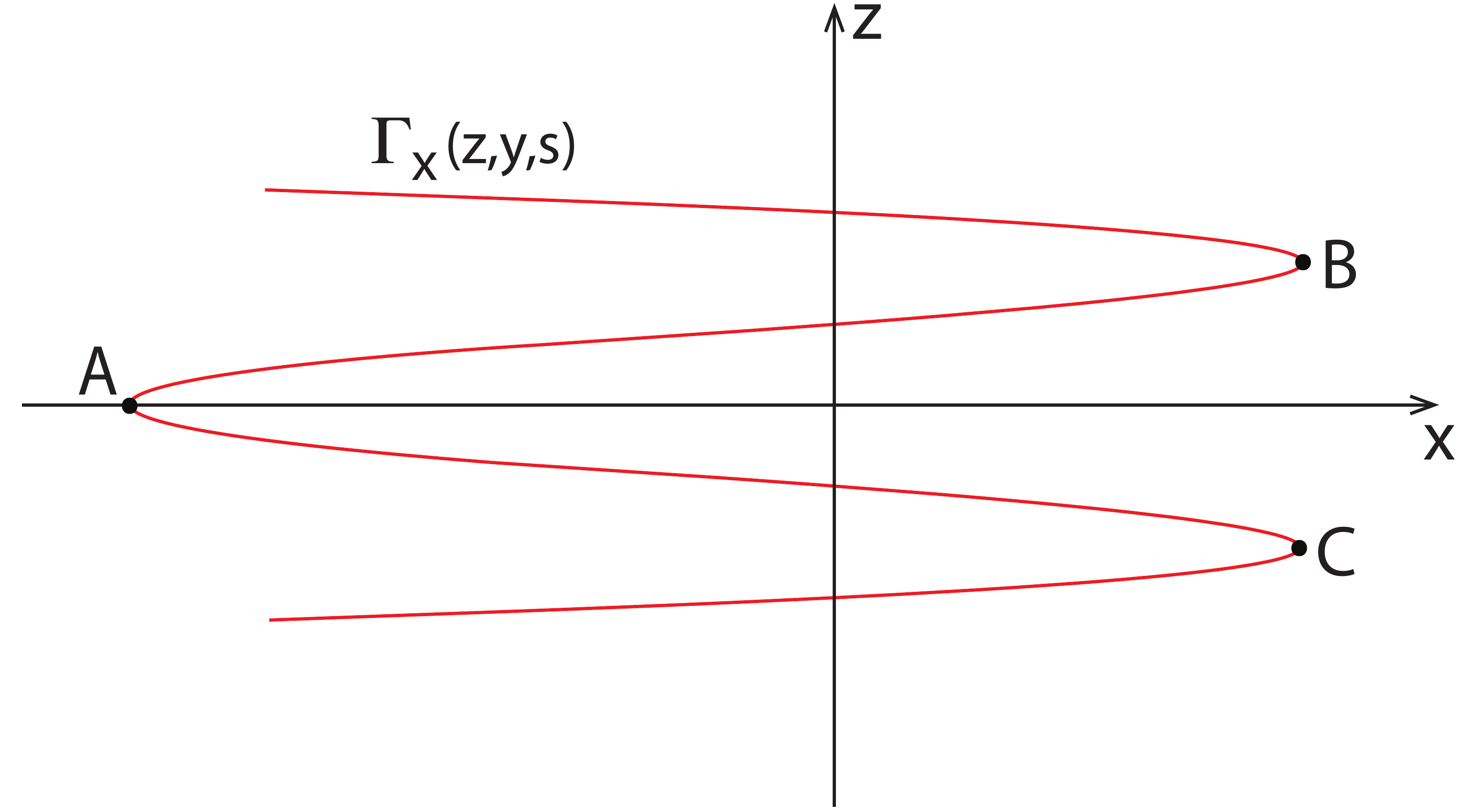} 
        \caption{(Schematic, color online) Within each $\Sigma^2(y)$ slice, $\Gamma_x(z,y,s)$ is a quartic function of $z$. It has one local minimum $A$, and two global maximums $B$ and $C$. }   \label{fig:Quartic_function_schematic}
\end{figure}
When viewed in $\Sigma^2(y)$, $\Gamma_x(z,y,s)$ is a parameterized quartic function of $z$ with parameters $y$ and $s$. By solving for ${\rm d} \Gamma_x/{\rm d}z=0$ and ${\rm d}^2 \Gamma_x/{\rm d}z^2=0$ we can easily obtain the three extremals of the quartic function, namely $A$, $B$, and $C$, where
\begin{eqnarray}
& A = (x_A,z_A) = \Big( -(y-a)^2+a+s,0 \Big) ~~{\rm is~a~local~minimum} \label{eq:extremal A f_II} \\
& B = (x_B,z_B) = \Big( a+s,\sqrt{a-y} \Big) ~~{\rm is~a~global~maximum} \label{eq:extremal B f_II} \\
& C = (x_C,z_C) = \Big( a+s,-\sqrt{a-y} \Big) ~~{\rm is~a~global~maximum} \label{eq:extremal C f_II}
\end{eqnarray}
as demonstrated schematically by Fig.~\ref{fig:Quartic_function_schematic}. Upon changing the value of $s$ within the range $|s| \leq r$, $\Gamma_x(z,y,s)$ is bounded by
\begin{equation}\label{eq:Quartic function f_II bounds}
\Gamma^{\min}_x(z,y) \leq \Gamma_x(z,y,s) \leq \Gamma^{\max}_x(z,y)
\end{equation}
where
\begin{eqnarray}
&\Gamma^{\min}_x(z,y) = \Gamma_x(z,y,s)\Big|_{s=-r} = -z^4 + 2(a-y)z^2 - (y-a)^2 + a - r \label{eq:Quartic function f_II lower bound} \\
& \Gamma^{\max}_x(z,y) = \Gamma_x(z,y,s)\Big|_{s=r} = -z^4 + 2(a-y)z^2 - (y-a)^2 + a + r\ . \label{eq:Quartic function f_II upper bound}
\end{eqnarray}
\begin{figure}
        \centering
        \includegraphics[width=0.8\linewidth]{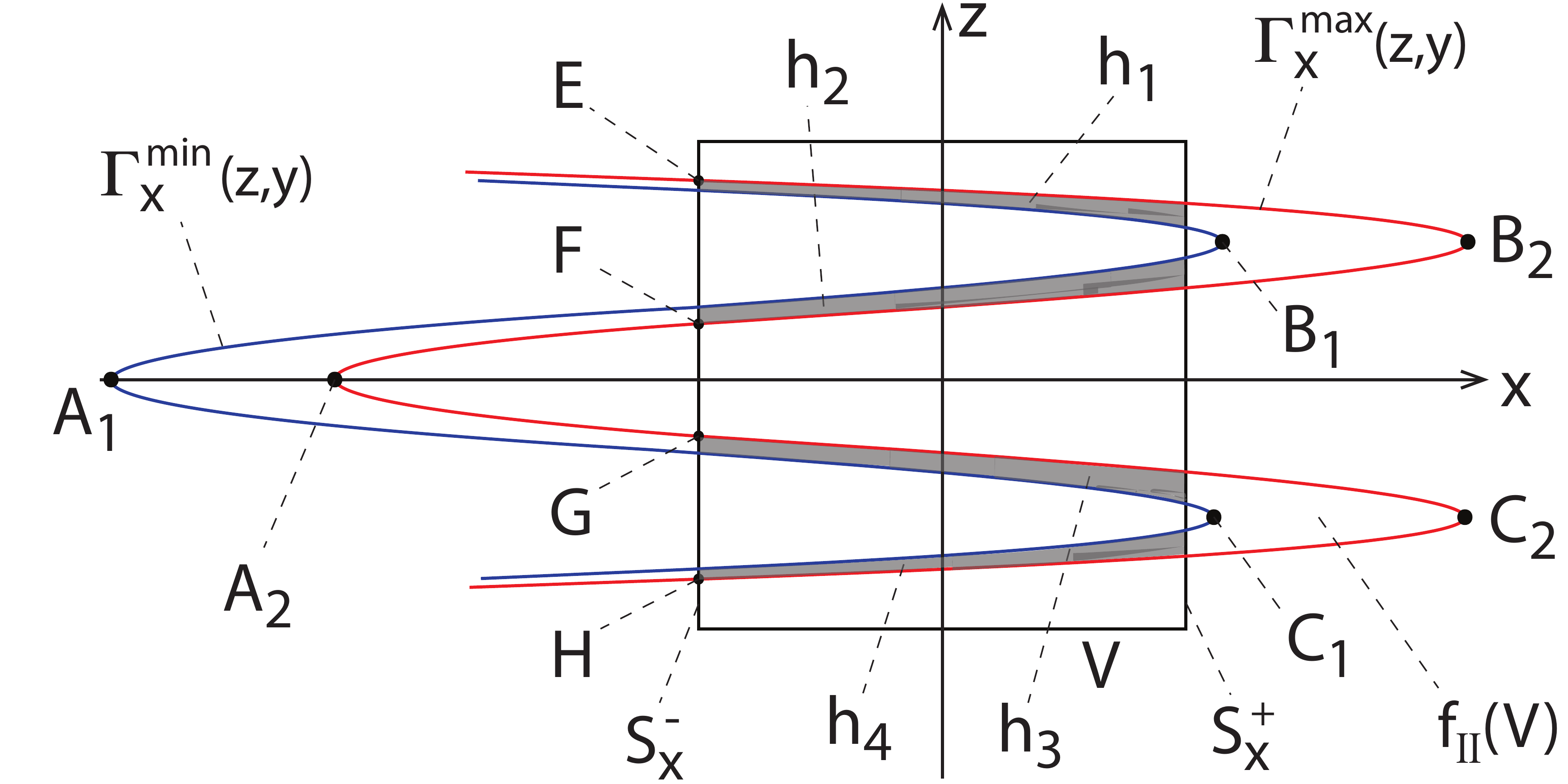} 
        \caption{(Schematic, color online) Within each $\Sigma^2(y)$ slice, $f_{\rm II}(V)$ is the region in between $\Gamma^{\min}_x(z,y)$ (blue) and $\Gamma^{\max}_x(z,y)$ (red). $V \cap f_{\rm II}(V)$ consists of four disjoint horizontal strips labeled by $h_1$, $h_2$, $h_3$, and $h_4$ (shaded). }   \label{fig:3D_doubly_folded_horseshoe_x_z_slice}
\end{figure}
Therefore within each $\Sigma^2(y)$, $f_{\rm II}(V)$ is the gap region bounded from left and right by $\Gamma^{\min}_x(z,y)$ and $\Gamma^{\max}_x(z,y)$, respectively, as shown by Fig.~\ref{fig:3D_doubly_folded_horseshoe_x_z_slice}. Let $A_1$, $B_1$, $C_1$ be the three extremals of $\Gamma^{\min}_x(z,y)$, and $A_2$, $B_2$, $C_2$ be the three extremals of $\Gamma^{\max}_x(z,y)$, as labeled in the figure. Moreover, define the line segments
\begin{equation}
S^{\pm}_x = \left\lbrace (x,z) \middle| x=\pm r, |z| \leq r \right\rbrace
\end{equation}
i.e., $S^+_x$ and $S^-_x$ are the right and left boundaries of $V$, respectively, in $\Sigma^2(y)$. The following two conditions are sufficient for condition (b):
\begin{itemize}
\item[(b.1)] The global maximums of $\Gamma^{\min}_x(z,y)$, labeled by $B_1$ and $C_1$ in Fig.~\ref{fig:3D_doubly_folded_horseshoe_x_z_slice}, are on the right-hand side of $S^+_x$;
\item[(b.2)] $\Gamma^{\max}_x(z,y)$ intersects $S^-_x$ at four points, as labeled by $E$, $F$, $G$, and $H$ in Fig.~\ref{fig:3D_doubly_folded_horseshoe_x_z_slice}. 
\end{itemize}
We now show that conditions (b.1) and (b.2) hold for all $|y| \leq r$. 

Condition (b.1): The global maximums of $\Gamma^{\min}_x(z,y)$ are located at
\begin{equation}
B_1 = (x_{B_1}, z_{B_1}), ~~~~ C_1 = (x_{C_1},z_{C_1})
\end{equation}
where $x_{B_1} = x_{C_1} = a-r >r$. Thus (b.1) is established. 

Condition (b.2): First, notice that the local minimum of $\Gamma^{\max}_x(z,y)$ is $A_2 = (x_{A_2},0) $ where
\begin{equation}\label{eq:A_2 f_II}
x_{A_2} = -(y-a)^2 +a +r \leq -(r-a)^2+a+r < -r^2 +a +r = -r
\end{equation}
where the second inequality comes from the fact that $a-2r>0$. Therefore, $A_2$ is located on the left-hand side of $S^-_x$, as shown in Fig.~\ref{fig:3D_doubly_folded_horseshoe_x_z_slice}. Second, it can be verified easily that
\begin{equation}\label{eq:E H existence f_II}
\Gamma^{\max}_x(z,y)\Big|_{z=\pm r} \leq -r.
\end{equation}
Combining Eqs.~(\ref{eq:A_2 f_II}) and (\ref{eq:E H existence f_II}), (b.2) is established as well. Therefore we have proved condition (b), i.e., $V \cap f_{\rm II}(V)\Big|_{\Sigma^2(y)}$ consists of four disjoint horizontal strips, as labeled by $h_1$, $h_2$, $h_3$, and $h_4$ in Fig.~\ref{fig:3D_doubly_folded_horseshoe_x_z_slice}. 
\end{proof}

\section{Topology III: horseshoes in four dimensions}\label{Type III}
In the preceding sections we have demonstrated two H\'{e}non-type maps in three dimensions, namely $f_{\rm I}$ and $f_{\rm II}$, where $f_{\rm I}$ folds $V$ once with a creasing along an expanding direction ($y$), and $f_{\rm II}$ folds $V$ twice with creases along independent expanding directions ($y$ and $x$). As explained in Fig.~\ref{fig:3D_doubly_folded_horseshoe}, because there are only three dimensions, the two folding operations in $f_{\rm II}$ share the same stacking direction ($z$), which results in four disjoint horizontal slabs in Fig.~\ref{fig:3D_doubly_folded_horseshoe}. A natural question is then what would happen if the dimensionality increases to four. In this section, to address this question, we consider a H\'{e}non-type map in four dimensions and introduce a new type of doubly-folded horseshoe which can exist only in dimensions $\geq 4$. Then we show that upon changing the parameters of the map, the doubly-folded horseshoe unfolds into a singly-folded horseshoe in four dimensions, which represents a reduction of topological entropy of the system. 

Consider the H\'{e}non-type map in four dimensions, namely $f_{\rm III}$
\begin{equation}\label{eq:4D doubly folded Henon}
\left( \begin{array}{ccc}
x_{n+1}\\
y_{n+1} \\
z_{n+1} \\
w_{n+1} \end{array} \right) = 
f_{\rm III} \left( \begin{array}{ccc}
x_{n}\\
y_{n} \\
z_{n}\\
w_{n} \end{array} \right) =
\left( \begin{array}{ccc}
a_0 - x^2_n - z_n + c(x_n - y_n)\\
a_1 - y^2_n - w_n - c(x_n - y_n)\\
x_n \\
y_n \end{array} \right)
\end{equation}
which is the coupled H\'{e}non map studied in Part I of this article \cite{Fujioka23}. The parameters here are $a_0$, $a_1$, and $c$. $a_0$ and $a_1$ control the rate of expansion in $x$ and $y$, respectively, and $c$ controls the coupling strength between the dynamics in the $(x,z)$-plane and the $(y,w)$-plane. 

The inverse of $f_{\rm III}$ is 
\begin{equation}\label{eq:4D doubly folded Henon inverse}
\left( \begin{array}{ccc}
x_{n-1}\\
y_{n-1} \\
z_{n-1} \\
w_{n-1} \end{array} \right) = 
f^{-1}_{\rm III} \left( \begin{array}{ccc}
x_{n}\\
y_{n} \\
z_{n}\\
w_{n} \end{array} \right) =
\left( \begin{array}{ccc}
z_n \\
w_n \\
a_0 - z^2_n - x_n + c(z_n - w_n) \\
a_1 - w^2_n - y_n - c(z_n - w_n)\end{array} \right)\ .
\end{equation}
Notice that by replacement of variables $(x,y,z,w) \mapsto (z,w,x,y)$, $f_{\rm III}$ is transformed into $f^{-1}_{\rm III}$. 

In Section. 2.2 of Part I, we have proposed two types of Anti-Integrable (AI) limits \cite{Aubry90} of Eq.~(\ref{eq:4D doubly folded Henon}), namely Type A and Type B, where Type A is an AI limit with four symbols (Eq.~(2.7) therein) and Type B is an AI limit with two symbols (Eq.~(2.9) therein). Type A is obtained by taking $a_0=a_1\to\infty$ while keeping $c$ fixed and finite. Intuitively speaking, it gives rise to infinite expansion rates within the $(x,z)$ and $(y,w)$ planes but only allows finite coupling between the two planes. Type B is obtained by taking $a_0=a_1\to\infty$ while keeping $c/\sqrt{a}=\gamma$ constant. Intuitively, it also gives rise to infinite expansion rates within the $(x,z)$ and $(y,w)$ planes but imposes a coupling strength $c$ proportional to $\sqrt{a}$. As we show next, the topologies of the horseshoes near these two AI limits are fundamentally different: Type A is a doubly-folded horseshoe topologically equivalent to a direct product between a pair of two-dimensional singly-folded horseshoes in the $(x,z)$ and $(y,w)$ planes, while Type B is a singly-folded horseshoe in four dimensions. Therefore, when changing the values of parameters $(a_0,a_1,c)$ from the neighborhood of Type A to the neighborhood of Type B, global bifurcations must happen that unfold the doubly-folded horseshoe into the singly-folded horseshoe. 
 
\subsection{Topology III-A: doubly folded horseshoe in four dimensions, with independent stacking directions}\label{Type III-A}

Topology III-A can only exist in dimensions $\geq 4$. Using our example here, it can be realized near the Type-A AI limit of Eq.~(\ref{eq:4D doubly folded Henon}). To better understand the Type-A AI limit, a simple case to start with is when $c=0$, i.e., zero coupling, for which $f_{\rm III}$ reduces to a direct-product between the two-dimensional H\'{e}non maps in the $(x,z)$-plane and the $(y,w)$-plane. Therefore when $c=0$ and $a_0$, $a_1$ values greater than the bound given by Devaney and Nitecki ($a_0$, $a_1 > 5+2\sqrt{5}$), $f_{\rm III}$ is identical to a direct-product between the Smale horseshoe maps in the $(x,z)$-plane and the $(y,w)$-plane, as illustrated by  Fig.~\ref{fig:4D_doubly_folded_horseshoe}.
\begin{figure}
        \centering
        \includegraphics[width=0.7\linewidth]{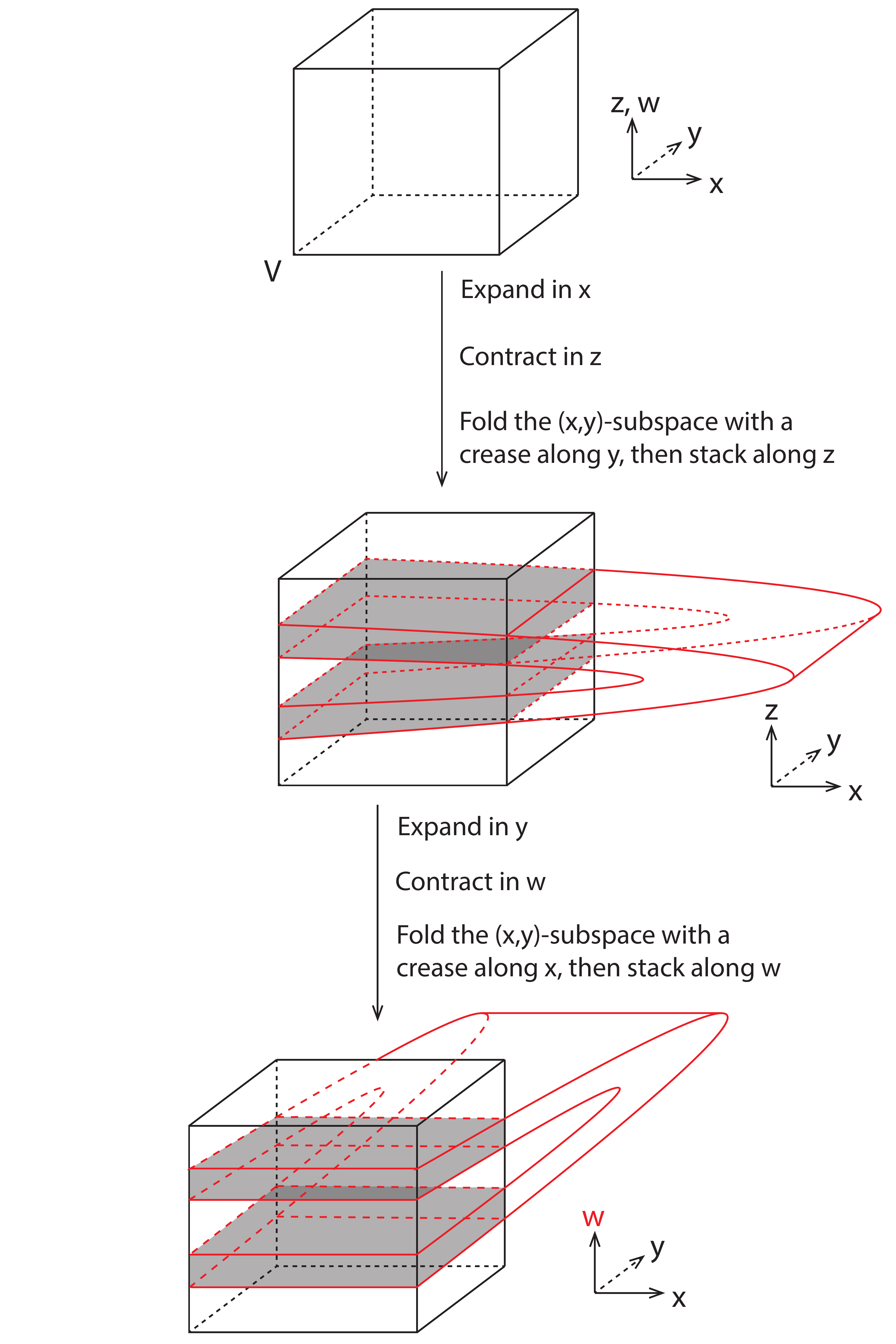} 
        \caption{(Schematic, color online) Topology III-A: doubly-folded horseshoe in four dimensions. There are two folding operations. The first operation folds the $(x,y)$-subspace with a crease along $y$ and stacks it along $z$. The second operation folds the $(x,y)$-subspace with a crease along $x$ and stacks it along $w$.  } \label{fig:4D_doubly_folded_horseshoe}
\end{figure}

Geometrically, the action of $f_{\rm III}$ on $V$ involves two steps: the first step expands $V$ in $x$, contracts it in $z$, folds it with a crease along $y$, and stacks it along $z$; the second step expands it in $y$, contracts it in $w$, folds it with a crease along $x$, and stacks it along $w$. Although $f_{\rm III}$ and $f_{\rm II}$ both involve two folding operations, the difference between them is critical: the two folding operations in $f_{\rm II}$ share the same stacking direction ($z$), while the two folding operations in $f_{\rm III}$ have independent stacking directions ($z$ and $w$). Therefore, unlike $V\cap f_{\rm II}(V)$ which is composed by four disjoint horizontal slabs, $V \cap f_{\rm III}(V)$ consists of four hypercylinders, each being a direct product of a horizontal strip in the $(x,z)$-plane and a horizontal strip in the $(y,w)$-plane. Equivalently speaking, when $c=0$, $V \cap f_{\rm III}(V)$ is a direct product between two horizontal strips in the $(x,z)$-plane and two horizontal strips in the $(y,w)$-plane, as demonstrated by Fig.~\ref{fig:4D_doubly_folded_horseshoe}.

When $c>0$, it is reasonable to expect that if $c$ is finite and $a_0,a_1\gg c>0$, i.e., when the expansion rates in the $(x,z)$ and $(y,w)$ planes (governed by $a_0$ and $a_1$, respectively) are much greater compared to the coupling strength $c$ between the two planes, the coupling could be neglected and the resulting topology of $V \cap f_{\rm III}(V)$ should be identical to the $c=0$ case. Therefore when $a_0,a_1\gg c>0$, we expect $V \cap f_{\rm III}(V)$ to be topologically equivalent to a direct product of a pair of two-dimensional Smale horseshoes, as illustrated by Fig.~\ref{fig:4D_doubly_folded_horseshoe}. We now give a concrete proof of this under some analytical bounds on the parameters $a_0$, $a_1$, and $c$. 

\begin{theorem}\label{4D Doubly folded horseshoe topology}
Let $A_0 = (a_0 + a_1)/2 \geq -1$ and $r= 2 \sqrt{2}(1+\sqrt{1+A_0})$. Let $V$ be a hypercube centered at the origin with side length $2r$, i.e.,
\begin{equation}\label{eq:V}
V = \left\lbrace (x,y,z,w) \Big| |x|,|y|,|z|,|w| \leq r \right\rbrace\ . 
\end{equation}  
Given the bounds on parameters
\begin{eqnarray}
& 0 < \frac{1}{4}c^2 + a_i - (c+2) r\ ,\ (i=0,1) \label{eq:4D Doubly folded horseshoe topology parameter bound 1} \\
& 0 \leq r^2 - 2(c+1) r - a_i\ ,\ (i=0,1) \label{eq:4D Doubly folded horseshoe topology parameter bound 2}
\end{eqnarray}
the intersection $V \cap f_{\rm III}(V)$ is homeomorphic to a direct product between two disjoint horizontal strips in the $(x,z)$-plane and two disjoint horizontal strips in the $(y,w)$-plane, i.e., it retains the direct-product structure shown by Fig.~\ref{fig:4D_doubly_folded_horseshoe}. 
\end{theorem}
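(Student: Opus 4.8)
The plan is to follow the template of the proof of Theorem~\ref{3D Doubly folded horseshoe topology}, but to exploit the fact that $f_{\rm III}$ now folds with two \emph{independent} stacking directions. First I would use the identity $f^{-1}_{\rm III}(f_{\rm III}(V))=V$ together with the explicit inverse Eq.~(\ref{eq:4D doubly folded Henon inverse}) to write $V\cap f_{\rm III}(V)$ as the solution set of six inequalities: the four box constraints $|x|,|y|,|z|,|w|\le r$, and the two nonlinear constraints $|a_0-z^2-x+c(z-w)|\le r$ and $|a_1-w^2-y-c(z-w)|\le r$. The crucial structural observation is that, in contrast with $f_{\rm II}$, each nonlinear constraint is only a \emph{parabola} (no quartic appears), the first coupling $(x,z)$ to the contracting variable $w$ and the second coupling $(y,w)$ to $z$, in both cases only through the linear term $c(z-w)$. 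Moreover $f_{\rm III}$ is invariant under $(x,y,z,w)\mapsto(y,x,w,z)$ with $a_0\leftrightarrow a_1$, which interchanges the two nonlinear constraints, so it suffices to analyse the $(x,z)$ constraint and obtain the $(y,w)$ one by symmetry.

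Next I would establish, for every fixed $w$ with $|w|\le r$, that the first nonlinear constraint together with $|x|,|z|\le r$ cuts out exactly two disjoint horizontal strips in the $(x,z)$-plane. Writing the bounding curves as $x=a_0-z^2+c(z-w)-s$ with $|s|\le r$ and completing the square in $z$ gives a parabola with vertex at $z=c/2$ and apex $x=\frac{1}{4}c^2+a_0-cw-s$. Minimising this apex over $w,s\in[-r,r]$ and requiring it to exceed $r$ reproduces exactly bound~(\ref{eq:4D Doubly folded horseshoe topology parameter bound 1}); thus bound~(\ref{eq:4D Doubly folded horseshoe topology parameter bound 1}) is precisely the statement that the apex of the innermost parabola lies to the right of the face $x=r$ for all admissible $w,s$ --- the four-dimensional analogue of condition (a.1), with the $\frac{1}{4}c^2$ arising from the completed square. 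Evaluating the outermost parabola at the corners $z=\pm r$, maximising over $w,s$, and requiring it to stay left of the far face $x=-r$ reproduces bound~(\ref{eq:4D Doubly folded horseshoe topology parameter bound 2}) --- the analogue of (a.2); the prescribed value of $r$ is what makes this corner condition tight. These two facts, being uniform in $w$, guarantee two disjoint horizontal strips in $(x,z)$ for every $|w|\le r$, and by the symmetry above two disjoint horizontal strips in $(y,w)$ for every $|z|\le r$.

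Finally I would assemble these slice-wise statements into the claimed direct-product structure, which is the genuinely new step and the main obstacle. The set has the form $\{(x,y,z,w):(x,z)\in P_{xz}(w),\ (y,w)\in P_{yw}(z)\}$, where the two strip-pairs $P_{xz}(w)$ and $P_{yw}(z)$ still depend on the complementary contracting coordinate, so the set is \emph{not} literally a product and one must produce a homeomorphism onto $\big(\text{two }(x,z)\text{-strips}\big)\times\big(\text{two }(y,w)\text{-strips}\big)$. I would remove the residual coupling in one of two equivalent ways. The first is a continuation argument in $c$: at $c=0$ the map degenerates to the direct product of two planar H\'{e}non horseshoes, so $V\cap f_{\rm III}(V)$ is exactly the product of Fig.~\ref{fig:4D_doubly_folded_horseshoe}; since the two bounds are monotone in $c$ and hence persist for every intermediate coupling in $[0,c]$, no tangency or merging of the bounding parabolas can occur along the path, and the family is carried by an ambient isotopy onto the $c=0$ product. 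The second is to apply the linear shear $(x,y,z,w)\mapsto(x-c(z-w),\,y+c(z-w),\,z,w)$, a global homeomorphism that removes the coupling from both nonlinear constraints and leaves it only on the box faces $|x|,|y|\le r$, where the uniform bounds render it topologically inessential. Either route shows that the four sign-combinations (which arm of each parabola) label four disjoint hypercylinders assembled as the required direct product. The hard part is precisely this decoupling --- ensuring that the $w$-dependence of the $(x,z)$-strips and the $z$-dependence of the $(y,w)$-strips can be straightened simultaneously --- and the uniformity obtained in the previous step, guaranteed by bounds~(\ref{eq:4D Doubly folded horseshoe topology parameter bound 1}) and (\ref{eq:4D Doubly folded horseshoe topology parameter bound 2}), is exactly what makes it possible.
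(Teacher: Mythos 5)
Your proposal is correct, and its core --- the slice-wise analysis --- is essentially identical to the paper's own proof: the paper likewise extracts $V\cap f_{\rm III}(V)$ from the explicit inverse, works in the slices $\Sigma^2(y,w)$ and $\Sigma^2(x,z)$, identifies bound~(\ref{eq:4D Doubly folded horseshoe topology parameter bound 1}) with the requirement that the vertex $\left(\frac{c^2}{4}+a_0-(c+1)r,\ \frac{c}{2}\right)$ of the innermost parabola lie to the right of the face $x=r$, identifies bound~(\ref{eq:4D Doubly folded horseshoe topology parameter bound 2}) with $\Gamma^{\max}_x(z=\pm r)\le -r$ (two crossings of the face $x=-r$), and disposes of the second nonlinear constraint by the same $(x,z)\leftrightarrow(y,w)$, $a_0\leftrightarrow a_1$ symmetry you invoke; your reconstruction of both bounds is exact. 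Where you genuinely depart from the paper is the final step: the paper stops at ``two strips in every $(x,z)$-slice and two strips in every $(y,w)$-slice'' and simply asserts the direct-product conclusion, whereas you correctly observe that the strip pairs still depend on the complementary contracting coordinate ($w$ and $z$, respectively), so the set is not literally a product, and you sketch two mechanisms to build the homeomorphism (continuation in $c$ to the uncoupled case, or the decoupling shear $(x,y,z,w)\mapsto(x-c(z-w),\,y+c(z-w),\,z,w)$). This is strictly more complete than the paper's treatment, which leaves that assembly implicit; what the paper's version buys is brevity. Two caveats on your sketch: first, the claimed monotonicity of bound~(\ref{eq:4D Doubly folded horseshoe topology parameter bound 1}) in $c$ holds because $\frac14 c'^2-(c'+2)r$ is decreasing only on $c'\in[0,2r]$, so you need to note that the hypotheses (with $A_0\ge-1$) do in fact force $c<2r$; second, your aside that the prescribed $r$ makes the corner condition ``tight'' is not right --- equality $r^2-2r-a=0$ holds for the Devaney--Nitecki radius $r=1+\sqrt{1+a}$, whereas the paper's $r=2\sqrt{2}(1+\sqrt{1+A_0})$ leaves slack in bound~(\ref{eq:4D Doubly folded horseshoe topology parameter bound 2}).
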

\begin{remark}
Note that the definitions of $A_0$ and $r$ are the same as in Part I of this article \cite{Fujioka23}, and the bounds provided by Eqs.~(\ref{eq:4D Doubly folded horseshoe topology parameter bound 1}) and (\ref{eq:4D Doubly folded horseshoe topology parameter bound 2}) are identical to Eqs. (3.12) and (3.13) of Part I as well. 
\end{remark}
\begin{proof}
Let $\Sigma^2(y,w)$ be the $(x,z)$-plane parameterized by $(y,w)$, i.e.,
\begin{equation}\label{eq: x z slice}
\Sigma^2(y,w) = \left\lbrace (x',y',z',w')\ \middle|\ x',z'\in \mathbb{R},\ y'=y,\ w'=w \right\rbrace
\end{equation}
and similarly $\Sigma^2(x,z)$ the $(y,w)$-plane parameterized by $(x,z)$, i.e.,
\begin{equation}\label{eq: y w slice}
\Sigma^2(x,z) = \left\lbrace (x',y',z',w')\ \middle|\ y',w'\in \mathbb{R},\ x'=x,\ z'=z \right\rbrace \ .
\end{equation}
Moreover, let $V \cap f_{\rm III}(V)|_{ \Sigma^2(y,w)}$ denote the restriction of $V \cap f_{\rm III}(V)\cap \Sigma^2(y,w)$ on $\Sigma^2(y,w)$, i.e., the $(x,z)$-slice of $V \cap f_{\rm III}(V)$. Similarly, let $V \cap f_{\rm III}(V)|_{ \Sigma^2(x,z)}$ denote the restriction of $V\cap f_{\rm III}(V)\cap \Sigma^2(x,z)$ on $\Sigma^2(x,z)$, i.e., the $(y,w)$-slice of $V\cap f_{\rm III}(V)$.

Using the identity relation $f_{\rm III}^{-1}(f_{\rm III}(V))=V$ we obtain an analytic expression for $f_{\rm III}(V)$
\begin{equation}\label{eq:f III V}
f_{\rm III}(V) = 
\left\lbrace (x,y,z,w) \middle| \begin{array}{ccc}
|z| \leq r \\
|w| \leq r \\
|a_0 - z^2 - x + c(z-w)| \leq r \\
|a_1 - w^2 - y - c(z-w)| \leq r \end{array} \right\rbrace \ .
\end{equation}

The expression for $V \cap f_{\rm III}(V)$ is then easily obtained by imposing the additional constraints of $|x|,|y| \leq r$:
\begin{equation}\label{eq:V intersect f III V}
V \cap f_{\rm III}(V) = 
\left\lbrace (x,y,z,w) \middle| \begin{array}{ccc}
|x|,|y|,|z|,|w| \leq r \\
|a_0 - z^2 - x + c(z-w)| \leq r \\
|a_1 - w^2 - y - c(z-w)| \leq r \end{array} \right\rbrace \ .
\end{equation}

To prove this theorem, it is sufficient to establish the following two conditions:
\begin{itemize}
\item[(a)] $V \cap f_{\rm III}(V)|_{ \Sigma^2(y,w)}$ consists of two disjoint horizontal strips for all $|y|,|w| \leq r$.
\item[(b)] $V \cap f_{\rm III}(V)|_{ \Sigma^2(x,z)}$ consists of two disjoint horizontal strips for all $|x|,|z| \leq r$.
\end{itemize}
We now establish these two conditions individually. 

Condition (a): the second row of Eq.~(\ref{eq:V intersect f III V}) is equivalent to 
\begin{equation}\label{eq:x z parameter s}
-s = a_0 - z^2 - x + c(z-w), \ \textrm{where }|s|\leq r
\end{equation}
and thus 
\begin{equation}\label{eq:x z equation f III}
x=-z^2 + cz + a_0 + s-cw, \ \textrm{where }|s|\leq r\ .
\end{equation}
Correspondingly, define a family of parabolas 
\begin{equation}\label{eq:Gamma x in terms of z w s}
\Gamma_x(z,w,s)=-z^2 + cz + a_0 + s-cw\ ,
\end{equation}
which is viewed as a quadratic function of $z$ with parameters $w$ and $s$ bounded by $|w|,|s|\leq r$.
 
With the help of Eqs.~(\ref{eq:V intersect f III V}) and (\ref{eq:Gamma x in terms of z w s}), $V \cap f_{\rm III}(V)|_{ \Sigma^2(y,w)}$ (for which $|y|,|w| \leq r$) can be expressed as
\begin{equation}\label{eq:V intersect f III V x z slice}
V \cap f_{\rm III}(V)\Big|_{ \Sigma^2(y,w)} = \left\lbrace (x,z) \middle| \begin{array}{ccc}
|x|,|z|,|s| \leq r \\
x=\Gamma_x(z,w,s) \end{array} \right\rbrace \ ,
\end{equation}
as illustrated schematically by Fig.~\ref{fig:4D_doubly_folded_horseshoe_x_z}.

\begin{figure}
        \centering
        \includegraphics[width=0.6\linewidth]{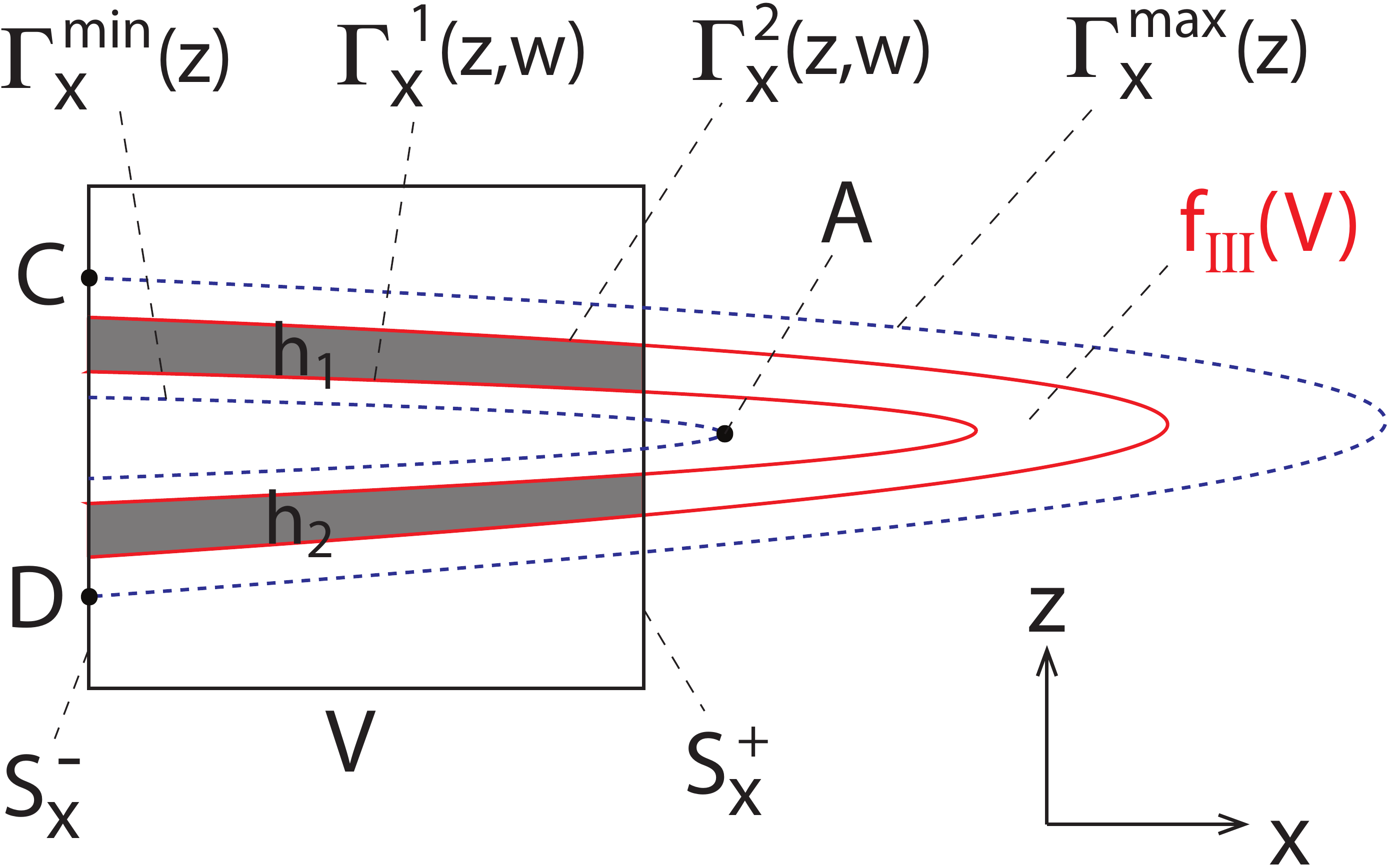} 
        \caption{(Schematic, color online) Visualization of $V \cap f_{\rm III}(V)|_{ \Sigma^2(y,w)}$, which consists of two disjoint horizontal strips $h_1$ and $h_2$.  } \label{fig:4D_doubly_folded_horseshoe_x_z}
\end{figure}

Let $\Gamma^1_x(z,w)$ and $\Gamma^2_x(z,w)$ be parabolas in $\Sigma^2(y,w)$ obtained by setting the $s$ parameter in Eq.~(\ref{eq:Gamma x in terms of z w s}) to $-r$ and $r$, respectively: 
\begin{eqnarray}
& \Gamma^1_x(z,w) =\Gamma_x(z,w,s)|_{s=-r} = -z^2 +cz + a_0 -r -cw  \label{eq:Gamma x 1} \\
& \Gamma^2_x(z,w) =\Gamma_x(z,w,s)|_{s=r} = -z^2 +cz + a_0 +r -cw  \label{eq:Gamma x 2}\ .
\end{eqnarray}
Furthermore, let Let $\Gamma^{\min}_x(z)$ and $\Gamma^{\max}_x(z)$ be parabolas in $\Sigma^2(y,w)$ obtained by setting $(s,w)=(-r,r)$ and $(s,w)=(r,-r)$, respectively:
\begin{eqnarray}
& \Gamma^{\min}_x(z) = \Gamma_x(z,w,s)|_{(s,w)=(-r,r)} = -z^2 +cz + a_0 - (c+1)r  \label{eq:Gamma x min} \\
& \Gamma^{\max}_x(z) = \Gamma_x(z,w,s)|_{(s,w)=(r,-r)} = -z^2 +cz + a_0 + (c+1)r  \label{eq:Gamma x max}\ .
\end{eqnarray}
Notice that the $w$-dependence is removed in the expressions of $\Gamma^{\min}_x$ and $\Gamma^{\max}_x$. Since $|w|\leq r$, $\Gamma^{\min}_x(z)$ and $\Gamma^{\max}_x(z)$ provide uniform lower and upper bounds for $\Gamma^1_x(z,w)$ and $\Gamma^2_x(z,w)$:
\begin{equation}\label{eq:Uniform bound for parabolas f III V x z}
\Gamma^{\min}_x(z) \leq \Gamma^1_x(z,w) \leq \Gamma_x(z,w,s) \leq \Gamma^2_x(z,w) \leq \Gamma^{\max}_x(z)\ .
\end{equation}
Geometrically, $\Gamma^1_x(z,w)$ and $\Gamma^2_x(z,w)$ are the left and right boundaries of $f_{\rm III}(V)$, and they must exist in the gap between $\Gamma^{\min}_x(z)$ and $\Gamma^{\max}_x(z)$, as shown by Fig.~\ref{fig:4D_doubly_folded_horseshoe_x_z}. 

Let $S^+_x$ and $S^-_x$ be the right and left boundaries of $V$, respectively
\begin{eqnarray}
& S^+_x = \left\lbrace (x,z) \middle| x=r, |z| \leq r  \right\rbrace \label{eq:S x +} \\
& S^-_x = \left\lbrace (x,z) \middle| x=-r, |z| \leq r  \right\rbrace  \label{eq:S x -} \ .
\end{eqnarray}

It is straightforward that condition (a) is equivalent to the following two conditions:
\begin{enumerate}
\item[(a.1)] The vertex of $\Gamma^{\min}_x(z)$, as labeled by $A$ in Fig.~\ref{fig:4D_doubly_folded_horseshoe_x_z}, is located on the right-hand side of $S^+_x$.
\item[(a.2)] $\Gamma^{\max}_x(z)$ intersects $S^-_x$ at two points (e.g., points $C$ and $D$ in Fig.~\ref{fig:4D_doubly_folded_horseshoe_x_z})\ .
\end{enumerate}
When both are satisfied, $f_{\rm III}(V)$ must intersect $V$ at two disjoint horizontal strips, therefore give rise two a topological binary horseshoe in every $\Sigma^2(y,w)$ slice. We now show that conditions (a.1) and (a.2) are satisfied given the bounds of Eqs.~(\ref{eq:4D Doubly folded horseshoe topology parameter bound 1}) and (\ref{eq:4D Doubly folded horseshoe topology parameter bound 2}). 

Condition (a.1): the vertex $A$ of $\Gamma^{\min}_x(z)$ is computed as
\begin{equation}\label{eq:Vertex Gamma x min}
(x_A,z_A) = \left( \frac{c^2}{4} + a_0 - (c+1)r, \frac{c}{2} \right)\ .
\end{equation}
From Eq.~(\ref{eq:4D Doubly folded horseshoe topology parameter bound 1}) we immediately know that $x_A >r$, i.e., $A$ is located on the right-hand side of $S^+_x$. 

Condition (a.2): this is equivalent to the condition that $\Gamma^{\max}_x(z=\pm r) \leq r$, which can be easily deduced from Eq.~(\ref{eq:4D Doubly folded horseshoe topology parameter bound 2}).

Therefore, Eqs.~(\ref{eq:4D Doubly folded horseshoe topology parameter bound 1}) and (\ref{eq:4D Doubly folded horseshoe topology parameter bound 2}) are sufficient for the existence of two disjoint horizontal strips, which gives rise to a topological binary horseshoe in every $\Sigma^2(y,w)$ slice. Condition (a) is established. 

Condition (b): due to the symmetry of Eq.~(\ref{eq:4D doubly folded Henon}), by interchanging $(x,z)$ with $(y,w)$, our proof for condition (a) immediately applies to condition (b). Consequently, $V \cap f_{\rm III}(V)$ is topologically equivalent to a direct product between a pair of two-dimensional Smale horseshoes, which is a doubly-folded horseshoe in four dimensions as shown by Fig.~\ref{fig:4D_doubly_folded_horseshoe}. 
\end{proof}

\subsection{Topology III-B: singly folded horseshoe in four dimensions}\label{Type III-B}

By taking the limit $a_0=a_1=a \to \infty$ while keeping $c/\sqrt{a}=\gamma=\textrm{constant}$, we obtain the Type-B AI limit. Equivalently speaking, Type-B can be attained by allowing the coupling strength $c$ in Type-A to approach infinity while keeping it in constant proportion to $\sqrt{a}$. As we will show next, the topology of Type-B AI limit is simpler compared to Type-A as it involves only one folding.

As shown by Part I of this paper, Type-B AI limit is more conveniently studied by performing the change of coordinates 
\begin{eqnarray}
	\left(\begin{array}{c}
	X\\
	Y\\
	Z\\
	W
	\end{array}\right)
	=\frac{1}{2}\left(\begin{array}{c}
	x+y\\
	x-y\\
	z+w\\
	z-w\\
	\end{array}\right), 
	\label{eq:change_of_coordinate}
\end{eqnarray}
under which Eq.~(\ref{eq:4D doubly folded Henon}) can be rewritten as
\begin{eqnarray}
\label{eq:map_tranformed}
	\left(\begin{array}{c}
    X_{n+1}\\
    Y_{n+1}\\
    Z_{n+1}\\
    W_{n+1}
  \end{array}\right)
  =F\left(\begin{array}{c}
    X_n\\
    Y_n\\
    Z_n\\
    W_n
  \end{array}\right)
  =\left(\begin{array}{c}
    A_0-(X_n^2+Y_n^2)-Z_n\\
    A_1-2X_nY_n-W_n+2cY_n\\
    X_n\\
    Y_n
  \end{array}\right)
\end{eqnarray}
where $$\displaystyle A_0=\frac{a_0+a_1}{2}, ~~~~~A_1=\frac{a_0-a_1}{2}. $$
The inverse map $F^{-1}$ is given by
\begin{eqnarray}
\label{eq:map_tranformed_inverse}
  \left(\begin{array}{c}
    X_{n-1}\\
    Y_{n-1}\\
    Z_{n-1}\\
    W_{n-1}
  \end{array}\right)
  =F^{-1}\left(\begin{array}{c}
    X_n\\
    Y_n\\
    Z_n\\
    W_n
  \end{array}\right)
  =\left(\begin{array}{c}
    Z_n\\
    W_n\\
    A_0-(Z_n^2+W_n^2)-X_n\\
   A_1-2Z_nW_n-Y_n+2cW_n\\
  \end{array}\right) .
\end{eqnarray}

\begin{theorem}\label{4D Singly folded horseshoe topology}
Let $R= 1 + \sqrt{1+ A_0}$, and $V_F$ be a hypercube centered at the origin with side length $2R$, i.e., \begin{equation}\label{eq:V F}
V_F = \left\lbrace (X,Y,Z,W) \Big| |X|,|Y|,|Z|,|W| \leq R \right\rbrace\ . 
\end{equation}  
If the parameters satisfy the following conditions:
		\begin{eqnarray}
		  \label{eq:typeB_sufficient1}
			&A_1 \le R<c,\\
		  \label{eq:typeB_sufficient2}
			&R<A_0-(W^*)^2-R,\\
		  \label{eq:typeB_sufficient3}
			&W^*\le R
		\end{eqnarray}
where $W^*=\max\Bigl(\displaystyle \Bigl|\frac{2R-A_1}{2(c-R)}\Bigr|$, $\Bigl|\displaystyle\frac{-2R-A_1}{2(c-R)}\Bigr|\Bigr)$, then $V_F \cap F(V_F)$ is homeomorphic to a direct product between two disjoint horizontal strips in the $(X,Z)$ plane and a single horizontal strip in the $(Y,W)$ plane.
\end{theorem}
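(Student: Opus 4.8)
The plan is to follow the slice-wise strategy already used in the proofs of Theorems~\ref{3D Doubly folded horseshoe topology} and~\ref{4D Doubly folded horseshoe topology}. First I would use the identity $F^{-1}(F(V_F))=V_F$ together with Eq.~(\ref{eq:map_tranformed_inverse}) to obtain the analytic description
\[
V_F \cap F(V_F) = \left\lbrace (X,Y,Z,W) \;\middle|\; \begin{array}{c} |X|,|Y|,|Z|,|W| \leq R \\ |A_0 - (Z^2 + W^2) - X| \leq R \\ |A_1 - 2ZW - Y + 2cW| \leq R \end{array} \right\rbrace\ .
\]
The theorem then reduces to two slice-wise claims: (a) in every $(X,Z)$-plane $\Sigma^2(Y,W)$ the set restricts to two disjoint horizontal strips, and (b) in every $(Y,W)$-plane $\Sigma^2(X,Z)$ it restricts to a single horizontal strip. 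The decisive structural difference from Topology III-A is that the fourth row above is \emph{linear} in $W$, since the transformed map~(\ref{eq:map_tranformed}) carries no $W^2$ term in its $Y$-component; this is precisely what collapses the $(Y,W)$-slices from two strips down to one.

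I would establish (b) first, because it also supplies the sharpened range of $W$ needed for (a). Fixing $(X,Z)$ with $|X|,|Z|\leq R$ and writing the fourth constraint as $Y = A_1 + 2(c-Z)W + s'$ with $|s'|\leq R$, the condition $R<c$ in Eq.~(\ref{eq:typeB_sufficient1}) guarantees $c-Z>0$, so solving for $W$ yields a single band $u_1(Y)\leq W \leq u_2(Y)$ bounded by the two parallel lines $u_{1,2}(Y)=(Y-A_1\mp R)/\bigl(2(c-Z)\bigr)$. I would then verify that $W^*$, as defined in the statement, is exactly the supremum of $|u_{1,2}(Y)|$ over $|Y|\leq R$ and $|Z|\leq R$, the worst case occurring at $Z=R$ (where $c-Z$ is smallest, and where $A_1\leq R$ fixes the relevant signs so that both endpoints are extremal). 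Hence Eq.~(\ref{eq:typeB_sufficient3}), $W^*\leq R$, forces $-R \leq u_1(Y) < u_2(Y) \leq R$ throughout $|Y|\leq R$, making the band a single horizontal strip crossing $V_F$ fully in $Y$, which proves (b).

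For (a) I would first extract from the fourth constraint the key restriction that \emph{every} point of $V_F\cap F(V_F)$ satisfies $|W|\leq W^*$: combining $|A_1 + 2(c-Z)W - Y|\leq R$ with $|Y|\leq R$ gives $|A_1 + 2(c-Z)W|\leq 2R$, and dividing by $2(c-Z)\geq 2(c-R)>0$ yields $|W|\leq W^*$. Working now within each $\Sigma^2(Y,W)$ with this improved bound (slices with $|W|>W^*$ being empty), the third constraint reads $X = -Z^2 + (A_0 - W^2) + s$ with $|s|\leq R$, a family of leftward parabolas bounded uniformly by $\Gamma^{\min}_X(Z) = -Z^2 + A_0 - (W^*)^2 - R$ and $\Gamma^{\max}_X(Z) = -Z^2 + A_0 + R$. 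The vertex of $\Gamma^{\min}_X$ sits at $X = A_0 - (W^*)^2 - R$, which exceeds $R$ precisely by Eq.~(\ref{eq:typeB_sufficient2}); and since $R = 1+\sqrt{1+A_0}$ implies $A_0 = R^2 - 2R$, one finds $\Gamma^{\max}_X(\pm R) = A_0 - R^2 + R = -R$, so $\Gamma^{\max}_X$ meets the left edge $S^-_X$ at the two corners $(\mp R,\pm R)$. Exactly as in the proof of Theorem~\ref{4D Doubly folded horseshoe topology}, these two facts force two disjoint horizontal strips in every such slice, establishing (a).

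The main obstacle I anticipate is the coupling between (a) and (b) through $W^*$: the naive range $|W|\leq R$ is too weak for condition~(a), since $A_0 - R^2 - R < R$, so the argument only closes once the tighter range $|W|\leq W^*$ has been extracted from the linear $W$-constraint, and this extraction must be made uniform in the slice parameter $Z$ (hence the worst-case value $c-R$ in the definition of $W^*$ and the role of $A_1\leq R$ in pinning the extremal signs). With (a) and (b) in hand the slice-wise counts are constant, two strips in every $(X,Z)$-slice and one in every $(Y,W)$-slice, and the same reasoning used to conclude Theorem~\ref{4D Doubly folded horseshoe topology} identifies $V_F\cap F(V_F)$ as homeomorphic to the product of the two $(X,Z)$-strips with the single $(Y,W)$-strip, i.e.\ the claimed singly-folded four-dimensional horseshoe.
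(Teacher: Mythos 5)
Your proposal is correct and follows essentially the same route as the paper's proof: the same analytic expression for $V_F\cap F(V_F)$, the same slice decomposition, the single-strip analysis in the $(Y,W)$-slices via the positive-slope lines $(Y-A_1\mp R)/\bigl(2(c-Z)\bigr)$ with $c>R$, the extraction of the refined bound $|W|\leq W^*$, and then the two-strip parabola argument in the $(X,Z)$-slices using the vertex condition from Eq.~(\ref{eq:typeB_sufficient2}) and the corner intersections from $A_0=R^2-2R$. Your observation that the naive bound $|W|\leq R$ would fail and that $W^*$ must be fed from the linear constraint into the parabola bound is exactly the coupling the paper's proof exploits.
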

\begin{proof}
Let $\Sigma^2(Y,W)$ be the $(X,Z)$-plane parameterized by $(Y,W)$:
\begin{equation}\label{eq: X Z slice}
\Sigma^2(Y,W) = \left\lbrace (X',Y',Z',W')\ \middle|\ X',Z'\in \mathbb{R},\ Y'=Y,\ W'=W \right\rbrace
\end{equation}
and similarly $\Sigma^2(X,Z)$ the $(Y,W)$-plane parameterized by $(X,Z)$:
\begin{equation}\label{eq: Y W slice}
\Sigma^2(X,Z) = \left\lbrace (X',Y',Z',W')\ \middle|\ Y',W'\in \mathbb{R},\ X'=X,\ Z'=Z \right\rbrace \ .
\end{equation}
Moreover, let $V_F \cap F(V_F)|_{ \Sigma^2(Y,W)}$ denote the restriction of $V_F \cap F(V_F)\cap \Sigma^2(Y,W)$ on $\Sigma^2(Y,W)$, i.e., the $(X,Z)$-slice of $V_F \cap F(V_F)$. Similarly, let $V_F \cap F(V_F)|_{ \Sigma^2(X,Z)}$ denote the restriction of $V_F\cap F(V_F)\cap \Sigma^2(X,Z)$ on $\Sigma^2(X,Z)$, i.e., the $(Y,W)$-slice of $V_F\cap F(V_F)$.

Using the identity relation $F^{-1}(F(V_F)) = V_F$, we obtain an analytic expression for $F(V_F)$
\begin{equation}\label{eq:F V_F}
F(V_F) = 
\left\lbrace (X,Y,Z,W) \middle| \begin{array}{ccc}
|Z|,|W| \leq R \\
|A_0 - (Z^2+W^2) - X | \leq R \\
|A_1 - 2ZW - Y + 2cW| \leq R \end{array} \right\rbrace \ ,
\end{equation}
therefore the expression for $V_F \cap F(V_F)$ is obtained by imposing the trivial constraints on $X$ and $Y$:
\begin{equation}\label{eq:V_F intersect F V_F}
V_F \cap F(V_F) = 
\left\lbrace (X,Y,Z,W) \middle| \begin{array}{ccc}
|X|,|Y|,|Z|,|W| \leq R \\
|A_0 - (Z^2+W^2) - X | \leq R \\
|A_1 - 2ZW - Y + 2cW| \leq R \end{array} \right\rbrace \ .
\end{equation}

To prove this theorem, it is sufficient to establish the following two conditions:
\begin{itemize}
\item[(a)]  $V_F \cap F(V_F)|_{ \Sigma^2(X,Z)}$ consists of a single horizontal strip for all $|X|,|Z|\leq R$.
\item[(b)] $V_F \cap F(V_F)|_{ \Sigma^2(Y,W)}$ consists of two disjoint horizontal strips for all $|Y|,|W|\leq R$.
\end{itemize}
We now proceed to establish conditions (a)  and (b) individually. 

Condition (a): by restricting Eq.~(\ref{eq:V_F intersect F V_F}) to a particular $\Sigma^2(X,Z)$ plane for which $|X|,|Z| \leq R$, we obtain the expression for $V_F \cap F(V_F)|_{ \Sigma^2(X,Z)}$:
\begin{equation}\label{eq:Y W slice V_F intersect F(V_F) preliminary}
V_F \cap F(V_F) \Big|_{ \Sigma^2(X,Z)} = 
\left\lbrace (Y,W) \middle| \begin{array}{ccc}
|Y|,|W| \leq R \\
|A_1 - 2ZW - Y + 2cW| \leq R \end{array} \right\rbrace \ .
\end{equation}
The second row of Eq.~(\ref{eq:Y W slice V_F intersect F(V_F) preliminary}) is equivalent to 
\begin{equation}\label{eq:Y W relation}
Y = 2(c-Z)W + A_1 + s
\end{equation}
where $s$ is a parameter within bound $|s| \leq R$. Thus Eq.~(\ref{eq:Y W slice V_F intersect F(V_F) preliminary}) can be rewritten as
\begin{equation}\label{eq:Y W slice V_F intersect F(V_F)}
V_F \cap F(V_F) \Big|_{ \Sigma^2(X,Z)} = 
\left\lbrace (Y,W) \middle| \begin{array}{ccc}
|Y|,|W|,|s| \leq R \\
Y = 2(c-Z)W + A_1 + s \end{array} \right\rbrace \ .
\end{equation}
From Eq.~(\ref{eq:Y W slice V_F intersect F(V_F)}) we solve for $W$:
\begin{equation}\label{eq:W in terms of Y Z s}
W = \Gamma_W(Y,Z,s) \equiv \frac{Y - A_1 - s}{2(c-Z)}\ .
\end{equation}
Within each $\Sigma^2(X,Z)$ plane, $\Gamma_W(Y,Z,s)$ is a function of $Y$ with fixed parameter $Z$ and varying parameter $s$. From Eq.~(\ref{eq:typeB_sufficient1}) we know $c>R$, since $|Z| \leq R$, it is obvious that the denominator in Eq.~(\ref{eq:W in terms of Y Z s}) is always positive. Therefore $\Gamma_W$ is a straight-line parameterized by $s$ with positive slope, with uniform lower and upper bounds $\Gamma_W^1 (Y,Z)$ and $\Gamma_W^2 (Y,Z)$, respectively, where
\begin{eqnarray}
&&\Gamma_W^1 (Y,Z) = \Gamma_W(Y,Z,s)|_{s=R} = \frac{Y - A_1 - R}{2(c-Z)} \\
&&\Gamma_W^2 (Y,Z) = \Gamma_W(Y,Z,s)|_{s=-R} = \frac{Y - A_1 + R}{2(c-Z)} \ . 
\end{eqnarray} 
Furthermore, the maximum and minimum of $W$ under all possible $(Y,Z,s)$ values, denoted by $W^{\max}$ and $W^{\min}$ respectively, are
\begin{eqnarray}
&&W^{\rm max} = \Gamma_W(Y,Z,s)|_{(Y,Z,s)=(R,R,-R)} = \frac{2R-A_1}{2(c-R)}\\
&&W^{\rm min} = \Gamma_W(Y,Z,s)|_{(Y,Z,s)=(-R,R,R)} =\frac{-2R-A_1}{2(c-R)}\ .
\end{eqnarray} 
$W^{\min}$ and $W^{\max}$ provide uniform upper and lower bounds for $\Gamma_W^1 (Y,Z)$ and $\Gamma_W^2 (Y,Z)$, respectively, in every $\Sigma^2(X,Z)$ plane:
\begin{equation}\label{eq:Y W uniform bounds}
W^{\min} \leq \Gamma_W^1 (Y,Z) < \Gamma_W^2 (Y,Z) \leq W^{\max},
\end{equation}
as depicted in Fig.~\ref{fig:4D_singly_folded_horseshoe_Y_W}.  
\begin{figure}
        \centering
        \includegraphics[width=0.6\linewidth]{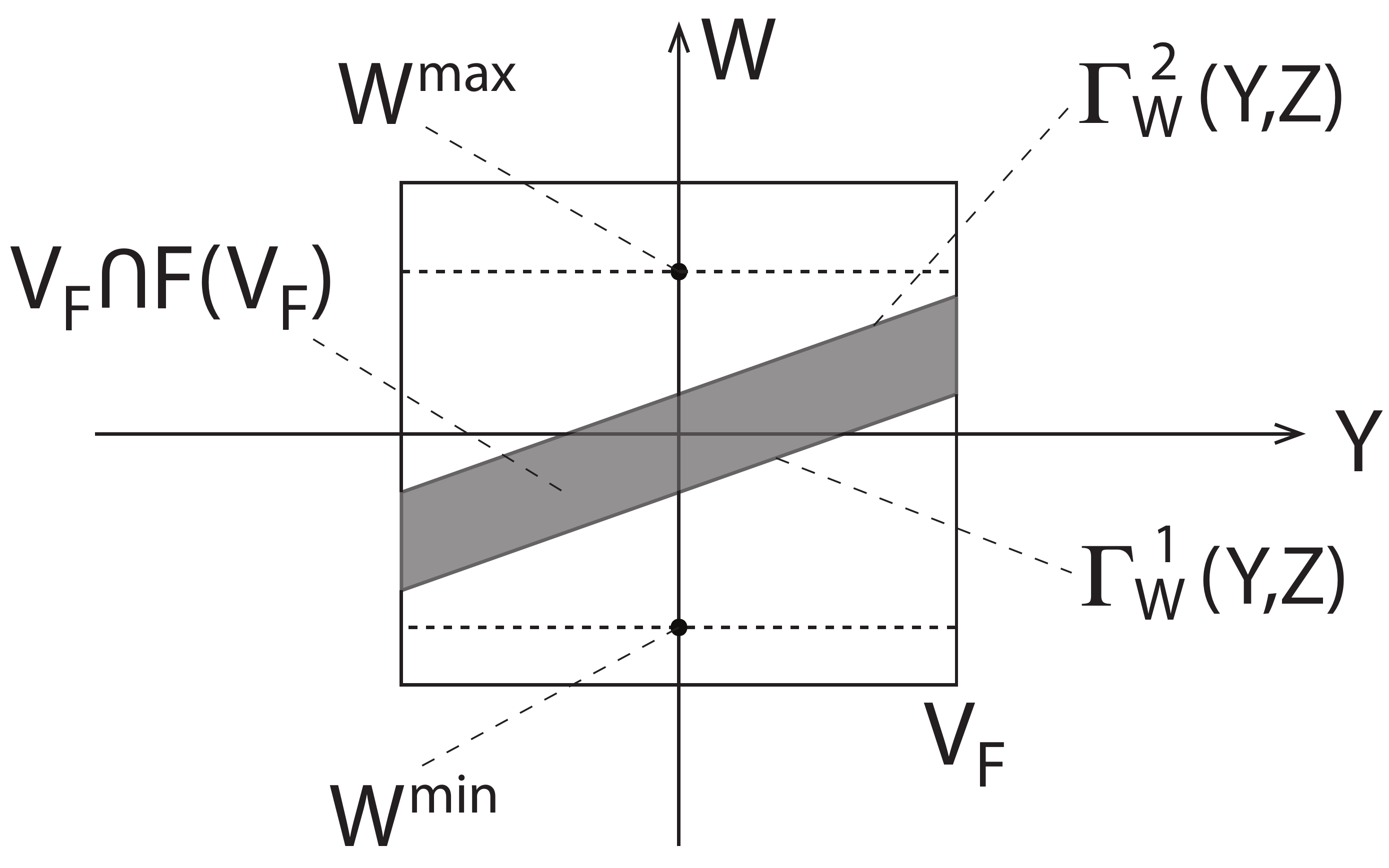} 
        \caption{(Schematic) When viewed in every $\Sigma^2(X,Z)$-slice, $V_F \cap F(V_F)$ is the gap between $\Gamma_W^1(Y,Z)$ and $\Gamma_W^2(Y,Z)$. Furthermore, $\Gamma_W^2(Y,Z)$ is bounded from above by $W^{\max}\leq W^* \leq R$, and $\Gamma_W^1(Y,Z)$ is bounded from below by $W^{\min} \geq -W^* \geq -R$. Therefore, $V_F \cap F(V_F)$ must be a horizontal strip which intersects $V_F$ fully in the $Y$-direction, and the width of $V_F \cap F(V_F)$ must be strictly smaller than $2R$ when measured in the $W$-direction.   } \label{fig:4D_singly_folded_horseshoe_Y_W}
\end{figure}
Since Eq.~(\ref{eq:typeB_sufficient3}) already guarantees that $W^{*}=\max \left( |W^{\min}|,|W^{\max}| \right) \leq R$, Eq.~(\ref{eq:Y W uniform bounds}) can be further bounded by
\begin{equation}\label{eq:Y W uniform bounds further}
-R \leq -W^{*} \leq W^{\min} \leq \Gamma_W^1 (Y,Z) < \Gamma_W^2 (Y,Z) \leq W^{\max} \leq W^{*} \leq R \ .
\end{equation}
Since $V_F \cap F(V_F)|_{\Sigma^2(X,Z)}$ is the gap between $\Gamma_W^1(Y,Z)$ and $\Gamma_W^2(Y,Z)$, Eq.~(\ref{eq:Y W uniform bounds further}) indicates that this gap is a horizontal strip which intersects $V_F$ fully in the $Y$-direction, and the width of $V_F \cap F(V_F)$ must be strictly smaller than $2R$ when measured in the $W$-direction, as shown by Fig.~\ref{fig:4D_singly_folded_horseshoe_Y_W}. Thus condition (a) is established.  

Condition (b): by constraining Eq.~(\ref{eq:V_F intersect F V_F}) to a particular $\Sigma^2(Y,W)$ plane for which $|Y|,|W| \leq R$, we obtain the expression for $V_F \cap F(V_F)|_{ \Sigma^2(Y,W)}$:
\begin{equation}\label{eq:X Z slice V_F intersect F(V_F) preliminary}
V_F \cap F(V_F) \Big|_{ \Sigma^2(Y,W)} = 
\left\lbrace (X,Z) \middle| \begin{array}{ccc}
|X|,|Z| \leq R \\
|A_0 -(Z^2+W^2) -X | \leq R \end{array} \right\rbrace \ .
\end{equation}
The second row of Eq.~(\ref{eq:X Z slice V_F intersect F(V_F) preliminary}) is equivalent to 
\begin{equation}\label{eq:X Z relation}
X = \Gamma_X(Z,W,s) \equiv -Z^2 - W^2 + A_0 + s\ .
\end{equation}
Within each particular $\Sigma^2(Y,W)$ plane, $\Gamma_X(Z,W,s)$ is a quadratic function of $Z$ with fixed parameter $W$ and varying parameter $s$ ($|s| \leq R$). Eq.~(\ref{eq:X Z slice V_F intersect F(V_F) preliminary}) can thus be rewritten as
\begin{equation}\label{eq:X Z slice V_F intersect F(V_F)}
V_F \cap F(V_F) \Big|_{ \Sigma^2(Y,W)} = 
\left\lbrace (X,Z) \middle| \begin{array}{ccc}
|X|,|Z|,|s| \leq R \\
X=\Gamma_X(Z,W,s)  \end{array} \right\rbrace \ .
\end{equation}

\begin{figure}
        \centering
        \includegraphics[width=0.6\linewidth]{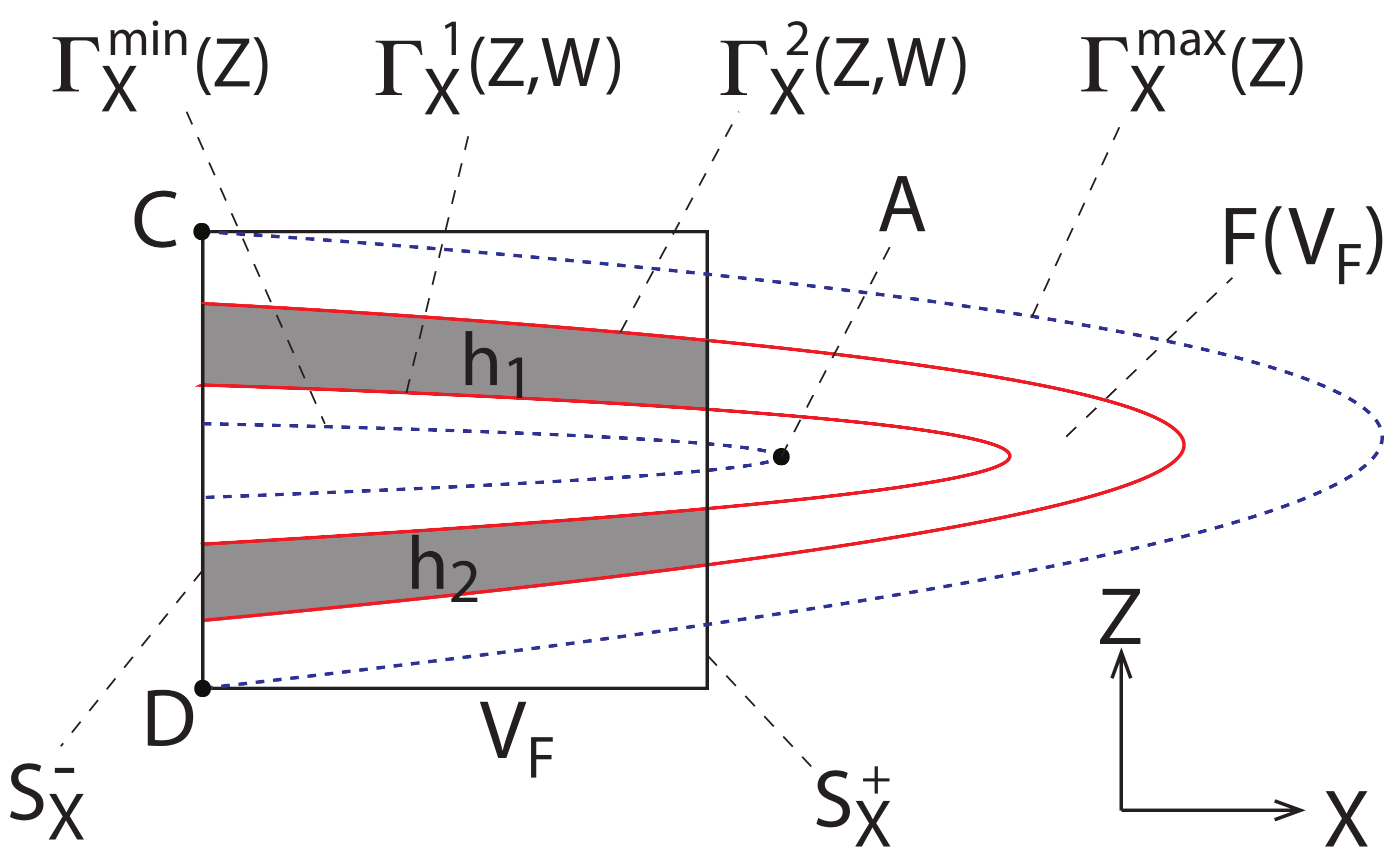} 
        \caption{(Schematic, color online) Within each $\Sigma^2(Y,W)$ plane, $V_F \cap F(V_F)|_{\Sigma^2(Y,W)}$ consists of two horizontal strips $h_1$ and $h_2$ (bounded by the two red parabolas). Therefore, $V_F \cap F(V_F)$ gives rise to a binary topological horseshoe in every $\Sigma^2(Y,W)$. } \label{fig:4D_singly_folded_horseshoe_X_Z}
\end{figure}

Let $\Gamma^1_X(Z,W)$ and $\Gamma^2_X(Z,W)$ be parabolas in $\Sigma^2(Y,W)$ obtained by setting the $s$ parameter in Eq.~(\ref{eq:X Z slice V_F intersect F(V_F)}) to $-R$ and $R$, respectively: 
\begin{eqnarray}
& \Gamma^1_X(Z,W) = \Gamma_X(Z,W,s)|_{s=-R} = -Z^2 - W^2 + A_0 -R\label{eq:Gamma X 1} \\
& \Gamma^2_X(Z,W) = \Gamma_X(Z,W,s)|_{s=R} = -Z^2 -W^2 + A_0 +R  \label{eq:Gamma X 2}\ .
\end{eqnarray}

Furthermore, let $\Gamma^{\min}_X(Z)$ and $\Gamma^{\max}_X(Z)$ be parabolas obtained by setting $(W,s)=(W^{*},-R)$ and $(W,s)=(0,R)$, respectively:
\begin{eqnarray}
& \Gamma^{\min}_X(Z) = \Gamma_X(Z,W,s)|_{(W,s)=(W^{*},-R)} = -Z^2 - (W^{*})^2 + A_0 -R  \label{eq:Gamma X min} \\
& \Gamma^{\max}_X(Z) = \Gamma_X(Z,W,s)|_{(W,s)=(0,R)} = -Z^2 + A_0 + R  \label{eq:Gamma X max}\ .
\end{eqnarray}

Notice that the $W$-dependence is removed in the expressions for $\Gamma^{\min}_X$ and $\Gamma^{\max}_X$. From Eq.~(\ref{eq:Y W uniform bounds further}) we know $|W| \leq W^{*}$, thus $\Gamma^{\min}_X(Z)$ and $\Gamma^{\max}_X(Z)$ provide uniform lower and upper bounds for $\Gamma^1_X(Z,W)$ and $\Gamma^2_X(Z,W)$:
\begin{equation}\label{eq:Uniform bound for parabolas F V_F X Z}
\Gamma^{\min}_X(Z) \leq \Gamma^1_X(Z,W)  < \Gamma^2_X(Z,W) \leq \Gamma^{\max}_X(Z)\ .
\end{equation}
Geometrically, when observed within each $\Sigma^2(Y,W)$ plane, $\Gamma^1_X(Z,W)$ and $\Gamma^2_X(Z,W)$ are the left and right boundaries of $F(V_F)$, and they must exist in the gap region bounded by $\Gamma^{\min}_X(Z)$ and $\Gamma^{\max}_X(Z)$, as shown by Fig.~\ref{fig:4D_singly_folded_horseshoe_X_Z}. 

Let $S^+_X$ and $S^-_X$ be the right and left boundaries of $V_F|_{\Sigma^2(Y,W)}$, respectively
\begin{eqnarray}
& S^+_X = \left\lbrace (X,Z) \middle| X=R, |Z| \leq R  \right\rbrace \label{eq:S X +} \\
& S^-_X = \left\lbrace (X,Z) \middle| X=-R, |Z| \leq R  \right\rbrace  \label{eq:S X -} \ .
\end{eqnarray}

To prove condition (b), it is sufficient to prove the following two conditions:
\begin{enumerate}
\item[(b.1)] The vertex of $\Gamma^{\min}_X(Z)$, labeled by $A$ in Fig.~\ref{fig:4D_singly_folded_horseshoe_X_Z}, is located on the right-hand side of $S^+_X$;
\item[(b.2)] $\Gamma^{\max}_X(Z)$ intersects $S^-_X$ at two points (points $C$ and $D$ in Fig.~\ref{fig:4D_singly_folded_horseshoe_X_Z})\ .
\end{enumerate}
We now show that conditions (b.1) and (b.2) are satisfied by the parameter bounds in Eqs.~(\ref{eq:typeB_sufficient1})-(\ref{eq:typeB_sufficient3}). 

Condition (b.1): the vertex $A$ of $\Gamma^{\min}_X(Z)$ is computed as
\begin{equation}\label{eq:Vertex Gamma X min}
(X_A,Z_A) = \left( A_0 - (W^{*})^2 -R, 0 \right)\ .
\end{equation}
From Eq.~(\ref{eq:typeB_sufficient2}) we immediately know that $X_A > R$, i.e., $A$ is located on the right-hand side of $S^+_X$. 

Condition (b.2): this is equivalent to the condition that $\Gamma^{\max}_X(Z=\pm R) \leq R$. Since
\begin{equation}\label{eq:Gamma X max S X - intersections}
\Gamma^{\max}_X(Z=\pm R) = -R^2 + A_0 +R = -R \leq -R
\end{equation}
where the second equality is due to the fact that $R=1+\sqrt{1+A_0}$, condition (b.2) follows. In fact, Eq.~(\ref{eq:Gamma X max S X - intersections}) indicates that the parabola $\Gamma^{\max}_X(Z)$ intersects $V_F$ at its two corner points, as labeled by $C$ and $D$ in Fig.~\ref{fig:4D_singly_folded_horseshoe_X_Z}.  

Therefore, Eqs.~(\ref{eq:typeB_sufficient1})-(\ref{eq:typeB_sufficient3}) are sufficient for the existence of two disjoint horizontal strips, which gives rise to topological binary horseshoes in every $\Sigma^2(Y,W)$ slice. Condition (b) is established. 
\end{proof}

\section{Topology IV: doubly-folded horseshoe in four dimensions, with a common stacking direction}\label{Type IV}
In Section.~\ref{Type II}, we have introduced Topology II, a doubly-folded horseshoe with a common stacking direction in three dimensions, with an example realization given by $f_{\rm II}$.  In this section, we introduce \textit{Topology IV}, a four-dimensional horseshoe which is topologically equivalent to a direct product between a three-dimensional Topology II horseshoe and an one-dimensional uniform contraction. 

\begin{figure}
        \centering
        \begin{subfigure}[b]{0.75\textwidth}   
            \centering 
            \includegraphics[width=\textwidth]{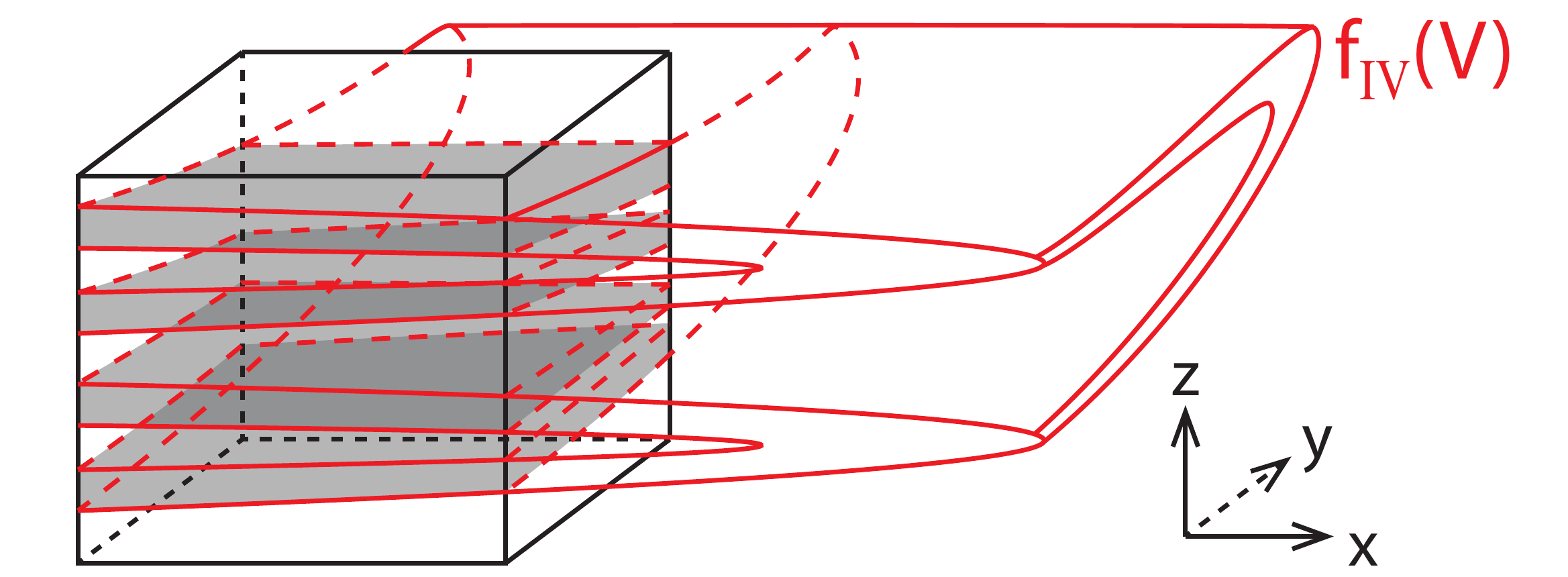}
            \caption[]%
            {$(x,y,z)$-slice}    
            \label{fig:f_IV_uncoupled_x_y_z_slice}
        \end{subfigure}
        \hfill
        \begin{subfigure}[b]{0.2\textwidth}   
            \centering 
            \includegraphics[width=\textwidth]{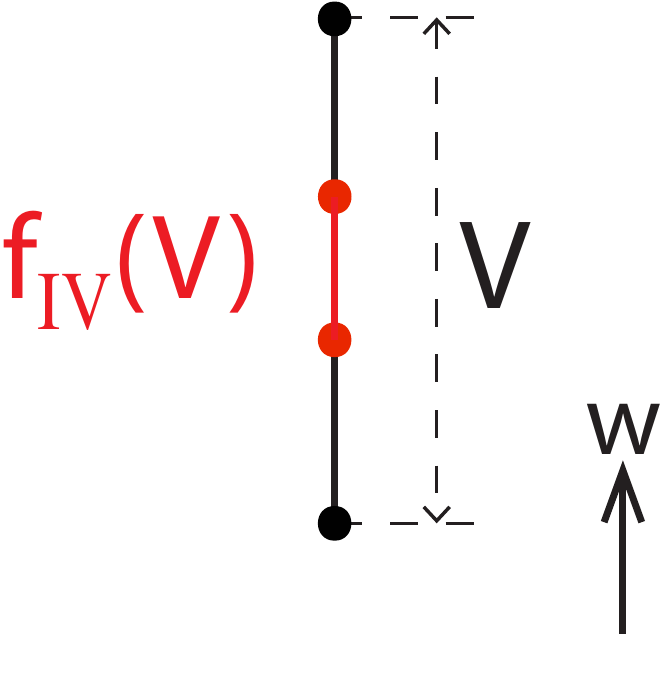}
            \caption[]%
            {$w$-slice}    
            \label{fig:f_IV_uncoupled_w_slice}
        \end{subfigure}
        \caption[]
        {(Schematic, color online) $V \cap f_{\rm IV}(V)$ when $c=0$, viewed in the $(x,y,z)$-slice (a) and the $w$-slice (b). Panel (a): $f_{\rm IV}$ is identical to $f_{\rm II}$ in every $(x,y,z)$-slice, thus giving rise to a doubly-folded horseshoe with a common stacking direction. Panel (b): in every $w$-slice, $f_{\rm IV}$ is a uniform contraction by constant factor $b$, thus $V$ is contracted to the sub-segment $f_{\rm IV}(V)$. } 
        \label{fig:f_IV_uncoupled}
    \end{figure}

The example realization that we use here is $f_{\rm IV}$, a four-dimensional generalization of $f_{\rm II}$: 
\begin{equation}\label{eq:f_IV}
\left( \begin{array}{ccc}
x^{\prime}\\
y^{\prime} \\
z^{\prime} \\
w^{\prime} \end{array} \right) = 
f_{\rm IV} \left( \begin{array}{ccc}
x\\
y \\
z \\ 
w\end{array} \right) =
\left( \begin{array}{ccc}
a_0 - x^2 - z + c(x-w)\\
a_1 - y^2 - x\\
y \\
bw + c(w-x) \end{array} \right)
\end{equation}
where $a_0$, $a_1$, $b$, and $c$ are parameters and $0\leq c<b<1$. Since we only focus on the topological structure and are not interested in the metric properties, we assume $a_0=a_1=a$ hereafter. The inverse mapping $f^{-1}_{\rm IV}$ is given by 
\begin{equation}\label{eq:f_IV inverse}
\left( \begin{array}{ccc}
x\\
y \\
z \\
w \end{array} \right) = 
f^{-1}_{\rm IV} \left( \begin{array}{ccc}
x^{\prime}\\
y^{\prime} \\
z^{\prime} \\ 
w^{\prime} \end{array} \right) =
\left( \begin{array}{ccc}
-y^{\prime}-(z^{\prime})^2+a\\
z^{\prime}\\
z(x^{\prime},y^{\prime},z^{\prime},w^{\prime}) \\
w(y^{\prime},z^{\prime},w^{\prime}) \end{array} \right)
\end{equation}
where 
\begin{eqnarray}\label{eq:f_IV inverse z component}
& z(x^{\prime},y^{\prime},z^{\prime},w^{\prime})=-(z^{\prime})^4 + \left[ 2(a-y^{\prime})-\frac{bc}{b+c} \right](z^{\prime})^2 \nonumber \\
&~~~ + \left[ -(y^{\prime})^2 + \left( 2a-\frac{bc}{b+c} \right)y^{\prime} +a -a^2 + \frac{abc}{b+c} - x^{\prime} - \frac{c}{b+c}w^{\prime} \right] 
\end{eqnarray}
and
\begin{equation}\label{eq:f_IV inverse w component}
w(y^{\prime},z^{\prime},w^{\prime})=\frac{w^{\prime}-cy^{\prime}-c(z^{\prime})^2+ca}{b+c}\ .
\end{equation}

When $c=0$, $f_{\rm IV}$ reduces to a pseudo three-dimensional map for which the dynamics in the $(x,y,z)$-hyperplane is decoupled from the dynamics along the $w$-axis. The $(x,y,z)$-dynamics is identical to $f_{\rm II}$, and $w$-dynamics is a uniform contraction by constant factor $b$. Therefore when $c=0$, under suitable choices of $a_0$, $a_1$, and $V$, we expect $V \cap f_{\rm IV}(V)$ to reproduce the topology of $V \cap f_{\rm II}(V)$ in the $(x,y,z)$-hyperplane, while in the $w$-direction, $f_{\rm IV}(V)$ is simply a sub-segment of the line segment $V$. This geometry is demonstrated by Fig.~\ref{fig:f_IV_uncoupled}. 

When $c>0$, the $c(x-w)$ and $c(w-x)$ terms in Eq.~(\ref{eq:f_IV}) introduce linear coupling between the $(x,y,z)$-subspace and the $w$-subspace dynamics, thus gives rise to dynamics more complicated than the previous $c=0$ case. However, it is reasonable to expect that when $c$ is small enough, the coupling terms should act as a perturbation to the uncoupled dynamics, thus the topological structure of $V \cap f_{\rm IV}(V)$ should be preserved. The next theorem shows that this is indeed the case for certain range of parameters. 

\begin{theorem}\label{4D Doubly folded horseshoe with cusp topology}
Let $a_0=a_1=a>32+8\sqrt{2}$, $0 \leq c < b < 1$, and $r= 2 \sqrt{2}(1+\sqrt{1+a})$. Let $V$ be a hypercube centered at the origin with side length $2r$, i.e.,
\begin{equation}\label{eq:V}
V = \left\lbrace (x,y,z,w) \Big| |x|,|y|,|z|,|w| \leq r \right\rbrace\ . 
\end{equation}  
Given the following additional bounds on parameters $b$ and $c$:
\begin{eqnarray}
& c \leq \frac{(1-b)r}{r^2+2r-a}\label{eq:4D Doubly folded horseshoe with cusp topology parameter bound 1} \\
& \frac{bc}{b+c}<2(a-r) \label{eq:4D Doubly folded horseshoe with cusp topology parameter bound 2} \\
& \frac{c}{b+c} < \frac{2r-2\sqrt{r^2-b^2(a-2r)}}{b^2}  \label{eq:4D Doubly folded horseshoe with cusp topology parameter bound 3} \\
& \frac{c}{b+c} < \frac{a(a-2r)}{ab+(1-b)r}  \label{eq:4D Doubly folded horseshoe with cusp topology parameter bound 4} \\
& \frac{c}{b+c} < \frac{r^2-2r-a}{(1+b)r}\ , \label{eq:4D Doubly folded horseshoe with cusp topology parameter bound 5}
\end{eqnarray}
the intersection $V \cap f_{\rm IV}(V)$ is homeomorphic to a direct product between four disjoint horizontal slabs in the $(x,y,z)$-subspace and an one-dimensional line segment in the $(w)$-subspace, as illustrated by Fig.~\ref{fig:f_IV_uncoupled}. 
\end{theorem}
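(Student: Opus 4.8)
The plan is to follow the inverse-image slicing strategy used in the proofs of Theorems~\ref{3D Doubly folded horseshoe topology}--\ref{4D Singly folded horseshoe topology}, while exploiting the fact that the coupling enters through only a few linear terms. First I would use the identity $f_{\rm IV}^{-1}(f_{\rm IV}(V))=V$ together with the explicit inverse in Eqs.~(\ref{eq:f_IV inverse})--(\ref{eq:f_IV inverse w component}) to write $V\cap f_{\rm IV}(V)$ as the set of $(x,y,z,w)$ with $|x|,|y|,|z|,|w|\le r$ subject to three analytic constraints: $|a-y-z^2|\le r$ (depending only on $(y,z)$), the quartic constraint $|z(x,y,z,w)|\le r$, and the contraction constraint $|w(y,z,w)|\le r$. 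The crucial structural observation is that $w$ enters the quartic constraint \emph{only} through the single combination $x+\frac{c}{b+c}w$, since the $z$-component of the inverse has the form $Q(y,z)-x-\frac{c}{b+c}w$ for a function $Q$ of $(y,z)$ alone (read off from Eq.~(\ref{eq:f_IV inverse z component})), and that the contraction constraint of Eq.~(\ref{eq:f_IV inverse w component}) is linear in $w$. This partial decoupling is what makes a direct-product structure possible.

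Second, I would isolate the $w$-direction. For fixed $(y,z)$ the constraint $|w(y,z,w)|\le r$ is linear in $w$ and hence cuts out an interval; intersecting with $|w|\le r$ yields a subinterval of $[-r,r]$. Using the bound~(\ref{eq:4D Doubly folded horseshoe with cusp topology parameter bound 1}) on $c$, I would show this interval is always nonempty and that $f_{\rm IV}(V)$ occupies a proper subsegment of $[-r,r]$ in the $w$-direction over the relevant range of $(y,z)$, reproducing the $w$-slice picture of Fig.~\ref{fig:f_IV_uncoupled}(b). This identifies the candidate one-dimensional fiber.

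Third --- and this is where the work concentrates --- I would recover the four-slab base structure in $(x,y,z)$. Here I would borrow the uniform-bounding device from the proof of Theorem~\ref{4D Doubly folded horseshoe topology}: since $|w|\le r$, the offending term $-\frac{c}{b+c}w$ in the quartic constraint is replaced by its two extreme values $\mp\frac{c}{b+c}r$, producing a $w$-independent pair of bounding quartics that bracket the true level set. I would then repeat the extremal analysis of the $f_{\rm II}$ proof: verify that, even after this worst-case shift, the local minimum of the lower bounding quartic lies outside the right face of $V$ while the upper bounding quartic meets the left face in four points, so that the constraints $|a-y-z^2|\le r$ and $|z(\cdot)|\le r$ still cut $V$ into exactly four disjoint horizontal slabs. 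The parameter bounds~(\ref{eq:4D Doubly folded horseshoe with cusp topology parameter bound 2})--(\ref{eq:4D Doubly folded horseshoe with cusp topology parameter bound 5}) are precisely the perturbed versions of conditions (a.1), (a.2), (b.1), (b.2) of the proof of Theorem~\ref{3D Doubly folded horseshoe topology}: bound~(\ref{eq:4D Doubly folded horseshoe with cusp topology parameter bound 2}) preserves the quartic's shape (two maxima, one minimum) by forcing the coefficient $2(a-y)-\frac{bc}{b+c}$ positive for all $|y|\le r$, while~(\ref{eq:4D Doubly folded horseshoe with cusp topology parameter bound 3})--(\ref{eq:4D Doubly folded horseshoe with cusp topology parameter bound 5}) control the $\frac{c}{b+c}$-shifts of the vertex and corner positions. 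The inflated threshold $a>32+8\sqrt{2}$ and the factor $2\sqrt{2}$ in $r$, inherited from the coupled setting of Theorem~\ref{4D Doubly folded horseshoe topology}, provide the slack needed to absorb these shifts.

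Finally, I would assemble the fibration into the claimed product. Having shown that the projection of $V\cap f_{\rm IV}(V)$ to the $(x,y,z)$-subspace is the four-slab set of Theorem~\ref{3D Doubly folded horseshoe topology} and that the $w$-fiber over each base point is a single nonempty interval varying continuously with $(x,y,z)$, the set is the region between two continuous graphs $w=\omega_-(x,y,z)$ and $w=\omega_+(x,y,z)$ over the base; fiberwise affine rescaling of $w$ then yields an explicit homeomorphism onto the direct product of the four slabs with a fixed segment. I expect the main obstacle to be Step three: controlling the quartic constraint uniformly in $w$ and verifying that the five parameter bounds simultaneously preserve all four slabs, since the $w$-shift perturbs both the $(x,z)$-slice and the $(y,z)$-slice analyses of the $f_{\rm II}$ proof at once. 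The observation that $w$ appears only through $x+\frac{c}{b+c}w$ is what keeps this step tractable.
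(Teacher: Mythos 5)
Your proposal follows essentially the same route as the paper's proof: the same inverse-map constraint formulation via $f_{\rm IV}^{-1}(f_{\rm IV}(V))=V$, the same use of bound~(\ref{eq:4D Doubly folded horseshoe with cusp topology parameter bound 1}) to control the linear $w$-constraint, and the same device of replacing the coupling term $-\frac{c}{b+c}w$ by its extreme values $\mp\frac{c}{b+c}r$ to obtain $w$-independent bounding quartics, followed by the $f_{\rm II}$-style extremal analysis in which bounds~(\ref{eq:4D Doubly folded horseshoe with cusp topology parameter bound 2})--(\ref{eq:4D Doubly folded horseshoe with cusp topology parameter bound 5}) play exactly the roles you assign them. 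The only cosmetic difference is bookkeeping --- you project out $w$ and fiber over the $(x,y,z)$-base, whereas the paper fixes $w$ and shows each $\Sigma^3(w)$-slice carries the Topology-II structure before treating the $w$-fibers --- and your final assembly step is, if anything, spelled out more explicitly than the paper's.
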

\begin{remark}
Eqs.~(\ref{eq:4D Doubly folded horseshoe with cusp topology parameter bound 1})-(\ref{eq:4D Doubly folded horseshoe with cusp topology parameter bound 5}) provide upper bounds on $c$ such that the topology of the $c=0$ case (Fig.~\ref{fig:f_IV_uncoupled}) is preserved.    
\end{remark}
\begin{proof}
Let $\Sigma^3(w)$ be the $(x,y,z)$-hyperplane parameterized by $w$, i.e.,
\begin{equation}\label{eq: x y z slice for f_IV}
\Sigma^3(w) = \left\lbrace (\tilde{x},\tilde{y},\tilde{z},\tilde{w}) \middle|  \tilde{x},\tilde{y},\tilde{z} \in \mathbb{R},~~\tilde{w}=w \right\rbrace
\end{equation}
and $\Sigma^1(x,y,z)$ the $w$-line parameterized by $(x,y,z)$, i.e.,
\begin{equation}\label{eq: w slice for f_IV}
\Sigma^1(x,y,z) = \left\lbrace (\tilde{x},\tilde{y},\tilde{z},\tilde{w}) \middle| \tilde{w} \in \mathbb{R},~~\tilde{x}=x,~\tilde{y}=y,~\tilde{z}=z \right\rbrace \ .
\end{equation}
From the identity relation $f^{-1}_{\rm IV}\left( f_{\rm IV}(V) \right)=V$, the analytic expression for $V \cap f_{\rm IV}(V)$ can be obtained:
\begin{equation}\label{eq:V intersect f IV V}
V \cap f_{\rm IV}(V) = 
\left\lbrace (x^{\prime},y^{\prime},z^{\prime},w^{\prime}) \middle| \begin{array}{ccc}
|x^{\prime}|,|y^{\prime}|,|z^{\prime}|,|w^{\prime}| \leq r \\
|-y^{\prime}-(z^{\prime})^2+a| \leq r \\
|z(x^{\prime},y^{\prime},z^{\prime},w^{\prime})| \leq r \\
|w(y^{\prime},z^{\prime},w^{\prime})| \leq r \end{array} \right\rbrace
\end{equation}
where the functions $z(x^{\prime},y^{\prime},z^{\prime},w^{\prime})$ and $w(y^{\prime},z^{\prime},w^{\prime})$ are defined by Eq.~(\ref{eq:f_IV inverse z component}) and Eq.~(\ref{eq:f_IV inverse w component}), respectively. To prove the theorem, it is sufficient to prove that $V \cap f_{\rm IV}(V)$ is topologically equivalent to:
\begin{itemize}
\item[(a)] Topology-II horseshoe in every $\Sigma^3(w)$ for which $|w| \leq r$, as illustrated by Fig.~\ref{fig:f_IV_uncoupled_x_y_z_slice}.
\item[(b)] Line segment which is a subset of $[-r,r]$ in every $\Sigma^1(x,y,z)$ for which $|x|,|y|,|z| \leq r$, as illustrated by Fig.~\ref{fig:f_IV_uncoupled_w_slice}\ .
\end{itemize}
\begin{figure}
        \centering
        \begin{subfigure}[b]{0.35\textwidth}
            \centering
            \includegraphics[width=\textwidth]{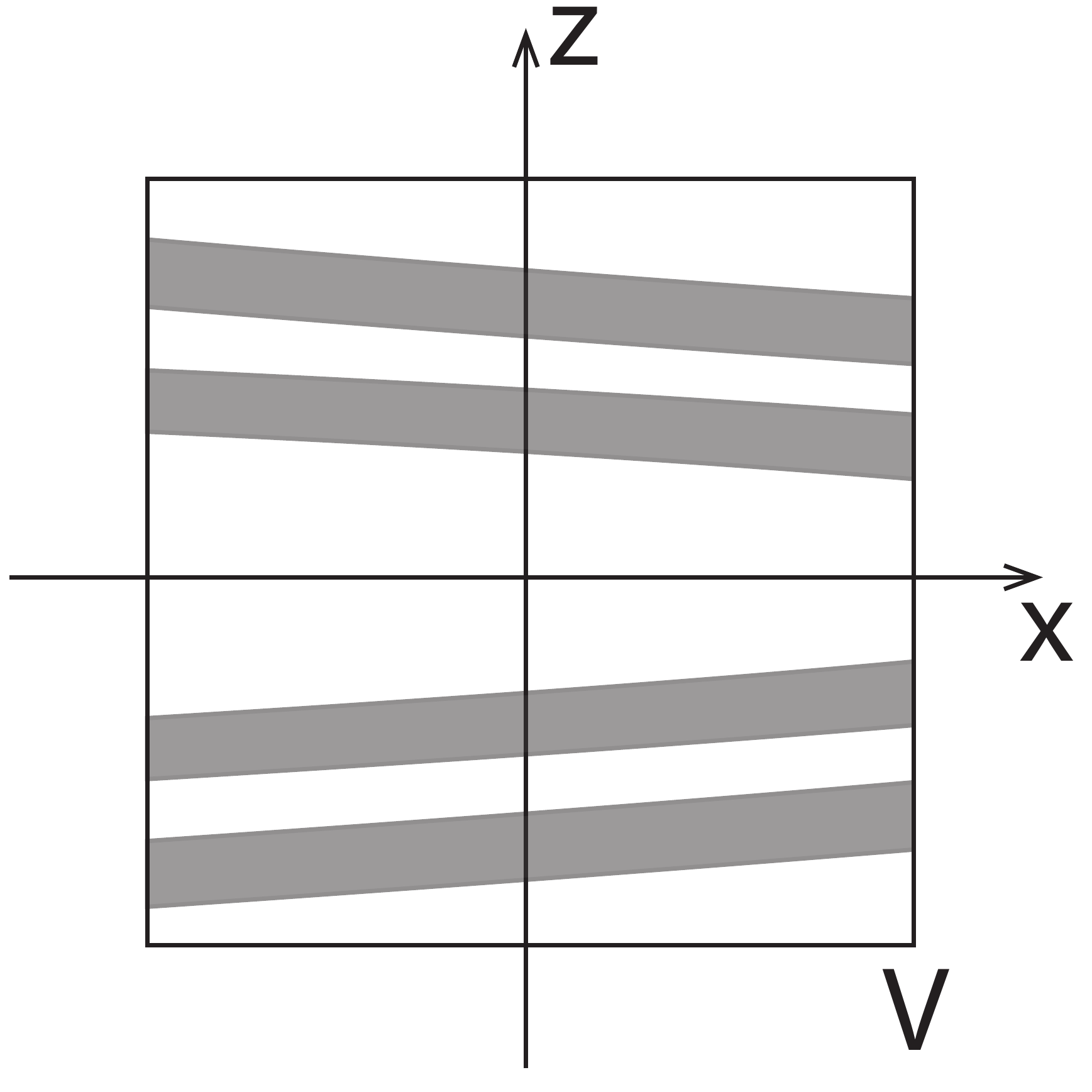}
            \caption[]%
            { $V \cap f_{\rm IV}(V)|_{\Sigma^2(y,w)}$}    
            \label{fig:V_intersect_f_IV_V_x_z_slice}
        \end{subfigure}
        \hfill
        \begin{subfigure}[b]{0.35\textwidth}  
            \centering 
            \includegraphics[width=\textwidth]{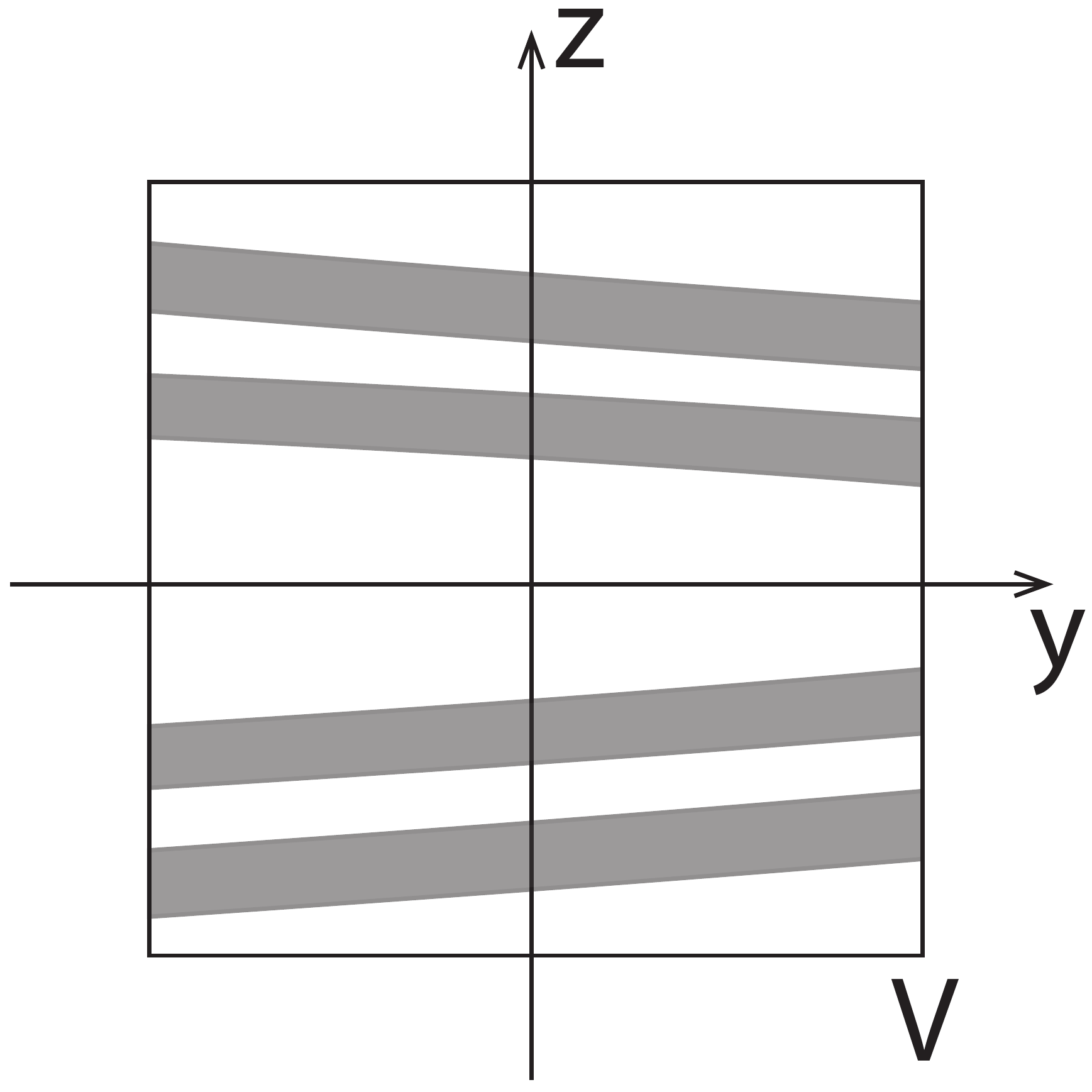}
            \caption[]%
            {$V \cap f_{\rm IV}(V)|_{\Sigma^2(x,w)}$}    
            \label{fig:V_intersect_f_IV_V_y_z_slice}
        \end{subfigure}
        \caption[]
        {(Schematic) Within every $\Sigma^3(w)$ for which $|w| \leq r$, $V \cap f_{\rm IV}(V)$ is topologically equivalent to: (a) four disjoint horizontal strips in every $\Sigma^2(y,w)$ for which $|y|,|w| \leq r$; (b): four disjoint horizontal strips in every $\Sigma^2(x,w)$ for which $|x|,|w| \leq r$. } 
        \label{fig:V_intersect_f_IV_V_xyz_slices}
    \end{figure}
We have already established in Sec.~\ref{Type II} that condition (a) is equivalent to the following two conditions:
\begin{itemize}
\item[(a.1)] $V \cap f_{\rm IV}(V)|_{\Sigma^2(y,w)}$ consists of four disjoint horizontal strips for all $|y|,|w| \leq r$, as shown by Fig.~\ref{fig:V_intersect_f_IV_V_x_z_slice}.
\item[(a.2)] $V \cap f_{\rm IV}(V)|_{\Sigma^2(x,w)}$ consists of four disjoint horizontal strips for all $|x|,|w| \leq r$, as shown by Fig.~\ref{fig:V_intersect_f_IV_V_y_z_slice}.
\end{itemize}
Next, we prove conditions (a.1), (a.2), and (b) individually.  

(a.1): Combining Eq.~(\ref{eq:f_IV inverse z component}) and the third row of Eq.~(\ref{eq:V intersect f IV V}) we obtain
\begin{eqnarray}
& -s = -(z^{\prime})^4 + \left[ 2(a-y^{\prime})-\frac{bc}{b+c} \right](z^{\prime})^2 \nonumber \\
&~~~ + \left[ -(y^{\prime})^2 + \left( 2a-\frac{bc}{b+c} \right)y^{\prime} +a -a^2 + \frac{abc}{b+c} - x^{\prime} - \frac{c}{b+c}w^{\prime} \right]  
\end{eqnarray}
where the parameter $s$ varies within range $-r \leq s \leq r$. Omitting the primes on the variables and expressing $x$ in terms of $(z,y,w,s)$ we get
\begin{eqnarray}\label{eq:Gamma_x in terms of z y w s}
& x = \Gamma_x(z,y,w,s) = -z^4 + \left[ 2(a-y)-\frac{bc}{b+c} \right]z^2 \nonumber \\
& ~~~ -\left[ (y-a)^2+\frac{bc}{b+c} y \right] + \left(1+\frac{bc}{b+c}\right)a +s - \frac{c}{b+c} w \ .
\end{eqnarray}
$\Gamma_x(z,y,w,s)$ is viewed as a quartic function of $z$, parameterized by $(y,w,s)$ with range $|y|,|w|,|s| \leq r$. 

In a particular $\Sigma^2(y,w)$ slice, the values of $(y,w)$ are fixed, the maximum and minimum of $\Gamma_x(z,y,w,s)$ are attained at
\begin{eqnarray}\label{eq:Gamma_x in terms of z y w s max}
& \Gamma^{\max}_x(z,y,w)=\Gamma_x(z,y,w,s)|_{s=r} = -z^4 + \left[ 2(a-y)-\frac{bc}{b+c} \right]z^2 \nonumber \\
& ~~~ -\left[ (y-a)^2+\frac{bc}{b+c} y \right] + \left(1+\frac{bc}{b+c}\right)a +r - \frac{c}{b+c} w \ ,
\end{eqnarray}
\begin{eqnarray}\label{eq:Gamma_x in terms of z y w s min}
& \Gamma^{\min}_x(z,y,w)=\Gamma_x(z,y,w,s)|_{s=-r} = -z^4 + \left[ 2(a-y)-\frac{bc}{b+c} \right]z^2 \nonumber \\
& ~~~ -\left[ (y-a)^2+\frac{bc}{b+c} y \right] + \left(1+\frac{bc}{b+c}\right)a -r - \frac{c}{b+c} w \ .
\end{eqnarray}
Furthermore, the extremals of $\Gamma_x(z,y,w,s)$ within a particular $\Sigma^2(y,w)$ slice can be identified by the condition
\begin{equation}\label{eq:Gamma_x extremal condition}
\frac{\mathrm{d}}{\mathrm{d}z}\Gamma_x(z,y,w,s) = 0
\end{equation}
which leads to
\begin{equation}\label{eq:Gamma_x extremal condition explicit}
-4z^3 + 2\left[ 2(a-y)-\frac{bc}{b+c} \right]z = 0  \ .
\end{equation}
Eq.~(\ref{eq:Gamma_x extremal condition explicit}) has three solutions, which lead to three extremals, namely point $A$, $B$, and $C$, where
\begin{eqnarray}\label{eq:Extremal A}
& A = (x_A,z_A) \nonumber \\
& x_A = -\left[ (y-a)^2 + \frac{bc}{b+c}y \right] + \left( 1+\frac{bc}{b+c} \right)a +s - \frac{c}{b+c} w  \\
& z_A = 0 \ , \nonumber
\end{eqnarray}
\begin{eqnarray}\label{eq:Extremal B}
& B = (x_B,z_B) \nonumber \\
&  x_B = a+s+\frac{b^2c^2-4c(b+c)w}{4(b+c)^2}   \\
& z_B = \sqrt{(a-y)-\frac{bc}{2(b+c)}} \ , \nonumber
\end{eqnarray}
and
\begin{eqnarray}\label{eq:Extremal C}
& C = (x_C,z_C) \nonumber \\
&  x_C = a+s+\frac{b^2c^2-4c(b+c)w}{4(b+c)^2}    \\
& z_C = -\sqrt{(a-y)-\frac{bc}{2(b+c)}} \ . \nonumber
\end{eqnarray}
The second derivatives of $\Gamma_x$ at the three extremals can be computed as
\begin{eqnarray}
& \frac{ \mathrm{d}^2 \Gamma_x}{\mathrm{d}z^2}\Bigg|_A = 4(a-y) -2\frac{bc}{b+c} \geq 4(a-r) -2\frac{bc}{b+c} \ , \\
& \frac{ \mathrm{d}^2 \Gamma_x}{\mathrm{d}z^2}\Bigg|_B =   \frac{ \mathrm{d}^2 \Gamma_x}{\mathrm{d}z^2}\Bigg|_C =  -8(a-y) + 4 \frac{bc}{b+c} \leq -8(a-r) + 4 \frac{bc}{b+c} \ .
\end{eqnarray}
\begin{figure}
        \centering
        \includegraphics[width=0.8\linewidth]{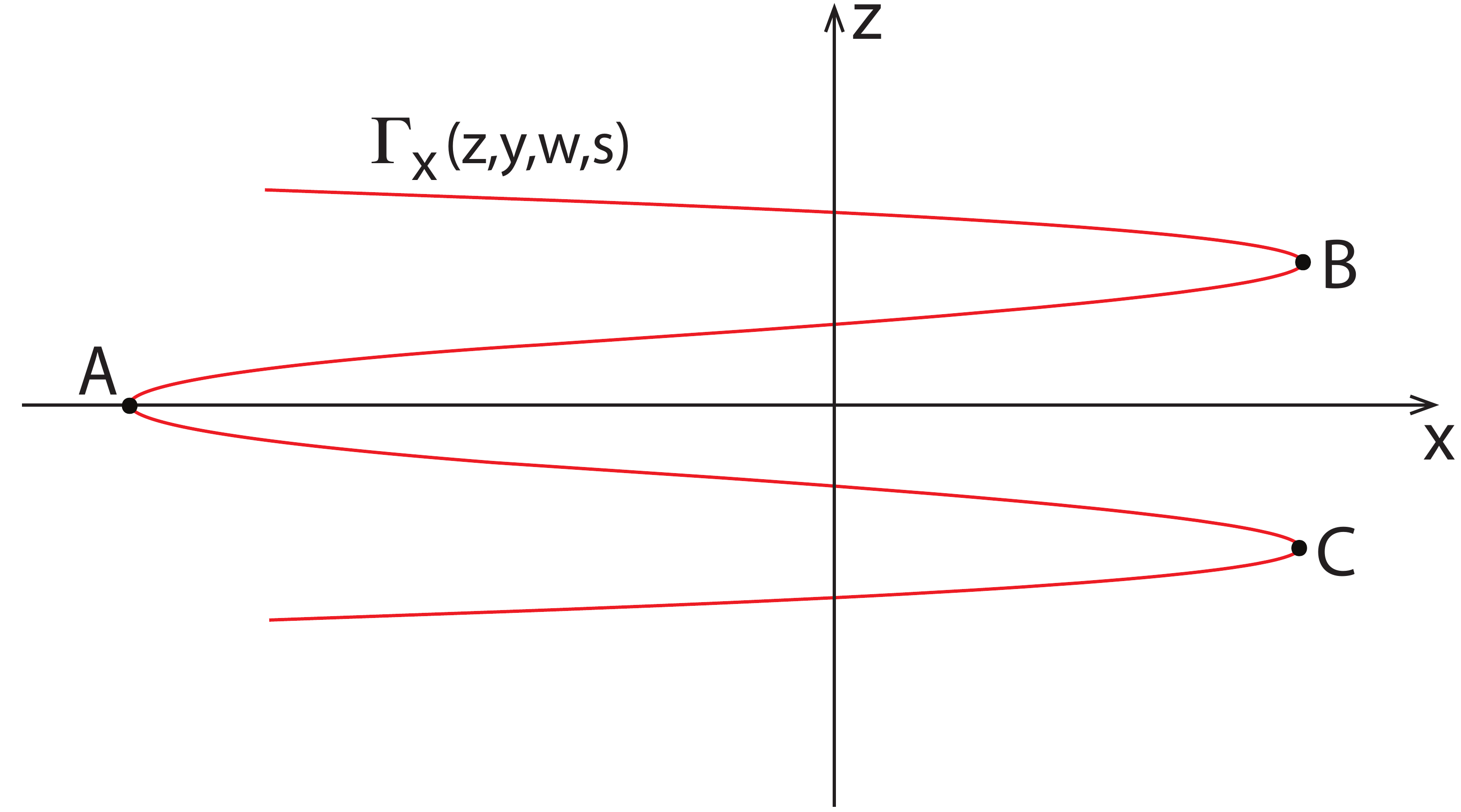} 
        \caption{(Schematic, color online) Within every $\Sigma^2(y,w)$ for which $|y|,|w| \leq r$, $\Gamma_x(z,y,w,s)$ is a quartic function of $z$. It has a local minimum ($A$), and two global maximums ($B$ and $C$). }   \label{fig:Quartic_function_schematic_4D}
\end{figure}
Using the bound imposed by Eq.~(\ref{eq:4D Doubly folded horseshoe with cusp topology parameter bound 2}) we can easily derive that 
\begin{eqnarray}
& \frac{ \mathrm{d}^2 \Gamma_x}{\mathrm{d}z^2}\Bigg|_A >0 \ , \\
& \frac{ \mathrm{d}^2 \Gamma_x}{\mathrm{d}z^2}\Bigg|_B =   \frac{ \mathrm{d}^2 \Gamma_x}{\mathrm{d}z^2}\Bigg|_C < 0 \ .
\end{eqnarray}
This indicates that $A$ is a local minimum, while $B$ and $C$ are two global maximums of the quartic function $\Gamma_x$, as illustrated by Fig.~\ref{fig:Quartic_function_schematic_4D}.

\begin{figure}
        \centering
        \includegraphics[width=0.8\linewidth]{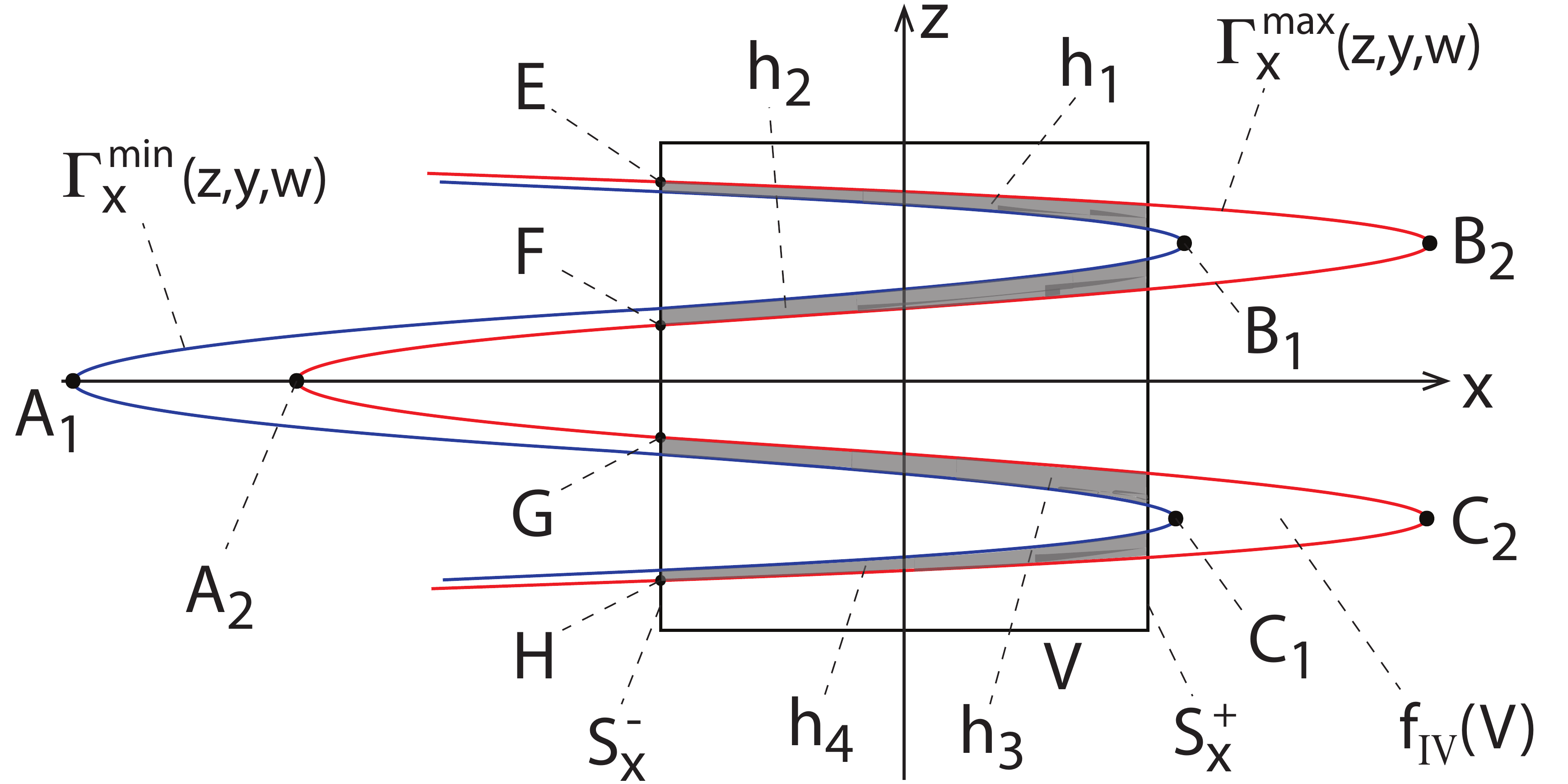} 
        \caption{(Schematic, color online) In every $\Sigma^2(y,w)$ for which $|y|,|w| \leq r$, $f_{\rm IV}(V)$ is given by the gap between $\Gamma^{\min}_x(z,y,w)$ (blue) and $\Gamma^{\max}_x(z,y,w)$ (red). $V \cap f_{\rm IV}(V)$ consists of four disjoint horizontal strips labeled by $h_1$, $h_2$, $h_3$, and $h_4$.}   \label{fig:4D_doubly_folded_horseshoe_cusp_x_z_slice}
\end{figure}

At this point, we have shown that $\Gamma_x$ has the desired shape. Consequently, the following two conditions are sufficient for (a.1):
\begin{enumerate}
\item[(a.1.1)] The global maximums of $\Gamma^{\min}_x(z,y,w)$, as labeled by $B_1$ and $C_1$ in Fig.~\ref{fig:4D_doubly_folded_horseshoe_cusp_x_z_slice}, are on the right-hand side of $S^+_x$;
\item[(a.1.2)] $\Gamma^{\max}_x(z,y,w)$ intersects $S^-_{x}$ at four points, as labeled by E, F, G, H in Fig.~\ref{fig:4D_doubly_folded_horseshoe_cusp_x_z_slice}.
\end{enumerate} 

To establish (a.1.1), note that the global maximums of $\Gamma^{\min}_x(z,y,w)$ are $B_1=(x_{B_1},z_{B_1})$ and $C_1=(x_{C_1},z_{C_1})$, where 
\begin{equation}
x_{B_1} = x_{C_1} = a - r + \frac{b^2 c^2 - 4c(b+c)w}{4(b+c)^2}\ .
\end{equation}
The bound imposed by Eq.~(\ref{eq:4D Doubly folded horseshoe with cusp topology parameter bound 3}) guarantees that
\begin{equation}
b^2 \left( \frac{c}{b+c} \right)^2 -4r \left( \frac{c}{b+c} \right) + 4(a-2r) > 0
\end{equation}
which can be simplified into
\begin{equation}
\frac{cr}{b+c} - \frac{b^2c^2}{4(b+c)^2} < a-2r\ .
\end{equation}
Since $|w|\leq r$ we have
\begin{equation}\label{eq:B1 C1 horizontal position criterion}
\frac{cw}{b+c} - \frac{b^2c^2}{4(b+c)^2} < \frac{cr}{b+c} - \frac{b^2c^2}{4(b+c)^2} <a-2r\ ,
\end{equation}
which, upon some algebraic manipulations, yields
\begin{equation}
a - r + \frac{b^2 c^2 - 4c(b+c)w}{4(b+c)^2} > r \ ,
\end{equation}
i.e., $x_{B_1}=x_{C_1} > r$. Thus (a.1.1) is established. 

To establish (a.1.2), it is sufficient to establish the following two conditions:
\begin{enumerate}
\item[(a.1.2.1)] The local minimum of $\Gamma^{\max}_x(z,y,w)$, labeled by $A_2$ in Fig.~\ref{fig:4D_doubly_folded_horseshoe_cusp_x_z_slice}, is located on the left-hand side of $S^-_x$\ .
\item[(a.1.2.2)] $\Gamma^{\max}_x(z,y,w)\Big|_{z=\pm r} \leq -r$ for all $|y|,|w| \leq r$. 
\end{enumerate}

\begin{figure}
        \centering
        \includegraphics[width=0.5\linewidth]{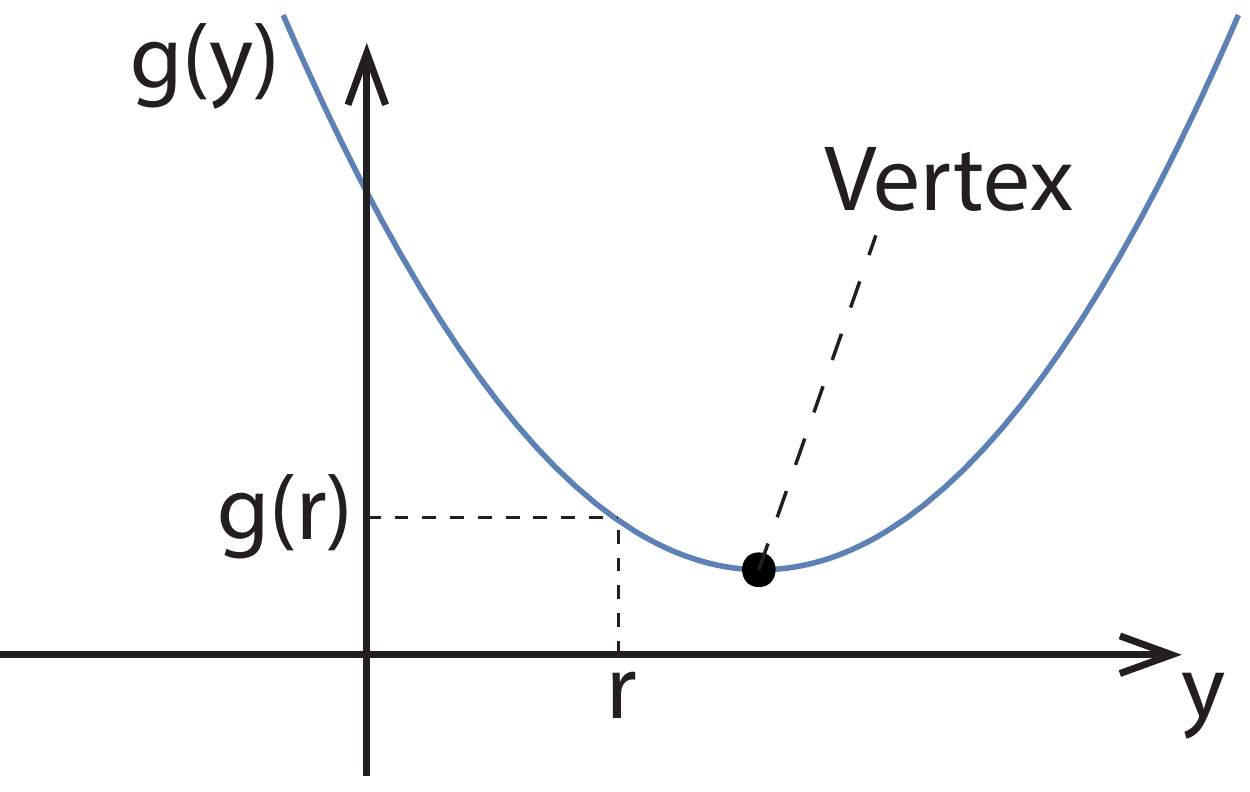} 
        \caption{(Schematic) The vertex of the parabola $g(y)$ is determined by Eq.~(\ref{eq:Vertex of g(y)}) . The minimum of $g(y)$ within range $|y| \leq r$ is attained at $y=r$. }   \label{fig:Parabola_Schematic}
\end{figure}

To prove (a.1.2.1), note that the horizontal position of $A_2$ is 
\begin{equation}x_{A_2} = -\left[ (y-a)^2 + \frac{bc}{b+c}y \right] + \left( 1+\frac{bc}{b+c} \right)a +r - \frac{c}{b+c} w \ .
\end{equation}
Let 
\begin{equation}
g(y) \equiv (y-a)^2 + \frac{bc}{b+c}y = y^2 - \left( 2a-\frac{bc}{b+c} \right)y + a^2\ , 
\end{equation}
then $x_{A_2}$ is re-written as
\begin{equation}
x_{A_2} = -g(y) + \left( 1+\frac{bc}{b+c} \right)a +r - \frac{c}{b+c} w \ .
\end{equation}
The vertex of the parabola $g(y)$ is 
\begin{equation}\label{eq:Vertex of g(y)}
\mathrm{Vertex~of~} g(y) = \left( a-\frac{bc}{2(b+c)}, \frac{4a^2 - (2a-\frac{bc}{b+c})^2}{4} \right)\ ,
\end{equation}
as demonstrated by Fig.~\ref{fig:Parabola_Schematic}. From the bound imposed by Eq.~(\ref{eq:4D Doubly folded horseshoe with cusp topology parameter bound 2}) we know
\begin{equation}
a-\frac{bc}{2(b+c)}>r\ ,
\end{equation}
thus the vertex of $g(y)$ is not attained within $|y| \leq r$. Therefore, the minimum of $g(y)$ within $|y|\leq r$ is attained at $y=r$:
\begin{equation}
g(y) \geq g(r) = (a-r)^2 + \frac{bc}{b+c}r,~~~\forall~|y|\leq r\ ,
\end{equation}
as illustrated by Fig.~\ref{fig:Parabola_Schematic}. Consequently, $x_{A_2}$ is bounded from above by 
\begin{eqnarray}\label{eq:x A2 upper bound}
& x_{A_2} \leq -g(r) + \left( 1+\frac{bc}{b+c} \right)a + r + \frac{c}{b+c}r \nonumber \\
& ~~~~~ = -(a-r)^2 - \frac{bc}{b+c}r + \left( 1+\frac{bc}{b+c} \right)a + r + \frac{c}{b+c}r \ .
\end{eqnarray}
From the bound imposed by Eq.~(\ref{eq:4D Doubly folded horseshoe with cusp topology parameter bound 4}) we obtain
\begin{equation}\label{eq:c over b+c}
\frac{c}{b+c} < \frac{a(a-2r)}{ab+(1-b)r} < \frac{(a-r)^2-a-2r}{ab+(1-b)r}
\end{equation}
where the second inequality comes from the fact that $r^2 - 2r -a >0$. Eq.~(\ref{eq:c over b+c}) is equivalent to 
\begin{equation}
-(a-r)^2 - \frac{bc}{b+c}r + \left( 1+\frac{bc}{b+c} \right)a + r + \frac{c}{b+c}r < -r \ ,
\end{equation}
which, when substituted into Eq.~(\ref{eq:x A2 upper bound}), yields 
\begin{equation}
x_{A_2} < -r \ .
\end{equation}
Condition (a.1.2.1) is thus established. 

To prove (a.1.2.2), notice that upon straightforward algebraic manipulations, $\Gamma^{\max}_x(z,y,w)\Big|_{z=\pm r} \leq -r$ can be shown to be equivalent to
\begin{eqnarray}\label{eq:long horrible shit}
& h(y) \equiv y^2 + \left( 14a + \frac{bc}{b+c} + 8\sqrt{2}r \right)y + 49a^2 + \frac{8abc}{b+c} \nonumber \\
& ~~~ - a\left( 1+\frac{bc}{b+c} \right) -2r + 56\sqrt{2}ar + \frac{4\sqrt{2}rbc}{b+c} + 32r^2 + \frac{c}{b+c}w \geq 0
\end{eqnarray}
where the facts that $r^2 = 4\sqrt{2}r + 8a$ and $r^4 = 32 r^2 + 64\sqrt{2}ar + 64 a^2$ have been used. Treating $h(y)$ as a quadratic function of $y$ whose graph gives rise to a parabola, the vertex of the parabola is attained at 
\begin{equation}
y_{\rm vertex} = -7a - 4\sqrt{2}r - \frac{bc}{2(b+c)} < -r\ .
\end{equation}
Consequently, the minimum of $h(y)$ within $y \in [-r,r]$ is attained at $y=-r$, thus
\begin{equation}\label{eq:h(y) lower bound}
h(y)\Big|_{y\in [-r,r]} \geq h(-r)\ .
\end{equation}
Furthermore, $h(-r)$ can be shown to be positive. This is due to the fact that
\begin{eqnarray}\label{eq:longer horrible shit}
& h(-r) = r^2 - \left( 14a + \frac{bc}{b+c} + 8\sqrt{2}r \right)r + 49a^2 + \frac{8abc}{b+c}\nonumber \\
& ~~~ - a\left( 1+\frac{bc}{b+c} \right) -2r + 56\sqrt{2}ar + \frac{4\sqrt{2}rbc}{b+c} + 32r^2 + \frac{c}{b+c}w \nonumber \\
& \geq r^2 - \left( 14a + \frac{bc}{b+c} + 8\sqrt{2}r \right)r + 49a^2 + \frac{8abc}{b+c}\nonumber \\
& ~~~ - a\left( 1+\frac{bc}{b+c} \right) -2r + 56\sqrt{2}ar + \frac{4\sqrt{2}rbc}{b+c} + 32r^2 - \frac{c}{b+c}r \nonumber \\
& = \left[ (4\sqrt{2}-1)br + 7ab - r \right]\frac{c}{b+c} + (132\sqrt{2} - 66)r  \nonumber \\
& ~~~+ (263-64\sqrt{2})a + (56\sqrt{2}-14)ar + 49a^2 >0
\end{eqnarray}
where the fact that $r^2 = 4\sqrt{2}r + 8a$ is again used. Combining Eqs.~(\ref{eq:h(y) lower bound}) and (\ref{eq:longer horrible shit}) we get
\begin{equation}
h(y)\Big|_{y\in[-r,r]} > 0\ ,
\end{equation}
which is equivalent to 
\begin{equation}
\Gamma^{\max}_x(z,y,w)\Big|_{z=\pm r} \leq -r\ .
\end{equation}
Thus (a.1.2.2) is established. Consequently, conditions (a.1.2) and (a.1) are established as well.

(a.2): from the second row in Eq.~(\ref{eq:V intersect f IV V}) we know
\begin{equation}
-y-z^2+a=-s
\end{equation}
where $|s| \leq r$ and the primes on the variables are omitted for simplicity. Therefore 
\begin{equation}
y = -z^2+a+s
\end{equation}
which is identical to Eq.~(\ref{eq:y z parabola 1 f_II}). Therefore, repeating the same steps in Section.~\ref{Type II}, we define the parabolas $\Gamma_y(z,s)$, $\Gamma^{\min}_y(z)$, and $\Gamma^{\max}_y(z)$ using Eq.~(\ref{eq:Gamma y in terms of z s f_II}), (\ref{eq:Gamma y z min}) and (\ref{eq:Gamma y z max}), respectively, and the relationship in Eq.~(\ref{eq:Gamma y in terms of z s bounds}) holds immediately. 

Within each $\Sigma^2(x,w)$ with $|x|,|w| \leq r$, $f_{\rm IV}(V)$ lies within the gap  between the two parabolas $\Gamma^{\min}_y(z)$ and $\Gamma^{\max}_y(z)$. The locations of the two parabolas can be further narrowed down by establishing the following facts:
\begin{itemize}
\item[(a.2.1)] The vertex of $\Gamma^{\min}_y(z)$ is located on the right-hand side of $S^+_y$, labeled by $A$ in Fig.~\ref{fig:3D_dooubly_folded_horseshoe_y_z_slice_Gammas};
\item[(a.2.2)] $\Gamma^{\max}_y(z)$ intersects $S^-_y$ at two points, labeled by $C$ and $D$ in Fig.~\ref{fig:3D_dooubly_folded_horseshoe_y_z_slice_Gammas}. 
\end{itemize} 

To establish (a.2.1), let $A=(y_A,z_A)$. It can be solved easily that
\begin{equation}
z_A=0, ~~~~ y_A = \Gamma^{\min}_y(z_A=0)=a-r. 
\end{equation}
Recall that $a-2r>0$, thus $y_A >r$, i.e., $A$ is on the right-hand side of $S^{+}_y$. 

To establish (a.2.2), notice that since $r = 2\sqrt{2}(1+1\sqrt{1+a})>1+1\sqrt{1+a}$,
\begin{equation}
\Gamma^{\max}_y(z=\pm r) = -r^2 +a +r < -r\ ,
\end{equation}
thus (a.2.2) follows. Combining conditions (a.2.1) and (a.2.2), we obtain
\begin{equation}\label{eq:V intersect f_IV V y z slice first bound}
V \cap f_{\rm IV}(V)|_{\Sigma^2(x,w)} \subset H_1 \cup H_2
\end{equation}
where $H_1$ and $H_2$ are horizontal strips shown by Fig.~\ref{fig:3D_dooubly_folded_horseshoe_y_z_slice_Gammas}. 

At this point, Eq.~(\ref{eq:V intersect f_IV V y z slice first bound}) provides a coarse bound for $V \cap f_{\rm IV}(V)|_{\Sigma^2(x,w)}$. A refined bound can be obtained in the following. Using the third row of Eq.~(\ref{eq:V intersect f IV V}) along with Eq.~(\ref{eq:f_IV inverse z component}) we get:
\begin{eqnarray}\label{eq:f_IV inverse z component parameter s}
& -s =-(z^{\prime})^4 + \left[ 2(a-y^{\prime})-\frac{bc}{b+c} \right](z^{\prime})^2 \nonumber \\
&~~~ + \left[ -(y^{\prime})^2 + \left( 2a-\frac{bc}{b+c} \right)y^{\prime} +a -a^2 + \frac{abc}{b+c} - x^{\prime} - \frac{c}{b+c}w^{\prime} \right] 
\end{eqnarray}
where $s$ is a parameter within range $-r \leq s \leq r$. Omitting the primes on the variables and expressing $y$ in terms of $(z,x,w,s)$ lead to two branches of solutions 
\begin{eqnarray}\label{eq:f IV y in terms of z x w s}
&\Lambda^{\pm}_y(z,x,w,s) = y_{\pm}(z,x,w,s) \nonumber \\
& = -z^2 + a - \frac{bc}{2(b+c)} \pm  \sqrt{(a-x+s)-\frac{cw}{b+c} + \frac{b^2 c^2}{4(b+c)^2} }\ .
\end{eqnarray}
Notice that for the $c=0$ special cases, $f_{\rm IV}$ reduces to the uncoupled direct product between $f_{\rm II}$ in $(x,y,z)$ and a uniform contraction in $(w)$, correspondingly, Eq.~(\ref{eq:f IV y in terms of z x w s}) reduces to Eq.~(\ref{eq:Lambda y in terms of z x s}). 

Upon varying the values of $(x,w,s)$ within range $|x|,|w|,|s| \leq r$ while keeping $z$ fixed, the maximum and minimum of Eq.~(\ref{eq:f IV y in terms of z x w s}) are attained at
\begin{eqnarray}
&\Lambda^{+,\max}_y(z) = \Lambda^{+}_y(z,x,w,s)\Big|_{x=-r,~w=-r,~s=r} \nonumber \\
& = -z^2 + a - \frac{bc}{2(b+c)} +  \sqrt{(a+2r)+\frac{cr}{b+c} + \frac{b^2 c^2}{4(b+c)^2} } \label{eq:Gamma y + max f IV} \\
&\Lambda^{+,\min}_y(z) = \Lambda^{+}_y(z,x,w,s)\Big|_{x=r,~w=r,~s=-r} \nonumber \\
& = -z^2 + a - \frac{bc}{2(b+c)} +  \sqrt{(a-2r)-\frac{cr}{b+c} + \frac{b^2 c^2}{4(b+c)^2} }  \label{eq:Gamma y + min f IV} \\
&\Lambda^{-,\max}_y(z) = \Lambda^{-}_y(z,x,w,s)\Big|_{x=r,~w=r,~s=-r} \nonumber \\
& = -z^2 + a - \frac{bc}{2(b+c)} -  \sqrt{(a-2r)-\frac{cr}{b+c} + \frac{b^2 c^2}{4(b+c)^2} }  \label{eq:Gamma y - max f IV} \\
&\Lambda^{-,\min}_y(z) = \Lambda^{-}_y(z,x,w,s)\Big|_{x=-r,~w=-r,~s=r} \nonumber \\
& = -z^2 + a - \frac{bc}{2(b+c)} -  \sqrt{(a+2r)+\frac{cr}{b+c} + \frac{b^2 c^2}{4(b+c)^2} }  \label{eq:Gamma y - min f IV} \ .
\end{eqnarray}
It can be verified that Eq.~(\ref{eq:4D Doubly folded horseshoe with cusp topology parameter bound 3}) guarantees
\begin{equation}
(a-2r)-\frac{cr}{b+c} + \frac{b^2 c^2}{4(b+c)^2} > 0\ ,
\end{equation}
thus the square roots in Eqs.~(\ref{eq:Gamma y + min f IV}) and (\ref{eq:Gamma y - max f IV}) are real-valued. It is quite obvious that
\begin{equation}\label{eq:f IV y z slice four parabolas ranking}
\Lambda^{-,\min}_y(z)<\Lambda^{-,\max}_y(z)<\Lambda^{+,\min}_y(z)<\Lambda^{+,\max}_y(z)\ .
\end{equation}

\begin{figure}
        \centering
        \includegraphics[width=0.9\linewidth]{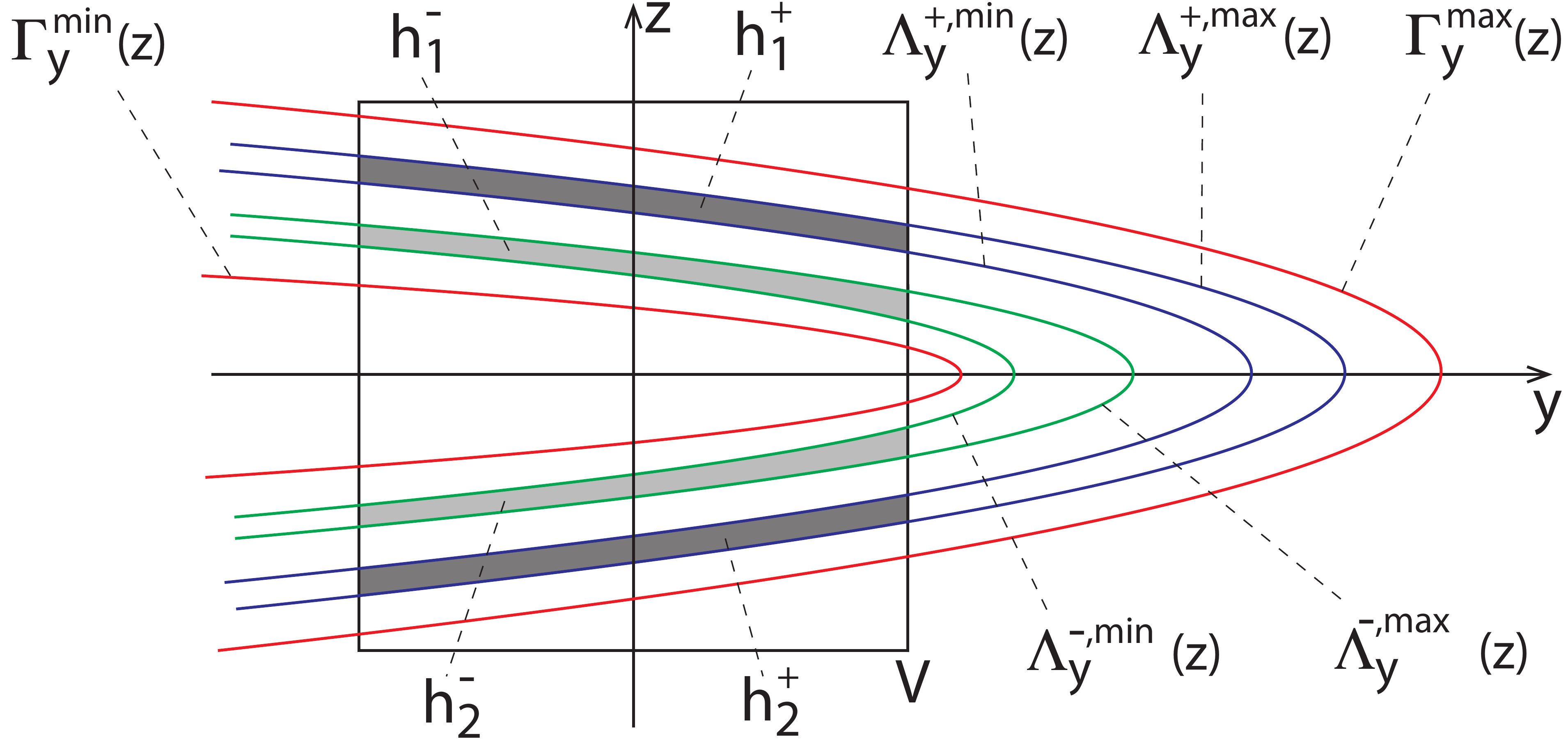} 
        \caption{(Schematic, color online) In every $\Sigma^2(x,w)$ for which $|x|,|w| \leq r$, $V \cap f_{\rm IV}(V)$ consists of four disjoint horizontal strips, $h^{\pm}_1$ and $h^{\pm}_2$, as indicated by the shaded regions. } \label{fig:4D_doubly_folded_horseshoe_cusp_y_z_slice_six_parabolas}
\end{figure}

Recall that we have previously obtained a coarse bound for $V \cap f_{\rm IV}(V)|_{\Sigma^2(x,w)}$ in Eq.~(\ref{eq:V intersect f_IV V y z slice first bound}), where the horizontal strips $H_1$ and $H_2$ are bounded by parabolas $\Gamma^{\min}_y(z)$ and $\Gamma^{\max}_y(z)$ defined by Eq.~(\ref{eq:Gamma y z min}) and (\ref{eq:Gamma y z max}), respectively. We now compare the horizontal positions of the six parabolas ($\Gamma^{\max}_y(z)$, $\Gamma^{\min}_y(z)$, $\Lambda^{\pm,\max}_y(z)$, and $\Lambda^{\pm,\min}_y(z)$), and we would like to show that under the assumptions of the theorem,
\begin{eqnarray}
& \Gamma^{\min}_y(z) < \Lambda^{-,\min}_y(z) \label{eq:f IV y z slice 6 parabolas left boundary condition} \\
& \Lambda^{+,\max}_y(z) < \Gamma^{\max}_y(z) \label{eq:f IV y z slice 6 parabolas right boundary condition} \ .
\end{eqnarray}

To prove Eq.~(\ref{eq:f IV y z slice 6 parabolas right boundary condition}), notice that it is equivalent to
\begin{equation}
- \frac{bc}{2(b+c)} +  \sqrt{(a+2r)+\frac{cr}{b+c} + \frac{b^2 c^2}{4(b+c)^2} } < r\ ,
\end{equation}
which can be simplified into
\begin{equation}
\frac{c}{b+c} < \frac{r^2-2r-a}{(1-b)r}\ . \label{eq:f IV y z slice 6 parabolas right boundary condition simplified}
\end{equation}
Since Eq.~(\ref{eq:f IV y z slice 6 parabolas right boundary condition simplified}) is guaranteed by Eq.~(\ref{eq:4D Doubly folded horseshoe with cusp topology parameter bound 5}), Eq.~(\ref{eq:f IV y z slice 6 parabolas right boundary condition}) holds. 

To prove Eq.~(\ref{eq:f IV y z slice 6 parabolas left boundary condition}), notice that it is equivalent to 
\begin{equation}
-r < - \frac{bc}{2(b+c)} -  \sqrt{(a+2r)+\frac{cr}{b+c} + \frac{b^2 c^2}{4(b+c)^2} } \ ,
\end{equation}
which can be further simplified into 
\begin{equation}
\frac{c}{b+c} < \frac{r^2-2r-a}{(1+b)r}\ , \label{eq:f IV y z slice 6 parabolas left boundary condition simplified}
\end{equation}
which is identical to Eq.~(\ref{eq:4D Doubly folded horseshoe with cusp topology parameter bound 5}). Thus Eq.~(\ref{eq:f IV y z slice 6 parabolas left boundary condition}) is established as well. Combining Eqs.~(\ref{eq:f IV y z slice four parabolas ranking}), (\ref{eq:f IV y z slice 6 parabolas left boundary condition}) and (\ref{eq:f IV y z slice 6 parabolas right boundary condition}) we obtain
\begin{equation}\label{eq:f IV y z slice six parabolas ranking}
\Gamma^{\min}_y(z) <\Lambda^{-,\min}_y(z)<\Lambda^{-,\max}_y(z)<\Lambda^{+,\min}_y(z)<\Lambda^{+,\max}_y(z) < \Gamma^{\max}_y(z)
\end{equation}
i.e., the six parabolas are positioned from left to right as illustrated by Fig.~\ref{fig:4D_doubly_folded_horseshoe_cusp_y_z_slice_six_parabolas}. Therefore we obtain the final expression 
\begin{equation}\label{eq:V intersect f_IV V y z slice final expression}
V \cap f_{\rm IV}(V)|_{\Sigma^2(x,w)} = h^+_1 \cup h^+_2 \cup h^-_1 \cup h^-_2 \subset H_1 \cup H_2
\end{equation}
where the four horizontal strips $h^{\pm}_1$ and $h^{\pm}_2$ are indicated by the shaded regions in Fig.~\ref{fig:4D_doubly_folded_horseshoe_cusp_y_z_slice_six_parabolas}. Condition (a.2) is thus established. Consequently, condition (a) is established as well. 

(b): Let $V \cap f_{\rm IV}(V)\Big|_{\Sigma^1(x,y,z)}$ be the restriction of $V \cap f_{\rm IV}(V) \cap \Sigma^1(x,y,z)$ on $\Sigma^1(x,y,z)$. We need to show that $V \cap f_{\rm IV}(V)\Big|_{\Sigma^1(x,y,z)}$ is a line segment which is a proper subset of $V|_{\Sigma^1(x,y,z)}$, i.e., the configuration illustrated by Fig.~\ref{fig:f_IV_uncoupled_w_slice}. 

From Eq.~(\ref{eq:V intersect f IV V}) we know that $|w(y^{\prime},z^{\prime},w^{\prime})|\leq r$. This is equivalent to setting $w(y^{\prime},z^{\prime},w^{\prime})=-s$ with parameter $|s| \leq r$. Substituting Eq.~(\ref{eq:f_IV inverse w component}) we get
\begin{equation}
w'(s)=cy' - ca + c(z')^2 - s(b+c)
\end{equation}
or simply 
\begin{equation}\label{eq:w in terms of y z s}
w(s)=cy - ca + cz^2 - s(b+c)
\end{equation}
in which the primes on the variables are omitted for simplicity. In every $\Sigma^1(x,y,z)$, $w(s)$ is interpreted as a one-dimensional family of points parameterized by $s$ with fixed $(x,y,z)$ values. Let $l$ be the line segment such that 
\begin{equation}\label{eq:l_1}
l = \left\lbrace w(s) \middle| -r \leq s \leq r \right\rbrace\ .
\end{equation}
Then combining Eqs.~(\ref{eq:V intersect f IV V}) and (\ref{eq:w in terms of y z s}) we obtain 
\begin{equation}\label{eq:V intersect f_IV V w slice expression}
V \cap f_{\rm IV}(V) \Big|_{\Sigma^1(x,y,z)} = l \ .
\end{equation}
The two endpoints of $l$ are attained at
\begin{eqnarray}
& w^{\max}=w(s=-r)=cy - ca + cz^2 + r(b+c) \nonumber  \\
& w^{\min}=w(s=r)=cy - ca + cz^2 - r(b+c)\ . \nonumber 
\end{eqnarray}
Notice that $w^{\max}$ is bounded from above by 
\begin{equation}\label{eq:f IV w max upper bound}
w^{\max} \leq c(r^2+2r-a)+br \leq r\ ,
\end{equation}
where the second inequality comes from Eq.~(\ref{eq:4D Doubly folded horseshoe with cusp topology parameter bound 1}). Meanwhile, $w^{\min}$ is bounded from below by
\begin{equation}\label{eq:f IV w min lower bound}
w^{\min} \geq -c(a+2r)-br > -r
\end{equation}
where the second equality comes from Eq.~(\ref{eq:4D Doubly folded horseshoe with cusp topology parameter bound 1}) and the fact that $r^2+2r-a>a+2r$. Therefore in every $\Sigma^1(x,y,z)$ we have
\begin{equation}\label{eq:V intersect f_IV V w slice final range}
V \cap f_{\rm IV}(V) \Big|_{\Sigma^1(x,y,z)} \subset (-r,r]
\end{equation}
which proves (b). Having established both (a) and (b), Theorem~\ref{4D Doubly folded horseshoe with cusp topology} is proved. 

\end{proof}

To demonstrate that the combination of parameters $(a,b,c)$ that satisfies the bounds imposed by Theorem~\ref{4D Doubly folded horseshoe with cusp topology} indeed exists, Fig.~\ref{fig:Parameter_Space_Plot} illustrates the domain in the three-dimensional parameter space $(a,b,c)$ that satisfies Eqs.~(\ref{eq:4D Doubly folded horseshoe with cusp topology parameter bound 1})-(\ref{eq:4D Doubly folded horseshoe with cusp topology parameter bound 5}) and $b>c$, within range $(a,b,c) \in [32+8\sqrt{2},42+8\sqrt{2}] \times [0,1] \times [0,0.04]$. When $(a,b,c)$ lies within the colored domain, $f_{\rm IV}(V)$ exhibits the topological structure shown by Fig.~\ref{fig:f_IV_uncoupled}. 

\begin{figure}
        \centering
        \includegraphics[width=0.6\linewidth]{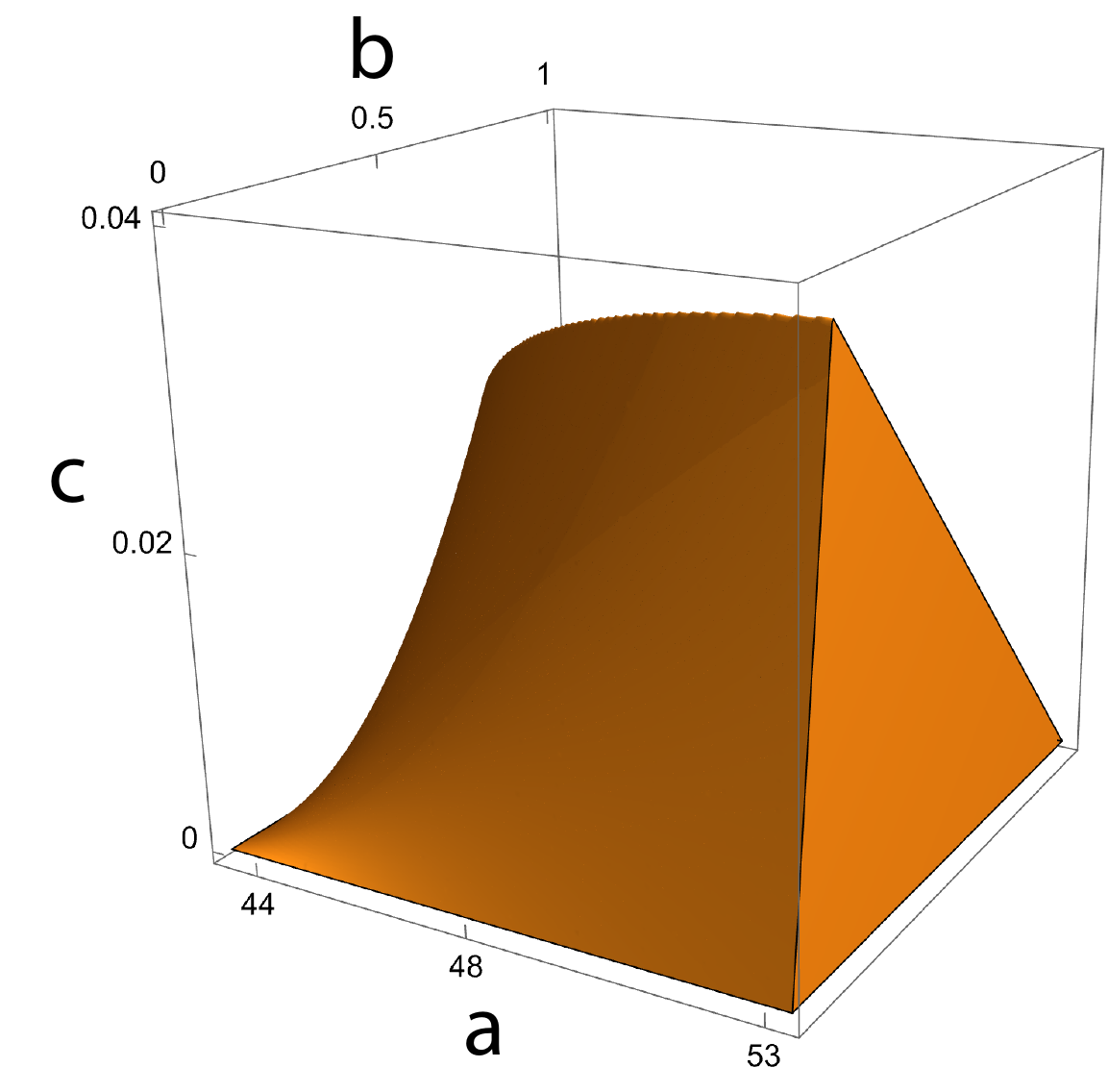} 
        \caption{(Color online) Three-dimensional domain (orange) in the parameter space $(a,b,c)$ that satisfies Eqs.~(\ref{eq:4D Doubly folded horseshoe with cusp topology parameter bound 1})-(\ref{eq:4D Doubly folded horseshoe with cusp topology parameter bound 5}) and $b>c$, within cube $[32+8\sqrt{2},42+8\sqrt{2}] \times [0,1] \times [0,0.04]$. } \label{fig:Parameter_Space_Plot}
\end{figure}

\section{Conclusion}\label{Conclusion}

We have introduced several examples using H\'{e}non-type mappings in three and four dimensions and demonstrated analytically that they possess horseshoe structures with nontrivial folding topologies that are otherwise impossible in two dimensions. In essence, we have designed these maps so that they give rise to different combinations of folding and stacking directions. More specifically, Topology II and IV are twice-folded horseshoes with independent creases but a common stacking direction, Topology III-A is a twice-folded horseshoe with independent creases and independent stacking directions, and Topology III-B is just a once-folded horseshoe in four dimensions. Interestingly, when the parameters of $f_{\rm III}$ are changed from the neighborhood of the Type-A AI limit to the neighborhood of the Type-B AI limit, the twice-folded horseshoe (Topology III-A) undergoes an unfolding process to transition into the once-foled horseshoe (Topology III-B). 

It is obvious that the topological structures introduced here only represent a small fraction of all possible horseshoe topologies. In higher dimensions, more combinations of folding and stacking directions can be selected to create more complicated horseshoe structures, the exploration of which may be potentially fruitful. Furthermore, it is also meaningful to ask the question of 
which type of horseshoe topology is generic for certain types of multidimensional maps, 
and what are the consequent implications on the symbolic dynamics of the systems. 
In particular, within the symplectic setting, according to the paper of Moser \cite{moser1994quadratic}, 
there is a normal form for quadratic symplectic maps with parameters on ${\Bbb R}^2$. 
It is therefore interesting to explore how the horseshoe with different topologies 
coexist in the parameter space of the normal form, and it then automatically raises 
the question what types of bifurcations occur among
different types of horseshoes found here. 
Notice that all these problems do not exist in the horseshoe formed in the 2-dimensional plane. 

Regarding uniform hyperbolicity, we have shown in Part I \cite{Fujioka23} that 
the map forming Topology III-A and III-B horseshoes has parameter regions in which 
not only topological horseshoe but also uniform hyperbolicity holds, 
thus the conjugacy to the symbolic dynamics follows. 
It is then natural to confirm that the same is true as well for the rest of cases. 
The computer-assisted proof \cite{arai2007hyperbolic} 
will be helpful to more precisely specify the region where 
 uniform hyperbolicity holds, and hence to clarify how different types of 
 topological horseshoe coexist in their parameter space. 
Those are topics for our futures studies.

\section*{Acknowledgement}
J.L. gratefully acknowledges many inspiring discussions with Steven Tomsovic. J.L. and A.S. acknowledge financial support from Japan Society for the Promotion of Science (JSPS) through JSPS Postdoctoral Fellowships for Research in Japan (Standard). This work has been supported by JSPS KAKENHI Grant No. 17K05583, and also by JST, the establishment of university fellowships towards the creation of science technology innovation, Grant Number JPMJFS2139.

\section*{References}
\bibliographystyle{iopart-num}
\bibliography{classicalchaos}
\end{document}